\definecolor{darkgreen}{RGB}{0,100,0}
\definecolor{firebrick}{RGB}{178,34,34}
\theoremstyle{plain}
\DeclareMathOperator*{\supp}{supp}
\newcommand{\yy}{y_1}
\newcommand{\yx}{y_2}
\newcommand{\ud}{\mathrm{d}}
\newcommand\R{\mathbb{R}}
\newcommand{\M}{\mathcal{M}}
\newcommand{\Su}{{\mathcal{S}}}
\newcommand{\Tan}{\operatorname{Tan}}
\newcommand{\Nor}{\operatorname{Nor}}
\newcommand{\rch}{\mathrm{rch}}
\newcommand\abs[1]{\left\lvert#1\right\rvert}
\renewcommand{\P}{\mathsf{P}}
\newcommand{\p}{\mathsf{p}}
\newcommand{\q}{\mathsf{q}}
\newcommand{\defunder}[1]{\underset{\text{def.}}{#1} \:}
\title{A free lunch: manifolds of positive reach can be smoothed without decreasing the reach 
}
\titlerunning{Smoothing manifolds of positive reach
}
\author{Hana Dal Poz Kou\v{r}imsk\'a}{University of Potsdam \\{[Potsdam, Germany]}}{hana.dal.poz.kourimska@uni-potsdam.de}{https://orcid.org/0000-0001-7841-0091}{Supported by the DFG project No. 524578210.} 
\author{Andr{\'e} Lieutier}{No affiliation\\{[Aix-en-Provence, France]}}{andre.lieutier@gmail.com }{}{}
\author{
	Mathijs Wintraecken}{Inria Sophia Antipolis, Universit{\'e} C{\^o}te d'Azur\\{[Sophia Antipolis, France]}  }{mathijs.wintraecken@inria.fr}{https://orcid.org/0000-0002-7472-2220}{Supported by the European Union's Horizon 2020 research and innovation programme under the Marie Sk{\l}odowska-Curie grant agreement No. 754411, the Austrian science fund (FWF) grant No. M-3073, the welcome package from IDEX of the Universit{\'e} C{\^o}te d'Azur, and the French National Science Agency (ANR) under the StratMesh grant. }
\authorrunning{Hana Dal Poz Kou\v{r}imsk\'a, Andr{\'e} Lieutier 
	and Mathijs Wintraecken} 
\keywords{Reach, Manifolds, Smoothing, Differentiability, Differential topology}
\begin{document}
	
	\maketitle

	\begin{abstract}
		
		
		Assumptions on the reach are crucial for ensuring the correctness of many geometric and topological algorithms, including triangulation, manifold reconstruction and learning, homotopy reconstruction, and methods for estimating curvature or reach. 
		However, these assumptions are often coupled with the requirement that the manifold be smooth, typically at least $C^2$.
		
		In this paper, we prove that any manifold with positive reach can be approximated arbitrarily well by a $C^\infty$ manifold without significantly reducing the reach, 
by employing techniques from differential topology --- partitions of unity and smoothing using convolution kernels. 
		
		{This result implies that nearly all theorems established for $C^2$ manifolds with a certain reach naturally extend to manifolds with the same reach, even if they are not $C^2$, for free!}
	\end{abstract}

	\section{Introduction}
	
	\subparagraph{What is the reach?}
	The \textit{reach} of a set is a number that captures the geometric properties of its \textit{shape}. Roughly speaking, it provides a bound on the set’s curvature and quantifies how far apart different parts of the set are from each other. As a key descriptor of a shape's complexity, the reach plays a crucial role as an assumption in many geometric and topological algorithms. 
	
	Formally, the reach of a (closed) set $\Su \subset \mathbb{R}^d$ is the minimum of the distance between $\Su$ and its \textit{medial axis}, that is, the set of points in $\mathbb{R}^d$ for which the closest point in $\Su$ is not unique. We illustrate these notions in Figure \ref{fig:reach_med_axis}. 
	
	\begin{figure}[h!]
		\centering
		\includegraphics[width=0.6\textwidth]{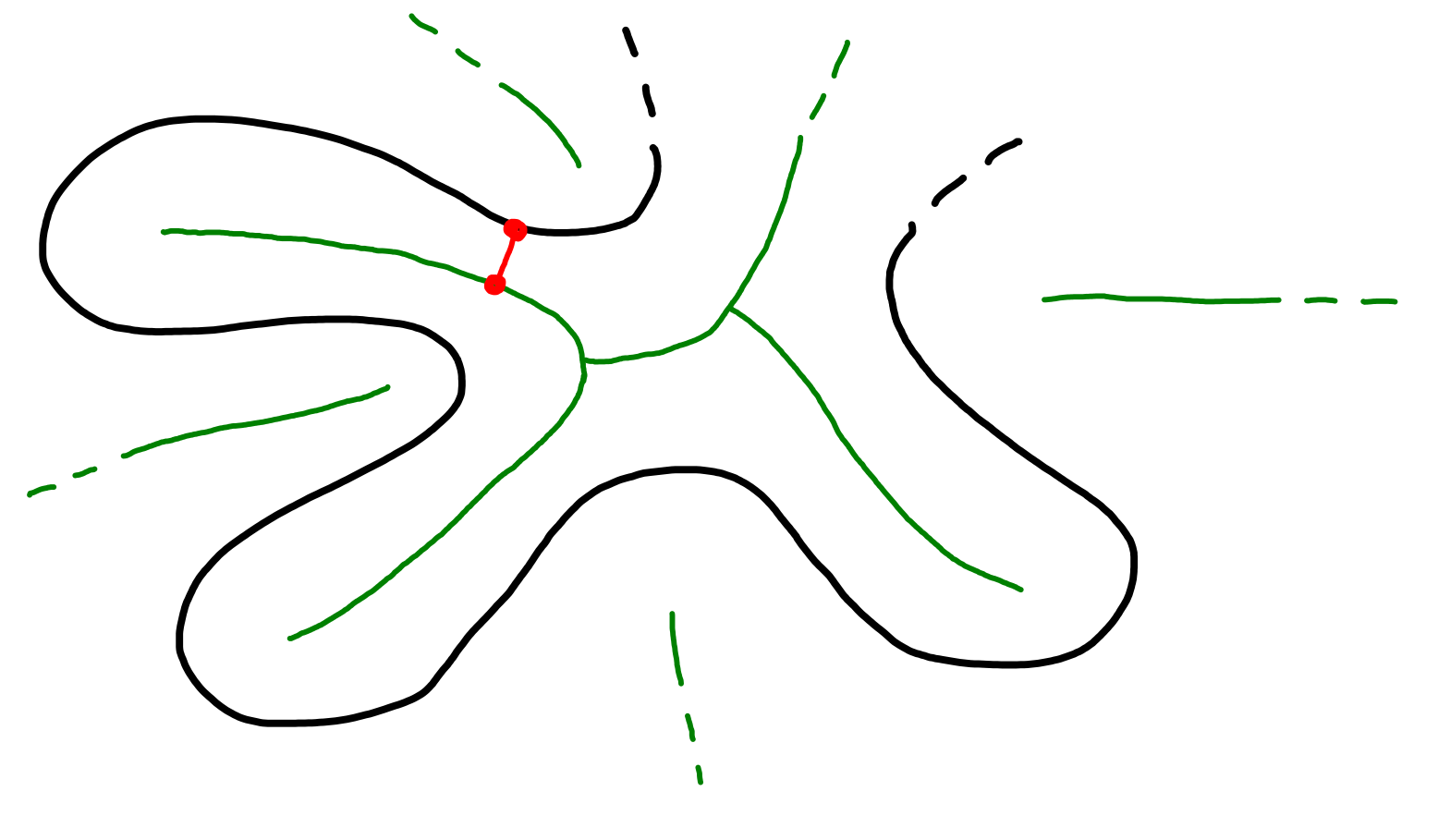}
		\caption{The medial axis (green) of a curve (black) in the plane. The reach is indicated in red.}
		\label{fig:reach_med_axis}
	\end{figure}
	
	\subparagraph{The {early} history of the reach} 
	The reach was first introduced by Federer in \cite{Federer}. Notably, earlier work by Erd\H{o}s explored what we now refer to as the \textit{medial axis}, although it did not address the reach itself \cite{erdos1945some, erdos1946}. While Erd\H{o}s studied the medial axis and Federer considered its complement, the term `medial axis' itself was coined only later, by Blum \cite{blum1967transformation}. A related notion, the \textit{cut locus} in Riemannian geometry, has a significantly longer history, with its origins traced to the work of Poincaré \cite{poincare1905lignes}, Whitehead \cite{whitehead1935covering}, and Myers \cite{myers1935connections, myers1936connections}\footnote{See \cite{HistoryCut} for a nice overview of the early history of the cut locus.}.
	
	Due to their wide applicability, these concepts have been reintroduced multiple times. For instance, the medial axis was reintroduced as the \textit{central set} by Milman and Waksman \cite{milman1980topological}, its complement as the \textit{unique footprint set} by Kleinjohann \cite{kleinjohann1981nachste}, and the reach was referred to as the \textit{condition number} by Niyogi, Smale, and Weinberger~\cite{niyogi2008}.
	
	\subparagraph{The reach and differentiability} 
	In \cite{Federer}, Federer established that the reach is stable under $C^{1,1}$-diffeomorphisms of the ambient space. Here, $C^{1,1}$ denotes a $C^1$ map whose derivative is Lipschitz, and by a $C^{1,1}$-diffeomorphism, we mean that both the diffeomorphism and its inverse are $C^{1,1}$. Federer also mentioned, without extensive detail \cite[Remark 4.20]{Federer}, that the graph of a function has positive reach if and only if the function itself is $C^{1,1}$. Lytchak \cite{lytchak2004geometry, lytchak2005almost} later proved that a topological submanifold of the Euclidean space without boundary has positive reach if and only if it is a $C^{1,1}$-submanifold. A quantified version of this statement can be found in \cite{CompanionPaperC11}.
	
	%
	
	\subparagraph{The reach in geometric and topological algorithms}
	As mentioned, the reach encapsulates the geometric complexity of a shape in a single (non-negative) value, making it a crucial assumption for ensuring the correctness of many geometric and topological algorithms. Several key classes of algorithms that depend on reach assumptions include:
	\begin{itemize} 
		\item triangulation algorithms for surfaces and manifolds, see for example \cite{Amenta1999, amenta1998surface, amenta1998new, JDbook, ChengDeyShewchuck, Dey}, 
		\item manifold learning or reconstruction and manipulation, see for example \cite{Eddie2018stability, belkin2008discrete, fefferman2019GeometricWhitneyProblem, FeffermanFitting, sober2019manifold},
		\item homotopy inference, see for example \cite{niyogi2008, WangWang2020}.
	\end{itemize} 
	In addition, reach estimation in and by itself is an important topic in inference \cite{aamari2023optimal,aamari2019estimating,  berenfeld2022estimating}.  
	
	Most of the aforementioned papers assume that \textbf{the manifold in question is at least $C^2$}, in addition to having positive reach. This assumption is often made because it allows the use of the full machinery of differential and Riemannian geometry. For instance, the second fundamental form is always well-defined in the $C^2$ setting \cite{DoCarmo,docarmo1992}. Yet, this condition is not entirely natural. Manifolds with positive reach are indeed at least $C^{1,1}$, and by Rademacher's theorem \cite{FedererBook}, they are $C^2$ almost everywhere, but not necessarily everywhere.
	
	On the other hand, the $C^{1,1}$ setting is quite natural, as it encompasses many configurations commonly found in modern modeling software, such as computer-aided design (CAD). 
	This is because the majority of manufactured objects can be modeled as $C^{1,1}$ surfaces.
	
For instance, consider a line segment and a circular arc intersecting at a point where their tangents coincide, as illustrated in Figure \ref{fig:CAD}. This configuration has positive reach but is only $C^{1,1}$.   
	\begin{figure}[h!]
		\centering
		\includegraphics[width=0.35\textwidth]{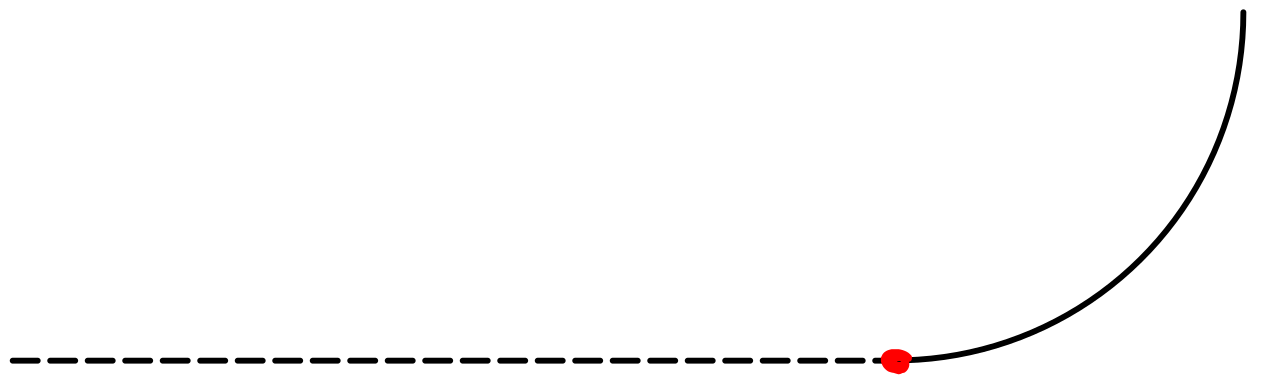}
		\caption{ A $C^{1,1}$ transition (in red) between a circular arc and a straight line segment.}
		\label{fig:CAD}
	\end{figure}

{	\bf The motivation of this paper is to extend ``for free'' all aforementioned  results from $C^2$ manifolds with positive reach to  arbitrary manifolds with positive reach; hence the paper title. 
}	This goal is achieved in Theorem \ref{theorem:MainTheorem}.

The main results from this paper might appear trivial to experts in differential topology at first glance: It is well known that Lipschitz functions can be smoothed without reducing the Lipschitz constant {nor  Lipschitz constants for derivatives} —a fact that is straightforward and proved succinctly in Lemmas~\ref{lem:SmoothingConservesLipschitzConstants} and \ref{lem:smoothingPreservesLipschitzConstantDerivative}. However, this observation alone is far from sufficient to achieve the main result of the paper.
	
	\subparagraph{Our settings} 
	In this work, we identify an embedded manifold locally (in a neighborhood of a point $p$) as a graph of a map from its tangent space to its normal space at $p$. This approach, which we illustrate in Figure \ref{fig:setting_general}, enables us to compare homeomorphic manifolds ($\M$ and $\M'$ in the figure) embedded in Euclidean space of the same dimension by comparing their corresponding maps and derivatives ($f$ and $F$ in the figure). We say that two such manifolds are close \textit{in the $C^1$ sense}, if, roughly speaking, both the manifolds and their tangent spaces are close. Formally, this is a condition on the corresponding maps, and we explain it in Definition~\ref{def:Ck_norm}.
	
	\begin{figure}[h!]
		\centering
		\includegraphics[width=\textwidth]{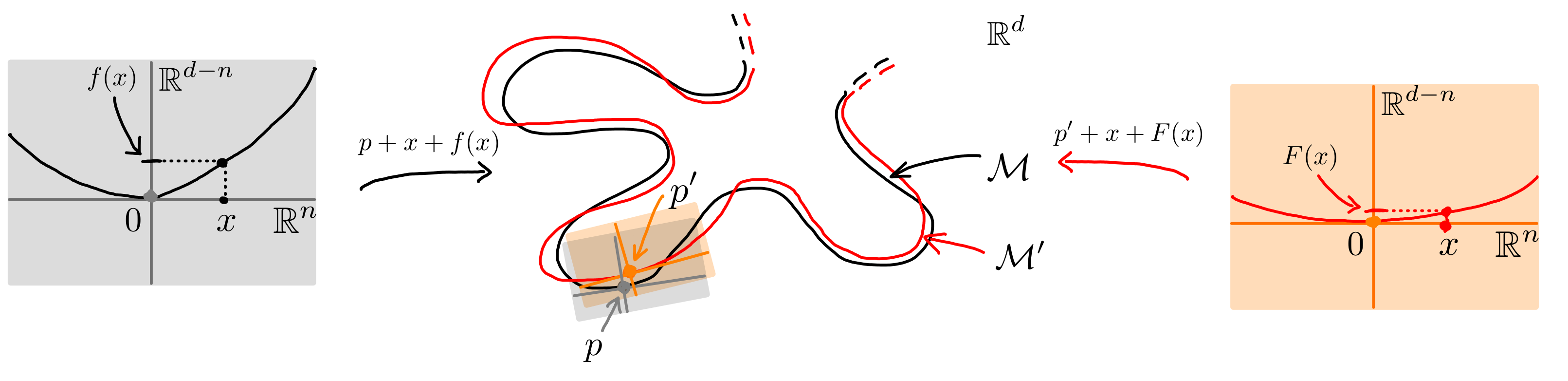}
		\caption{In our setting, we view manifolds locally as graphs of functions. }
		\label{fig:setting_general}
	\end{figure}
	
	\subparagraph{Our contribution} 
	Our main contribution is the following statement:
	\begin{restatable}{theorem}{MainTheorem}
		\label{theorem:MainTheorem}
		Let $\M \subset \mathbb{R}^d$ be a compact manifold of (positive) reach $R$, and $\varepsilon>0$. Then there exists a $C^\infty$ manifold $\M'$ such that:
		\begin{itemize} 
			\item $\M$ and $\M'$ are $\varepsilon$-close as embedded manifolds in the $C^1$ sense.
			\item The reach $R'$ of $\M'$ satisfies $R'\geq R-\varepsilon$. 
		\end{itemize} 
	\end{restatable}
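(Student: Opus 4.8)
The plan is to work locally, smooth by convolution, and glue with a partition of unity, tracking throughout the Lipschitz constant of the derivatives of the local graph maps, since that is what controls the reach. By the quantified version of Lytchak's theorem \cite{CompanionPaperC11}, there is a radius $r=r(R)>0$ such that around every $p\in\M$ the manifold coincides, inside $B(p,2r)$, with the graph of a $C^{1,1}$ map $f_p\colon V\subseteq T_p\M\to T_p\M^\perp$ with $f_p(0)=0$, $Df_p(0)=0$, $\norm{Df_p}$ small and, after shrinking $r$ if necessary, $\Lip(Df_p)\le 1/R$. Covering the compact $\M$ by finitely many such charts $B(p_i,r)$, I would fix a $C^\infty$ partition of unity $\{\psi_i\}$ subordinate to it and set $\tilde f_i:=f_i*\rho_\delta$ for a standard mollifier $\rho_\delta$ of scale $\delta\ll r$. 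By Lemmas~\ref{lem:SmoothingConservesLipschitzConstants} and~\ref{lem:smoothingPreservesLipschitzConstantDerivative}, each $\tilde f_i$ is $C^\infty$ with $\Lip(D\tilde f_i)\le\Lip(Df_i)\le 1/R$; and since $f_i$ is genuinely $C^1$ (its derivative being Lipschitz, hence uniformly continuous on compact sets), $\tilde f_i\to f_i$ in $C^1$ as $\delta\to 0$.

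\textbf{Gluing --- the main obstacle.} The manifold $\M'$ would be obtained by blending the local smoothed graphs $\mathrm{graph}(\tilde f_i)$ through the partition of unity. The difficulty, and the technical heart of the argument, is twofold: (i) the $\mathrm{graph}(\tilde f_i)$ live over different affine subspaces and agree only approximately on overlaps, so the blended object must be set up --- transporting each $\tilde f_i$ into a common tangent space via the linear identifications of the $T_{p_i}\M$, which are $O(r/R)$-close to the identity because nearby tangent spaces of a reach-$R$ manifold subtend a small angle --- so as to be a genuine $C^\infty$ embedded submanifold; and (ii) in each chart $\M'$ is a graph of some $F=\sum_i(\text{weight}_i)\,\hat f_i$, and one must bound $\Lip(DF)$. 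Differentiating produces the convex combination $\sum_i(\text{weight}_i)\,D\hat f_i$, of Lipschitz constant $\le 1/R$ up to the transport error, plus cross terms in the first and second derivatives of the weights contracted with the pairwise discrepancies $\hat f_i-\hat f_j$ on overlaps. The crucial observation is that \emph{before} smoothing these discrepancies vanish identically --- all the $f_i$ describe the same set $\M$ --- so after smoothing they are governed entirely by the combined mollification-and-transport error; fixing first $r$ and then $\delta$ small enough drives the cross terms below any prescribed $\varepsilon'>0$, giving $\Lip(DF)\le 1/R+\varepsilon'$. The same choices make $\M'$ an embedded $C^\infty$ submanifold that is $\varepsilon'$-close to $\M$ in the $C^1$ sense of Definition~\ref{def:Ck_norm}, carrying a near-identity correspondence $\Phi\colon\M\to\M'$ with $\norm{\Phi(p)-p}\le\varepsilon'$ and $\angle(T_p\M,T_{\Phi(p)}\M')\le\varepsilon'$ for all $p$.

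\textbf{The reach estimate.} To conclude $\rch(\M')\ge R-\varepsilon$ I would use the standard characterization of reach for submanifolds with continuous tangent bundle: $\rch(\M')\ge\rho$ iff $\mathrm{dist}(q'-p',T_{p'}\M')\le\norm{q'-p'}^2/(2\rho)$ for all $p',q'\in\M'$. Fix a separation threshold $\eta=\eta(R)$ of the order of $r$ and split into two cases. If $p'=F(a)$ and $q'=F(b)$ lie over nearby base points of a common chart, the component of $q'-p'$ orthogonal to $T_{p'}\M'=\mathrm{im}\,DF(a)$ is the first-order Taylor remainder of $F$ at $a$, of norm $\le\frac12\Lip(DF)\,\norm{b-a}^2$, while $\norm{q'-p'}\ge(1-\varepsilon')\norm{b-a}$ since $DF(a)$ is nearly an isometry onto its image; hence $\mathrm{dist}(q'-p',T_{p'}\M')\le\norm{q'-p'}^2/(2(R-\varepsilon))$ once $\varepsilon'$ is small. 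If instead $p',q'$ are far apart, then $p:=\Phi^{-1}(p')$ and $q:=\Phi^{-1}(q')$ satisfy $\norm{q-p}\ge\eta/2$; feeding the reach inequality for $\M$, $\mathrm{dist}(q-p,T_p\M)\le\norm{q-p}^2/(2R)$, through the displacements $\norm{p'-p},\norm{q'-q}\le\varepsilon'$ and tangent rotations $\le\varepsilon'$, the inequality degrades by $O(\varepsilon'/\eta^2)$ relative to $\norm{q'-p'}^2$, which --- $\eta$ depending on $R$ alone --- is again absorbed into the step from $R$ to $R-\varepsilon$ by shrinking $\varepsilon'$. Choosing $\varepsilon'\le\varepsilon$ at the outset also gives the stated $C^1$-closeness. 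I expect the gluing step --- simultaneously producing a bona fide $C^\infty$ submanifold and keeping the partition-of-unity cross terms in $\Lip(DF)$ under control --- to be the real difficulty; the reach estimate is then a mechanical near/far case split built on top of it.
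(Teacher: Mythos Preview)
Your high-level architecture --- local graphs, convolution preserving $\Lip(Df)$, partition-of-unity gluing, Federer's tangent-space characterization with a near/far split --- matches the paper's, and your reach estimate in the final paragraph is essentially what the paper does (Lemma~\ref{lem:ErrorLipschitzDer} for nearby points, Proposition~\ref{prop:pointsFarAway} for far ones). But your gluing step is set up differently and, as written, does not yield a well-defined $C^\infty$ submanifold.

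You smooth all charts simultaneously to $\tilde f_i$ and then blend the \emph{different} smoothed graphs via $F=\sum_i\psi_i\,\hat f_i$ after transporting each $\tilde f_i$ into a common tangent space $T_{p_0}\M$. The problem is that this $F$, hence $\mathrm{graph}(F)$, depends on the choice of $p_0$: averaging in the $N_{p_0}\M$ direction versus the $N_{p_1}\M$ direction gives genuinely different sets, and the weights $\psi_i$ are evaluated at different base points, so the local pieces need not patch into one global $\M'$. If instead you try to define $\M'$ globally as the image of a map from $\M$ (with the $\psi_i$ living on $\M$ or on $\R^d$ and pulled back through the graph parametrization), the $C^{1,1}$ map $f_0$ enters the formula and $\M'$ is only $C^{1,1}$. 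The paper sidesteps this entirely by working \emph{iteratively, one chart at a time}: at step $i$ it writes the \emph{current} manifold as a graph $f$ over a single $T_{\p_i}\M$ and replaces $f$ by $F=(1-\psi)f+\psi\,(\varphi_\rho\ast f)$, blending the function with its \emph{own} mollification over the \emph{same} base, using a single bump $\psi$ compactly supported inside the chart. No transport is needed, $F=f$ near the chart boundary so the modification is globally well-defined, and $\Lip(DF)\le 3L_{\psi,D\psi}\,\rho+L_{Df}$ follows from a direct computation (Lemma~\ref{lem:SmoothingWithPartition}). Since at most $N_C=O(8^n)$ charts meet any fixed pair $(p,q)$, the per-step reach loss accumulates only boundedly many times, and $\rho$ is chosen small relative to $\varepsilon/N_C$. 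Your cross-term analysis would be salvaged verbatim in this setup --- the discrepancy $\varphi_\rho\ast f - f$ plays the role of your $\hat f_i-\hat f_j$ --- once the blending is reorganized this way.
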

	
	\begin{remark} We can either remove the compactness assumption (because we only need a locally finite cover in the proof of the theorem) or we can assume that the reach is not decreased by $\varepsilon$. 
More precisely, in the compact case, we can increase the reach $R'$ of the manifold $\M'$ by enlarging it (by $\mathcal{O}(\varepsilon)$), and achieve $R'\geq R$. However, this may increase the distance between $\M$ and $\M'$ in the $C^1$ sense (by $\mathcal{O}(\varepsilon)$). 
	\end{remark} 

	Our result can be restated in terms of density in the space of submanifolds:
	\begin{corollary}
		The space of $C^\infty$ embedded submanifolds of $\mathbb{R}^d$ with reach $R$ is dense (in the $C^1$ topology) in the space of $C^{1,1}$ embedded submanifolds of $\mathbb{R}^d$ of reach $R$. 
	\end{corollary}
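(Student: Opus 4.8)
The plan is to prove Theorem~\ref{theorem:MainTheorem} by a local-to-global construction using a partition of unity, where the local smoothing is performed by convolution with a standard mollifier. First I would fix a locally finite cover of $\M$ by charts in which $\M$ is written, as in our settings, as a graph of a $C^{1,1}$ map $f_i$ from (an open subset of) the tangent space $T_{p_i}\M$ to the normal space $N_{p_i}\M$ at $p_i$. Since $\M$ has reach $R$, each $f_i$ is $C^{1,1}$ with Lipschitz constant of $Df_i$ controlled by $1/R$ on a uniformly sized neighborhood (this is exactly the quantified version of the Lytchak/Federer correspondence cited as \cite{CompanionPaperC11}); in particular $f_i$ and $Df_i$ are small on small charts because we may center coordinates so that $f_i(0)=0$ and $Df_i(0)=0$. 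I would then mollify: set $f_i^\sigma = f_i * \rho_\sigma$ for a smooth compactly supported kernel $\rho_\sigma$, and invoke Lemmas~\ref{lem:SmoothingConservesLipschitzConstants} and \ref{lem:smoothingPreservesLipschitzConstantDerivative} to conclude that $f_i^\sigma$ is $C^\infty$, that $\Lip(f_i^\sigma) \le \Lip(f_i)$ and $\Lip(Df_i^\sigma) \le \Lip(Df_i) \le 1/R$, and that $f_i^\sigma \to f_i$ and $Df_i^\sigma \to Df_i$ uniformly as $\sigma \to 0$. Gluing the graphs of the $f_i^\sigma$ via a subordinate partition of unity $\{\varphi_i\}$ produces a $C^\infty$ embedded manifold $\M'$ which, for $\sigma$ small enough, is $\varepsilon$-close to $\M$ in the $C^1$ sense.

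The delicate point --- and the main obstacle --- is controlling the reach of the glued manifold $\M'$, because the partition-of-unity gluing does \emph{not} commute with the pointwise bounds on first and second derivatives: a convex-combination of graphs over \emph{different} affine planes can have second-order behavior governed not only by the individual $\Lip(Df_i^\sigma)$ but also by how fast the reference planes $T_{p_i}\M$ tilt and by the derivatives of the cutoff functions $\varphi_i$. The strategy here is to work with a \emph{single} reference frame per region: on the support of $\varphi_i$, rewrite all the neighboring graphs $f_j^\sigma$ in the coordinate system attached to $p_i$ (using that the $T_{p_j}\M$ are $\mathcal{O}(\mathrm{diam}/R)$-close to $T_{p_i}\M$ by the reach bound), so that the glued map is genuinely a convex combination of $C^\infty$ functions whose $C^1$ norms are $\mathcal{O}(\mathrm{diam}/R)$ and whose Hessians are $\le 1/R + o(1)$. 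One then estimates the Hessian of $\sum_j \varphi_j f_j^\sigma$ directly; the cross terms involve $D\varphi_j \cdot (f_j^\sigma - f_i^\sigma)$ and $D^2\varphi_j \cdot (f_j^\sigma - f_i^\sigma)$, but since all the $f_j^\sigma$ agree with $f_i$ up to $\mathcal{O}(\sigma)$ and with each other up to second order away from $p_i$ by $\mathcal{O}(\mathrm{diam}^2 / R)$, and $\sum_j \varphi_j \equiv 1$ forces $\sum_j D\varphi_j \equiv 0$ and $\sum_j D^2\varphi_j \equiv 0$, these terms can be absorbed. The upshot is a bound on $\|D^2(\text{glued map})\|$ of the form $1/R + C(\varepsilon,\sigma,\mathrm{diam})$ with the error term tending to $0$; translating a second-derivative bound on a graph back into a reach bound (again via \cite{CompanionPaperC11} or a direct tangent-ball argument) then gives $R' \ge R - \varepsilon$.

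For the final corollary itself, the argument is then essentially a repackaging. Given any $C^{1,1}$ embedded submanifold $\Su \subset \mathbb{R}^d$ of reach $R$ and any $C^1$-neighborhood $U$ of $\Su$, I would apply Theorem~\ref{theorem:MainTheorem} with $\varepsilon$ chosen small enough that the $\varepsilon$-$C^1$-ball around $\Su$ lies inside $U$ \emph{and} $R - \varepsilon$ is still positive; this yields a $C^\infty$ submanifold $\M' \in U$ of reach $\ge R - \varepsilon$. To land exactly in the space of $C^\infty$ submanifolds of reach \emph{equal to} $R$ --- as the corollary states --- I would invoke the compact-case remark following the theorem: after smoothing one may slightly enlarge $\M'$ (an $\mathcal{O}(\varepsilon)$ normal push-out, which only increases the reach) to bring its reach back up to at least $R$, and then, if it overshoots, scale or perturb it down continuously to hit reach exactly $R$ while staying within $U$ (the reach depends continuously on such deformations in the $C^1$ topology for $C^{1,1}$ manifolds). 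Since $\varepsilon$ and $U$ were arbitrary, every $C^1$-neighborhood of every $C^{1,1}$ submanifold of reach $R$ meets the set of $C^\infty$ submanifolds of reach $R$, which is exactly the asserted density. The one subtlety to check carefully is that the enlarge-then-renormalize step keeps the manifold $C^\infty$ and does not push the reach below $R$ along the way, but since the normal push-out of a $C^\infty$ manifold of positive reach by a small amount $t < R$ is again $C^\infty$ with reach $R + t$ (a standard tubular-neighborhood fact), this is immediate.
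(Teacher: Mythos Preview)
The paper does not give a separate proof of the corollary; it is stated as a direct restatement of Theorem~\ref{theorem:MainTheorem} together with the remark that follows it (scale the compact approximant to restore the reach). Your final paragraph reduces the corollary to Theorem~\ref{theorem:MainTheorem} in essentially the same way, so on that level you are aligned with the paper. One correction, though: the ``normal push-out'' of a $C^\infty$ manifold by $t$ does \emph{not} have reach $R+t$ in general (think of the inner parallel of a circle, or any bottleneck configuration), and in codimension ${>}1$ it is not even well defined. What the paper's remark has in mind is a homothety: scaling $\M'$ by a factor $1+\mathcal{O}(\varepsilon)$ multiplies the reach by the same factor and keeps $\M'$ within the prescribed $C^1$-neighbourhood. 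Your continuity argument to hit reach exactly $R$ then goes through with scaling in place of push-out.

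Where your proposal genuinely diverges from the paper is in the proof of Theorem~\ref{theorem:MainTheorem} itself, and here there are two real gaps. First, the paper does \emph{not} glue the mollified graphs $f_i^\sigma$ over different tangent planes via a global partition of unity. Instead it processes the charts \emph{iteratively}: in a single chart it interpolates, over the \emph{same} tangent plane $T_p\M$, between $f$ and $\varphi_\varepsilon\ast f$ using one cutoff $\psi$ (equation~\eqref{eq:smoothed_mfld_function}), replaces that piece of the manifold, and moves to the next chart. This sidesteps entirely the ``tilting reference planes'' problem you identify; your proposed fix (rewrite all neighbouring $f_j^\sigma$ in the frame of $p_i$ and take $\sum_j\varphi_j\tilde f_j^\sigma$) does not obviously yield a globally well-defined submanifold, since the resulting graphs over $T_{p_i}\M$ and over $T_{p_k}\M$ need not describe the same set on overlaps.

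Second, the paper does not pass through a Hessian bound and then translate back to a reach bound. A bound on $\|D^2(\text{glued map})\|$ controls only the \emph{local} (curvature) contribution to the reach; it says nothing about far-apart pieces of $\M'$ coming close together. The paper instead works directly with Federer's characterisation (Theorem~\ref{th418Federer}): for every pair $p',q'\in\M'$ it bounds $d(q',\Tan_{p'}\M')$ against $|q'-p'|^2/(2R')$, splitting into the case where both points lie in the current chart (Lemma~\ref{lem:ErrorLipschitzDer}, using the Lipschitz bound on $DF$ from Lemma~\ref{lem:SmoothingWithPartition}) and the case where $q'$ lies outside it (Proposition~\ref{prop:pointsFarAway}, which compares $\Tan_{p'}\M'$ to the nearby $\Tan_{p}\M$ via Lemma~\ref{boundangleTangentSpace} and Corollary~\ref{CorHD} and then invokes Federer's criterion for the original $\M$). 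This two-case analysis is precisely what handles the global part of the reach, and it is missing from your sketch.
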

	
%
%

	\subparagraph{Outline}
	The structure of the paper closely follows the different steps in the proof of Theorem~\ref{theorem:MainTheorem}. Let us provide you with an outlook:
	
	\noindent
	\textbf{\underline{Step 1:}} We start with a compact submanifold~$\M$ of $\R^d$ of positive reach $R$.
	For each point $p\in\M$, we can find a neighbourhood in which $\M$ is a \textit{graph of a function from the (affine) tangent to the normal space at $p$}. 
As it turns out, this function is $C^{1,1}$, and we use the bound on the Lipschitz constant of its derivative to control the angles between nearby tangent spaces. 
We recall the relevant statements in Section~\ref{sec:manifolds_pos_reach}.
	
	We fix an $\varepsilon$ and select a sample of points $p\in\M$ 
	whose neighbourhoods cover $\M$ (see Figure~\ref{fig:step1}). We only work with one neighbourhood at a time.
	\begin{figure}[h!]
		\centering
		\includegraphics[width=.9\textwidth]{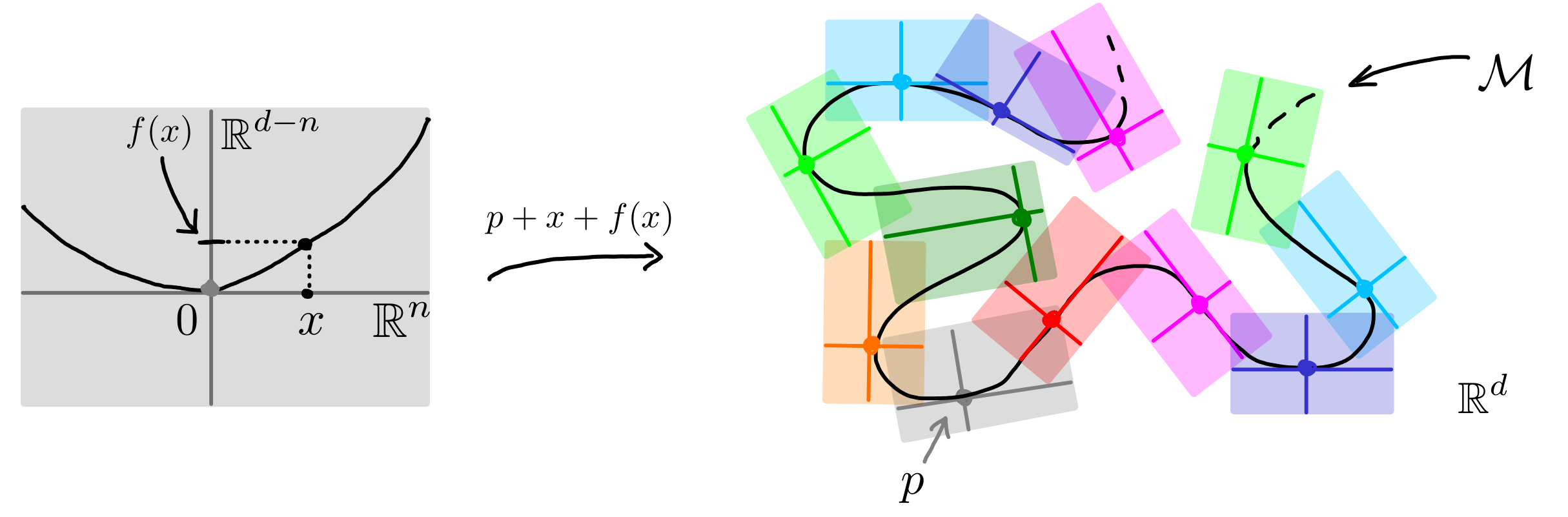}
		\caption{First we cover the manifold $\M$ with neighbourhoods in each of which $\M$ is representable as a graph of a function.}
		\label{fig:step1}
	\end{figure}
	
	\noindent
	\textbf{\underline{Step 2:}} We identify the point $p$ with $0\in\R^d$, and the tangent and normal spaces $T_p\M$ and $N_p\M$ with the first $n$ and last $d-n$ coordinates of $\R^d$, respectively. Following this identification, we denote the map describing $\M$ by $f$, and its domain by $U$.
	
	Our first goal is to smooth $f$ in a neighbourhood of $0$. We split $U$ into three regions: a neighbourhood $U_1$ of $0$, a region $U_3$ covering the viscinity of $\partial U$, and a transition region $U_2$ in between. We then use \textit{kernel-based smoothing} to define a function 
	\[
	F:U\subseteq\R^n\to \R^{d-n}
	\]
	that is smooth in $U_1$ and equals $f$ in $U_3$ (see Figure~\ref{fig:step2}). To achieve this, we employ a \textit{partition of unity function}. We revise the background on smoothing and partitions of unity in Section~\ref{subsec:DifTop}. These techniques allow us to control the Lipschitz constant of $F$ and its derivative, on which we give explicit bounds in Sections~\ref{sec:lipschitz_consts_for_smoothings} and~\ref{sec:interpolation_and_universal_settings}. We use \textit{operator norms} to formulate these bounds.
	The background on operator norms is also presented in Section~\ref{subsec:DifTop}.

	
	\begin{figure}[h!]
		\centering
		\includegraphics[width=.6\textwidth]{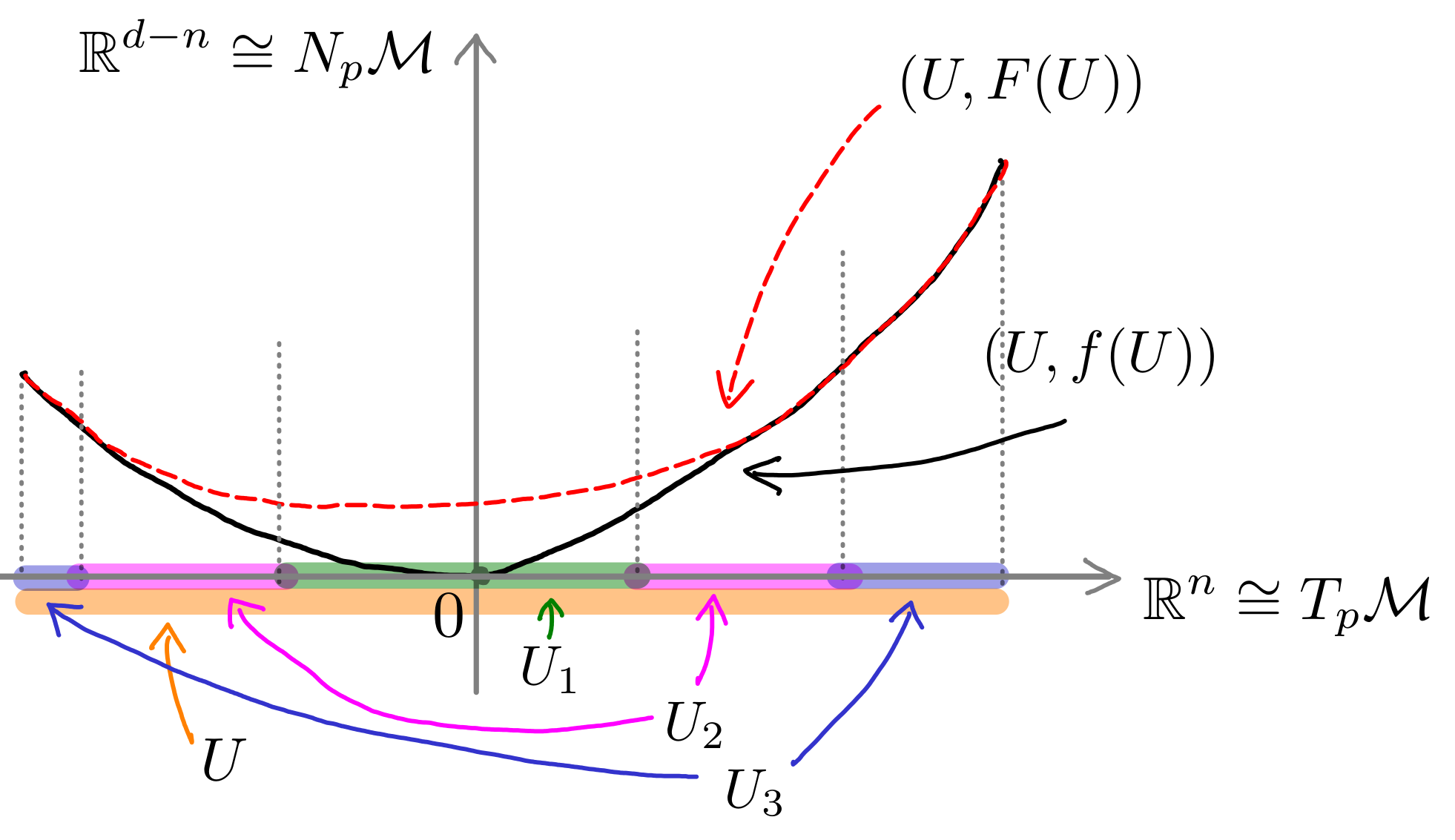}
		\caption{The function $F$ (red) is smooth on the set $U_1$ (green) and equals $f$ on the set $U_3$ (blue).}
		\label{fig:step2}
	\end{figure}
	
	\noindent
	\textbf{\underline{Step 3:}} 
	We perform surgery on the manifold $\M$, and replace the graph of $p+f$ by the graph of $p+F$. We abuse notation and call this manifold $\M'$, although it is `only' smooth in a neighbourhood of the point $p$ for now. Then we estimate the reach of $\M'$. To this end, we leverage a result by Federer (Theorem~\ref{th418Federer}), which characterizes the reach of a manifold through the distance from a point on it to the affine tangent space of another point on it. 
	We pick two points in $p',q'\in\M'$ and investigate the distance from $q'$ to the affine tangent space of $p'$. It turns out that bounding this distance is straightforward when $p'$ does \emph{not} lie in the graph of $p+F$. In the other case, we establish the bound using the relationship between the functions $F$ and $f$, and the Lipschitz constants of $f$ and its derivative. We cover these results in Section~\ref{sec:bounds_on_tangent_spaces}.
	
	\noindent
	\textbf{\underline{Step 4:}} 
	We repeat Steps 2 and 3 iteratively for each point $p$ of our sample, until we have smoothed the whole of $\M$. The process is illustrated in Figure~\ref{fig:step4}. In each iteration, we have a one-parameter freedom in the choice of the smoothing kernel.
	
	In this final step of the proof, we show that both the point sample and the smoothing kernels can be chosen in such a way that at the end, the smooth manifold $\M'$ satisfies the conditions of Theorem~\ref{theorem:MainTheorem}. This final step is described in Section~\ref{sec:proof_main_theorem}.
	
	\begin{figure}[h!]
		\centering
		\includegraphics[width=.9\textwidth]{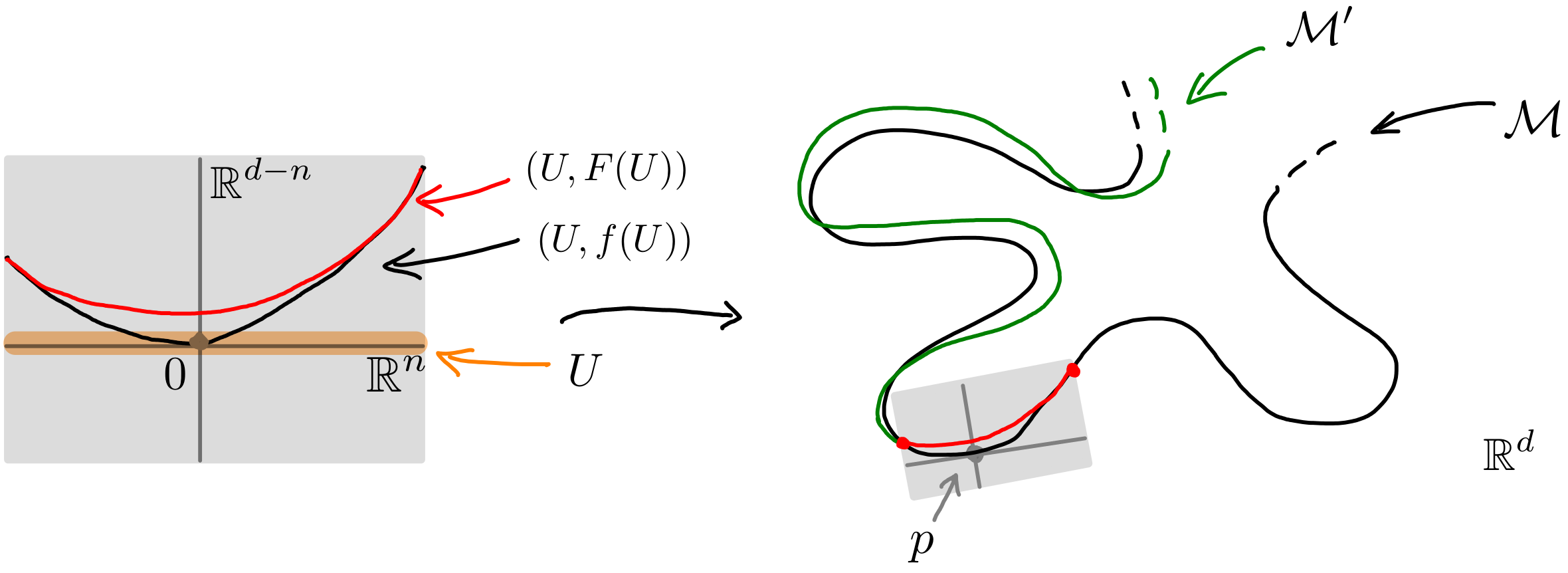}
		\caption{We construct the manifold $\M'$ iteratively. In each neighbourhood (gray), we replace the original manifold ($(U,f(U))$, in black) by a smooth piece ($(U,F(U))$, in red).}
		\label{fig:step4}
	\end{figure}
	
	The proofs of all statements from Sections~\ref{sec:section_3} and~\ref{sec:proof_main_theorem} can be found in Appendix~\ref{sec:proofs}.
	
	%
	%
	%

	
	\section{Preliminaries}\label{sec:Preliminaries} 
	
	\subsection{Manifolds of positive reach}\label{sec:manifolds_pos_reach}
	
	In this section we recall results from~\cite{CompanionPaperC11} on Lipschitz-continuity of maps from a tangent space into a normal space of a point $p$ in a manifold $\M$.
	We also revise a result by Federer~\cite{Federer} on the relationship between the reach of $\M$ and the distance between a point $q\in\M$ and the (affine) tangent space of another point $p\in\M$.
	
	The results presented below apply more broadly to sets of positive reach and are expressed in terms of tangent and normal \textit{cones} rather than spaces. While these two notions generally differ, they coincide for manifolds of positive reach. Indeed, these manifolds are $C^{1,1}$ manifolds, as established in \cite{lytchak2004geometry, lytchak2005almost} or \cite[Theorem 1]{CompanionPaperC11}. We rely extensively on this equivalence and, for simplicity, refer only to tangent and normal spaces in this article.

	Let $T_p\M$ and $N_p\M$ denote the tangent and normal space at a point $p\in\M$, respectively. 
	We write \[\Tan_p\M = p + T_p\M, \qquad \text{and} \qquad\Nor_p\M = p + N_p\M,\] for the translation of $T_p\M$ and $N_p\M$, respectively, by the vector $p$.

	
	
	\subparagraph{Manifolds with positive reach as union of graphs of $C^{1,1}$ functions}
		We first revise the relevant technical statements from \cite{CompanionPaperC11}, which we leverage throughout our paper. We illustrate the setting 
		in Figure~\ref{fig:lemma_16}.
	
	The fact that manifolds with positive reach are $C^{1,1}$  was previously proven by Lytchak \cite{lytchak2004geometry, lytchak2005almost}. In \cite{CompanionPaperC11}, the authors refine it by providing specific quantitative bounds, including optimal bounds on the angles between nearby tangent spaces.
	
%
%
%
		\begin{figure}[h!]
		\centering
		\includegraphics[width=\textwidth]{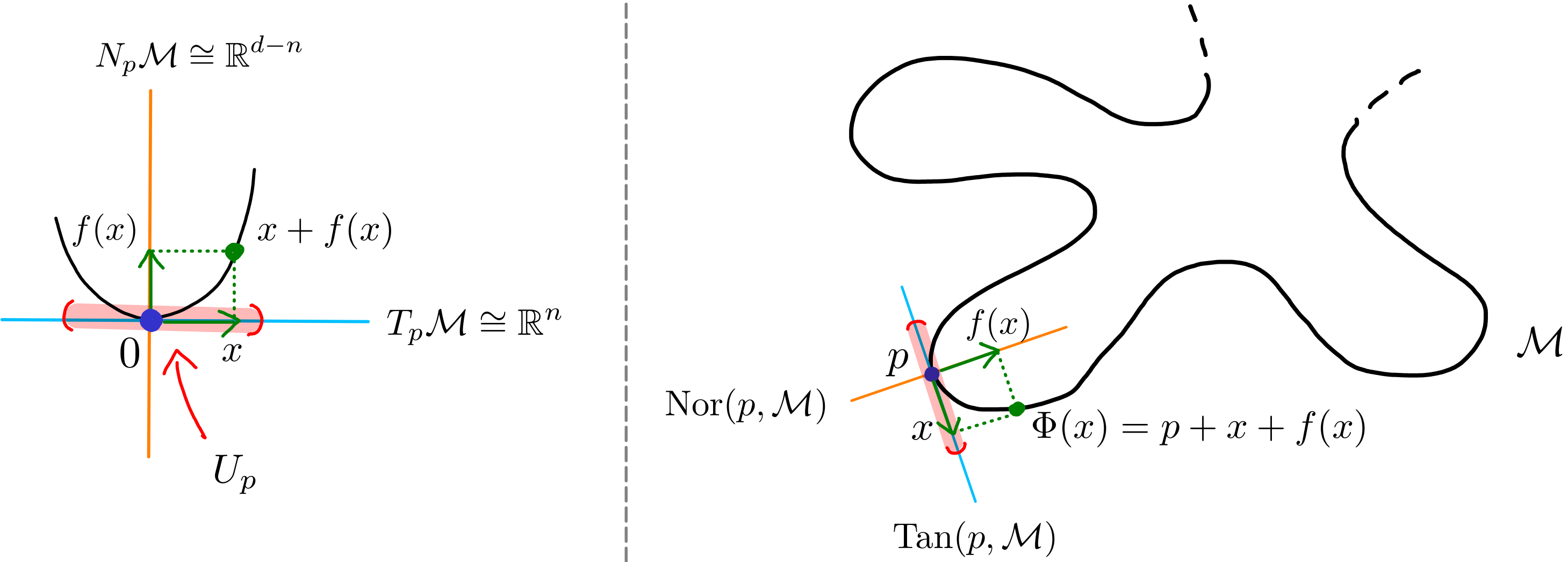}
		\caption{Illustration of the settings of Theorem~\ref{lemma:embeddedManifoldPositiveReachThenC1Embedded}.}
		\label{fig:lemma_16}
	\end{figure}
	
	
	\begin{theorem}[adapted from Theorems 1 and 4 and Lemmas 33 and 37 of {\cite{CompanionPaperC11}}] \label{lemma:embeddedManifoldPositiveReachThenC1Embedded}
		Let $\M$ be a topologically embedded $n$-manifold in $\R^d$ with reach larger than $R>0$, and $p\in \M$ a point in $\M$. Then:
		\begin{enumerate}
			\item There exists an open neighbourhood $U_p\subset T_p\M$ of $p$ containing $B^\circ(p,\rch(\M))$, and a map $f: U_p \rightarrow N_p\M$, such that the map
			\begin{align*}
				\Phi:U_p \to\M, \qquad x\mapsto\Phi(x) \defunder{=} p + x + f(x),
			\end{align*}
			is a $C^{1,1}$ map from $U_p$ to $\M$. 
			\item For every $\delta >0$, there exists an $\alpha >0$ such that on the restricted domain $U_p \cap B(0, \alpha)^\circ$, the derivative of $f$ is $\frac{1}{(R - \delta)}$-Lipschitz. In other words, for all points $y_1,y_2\in U_p \cap B(0, \alpha)^\circ$, the difference between the derivatives at the points $y_1$ and $y_2$ with respect to the operator 2-norm is bounded by
			\[
			\big\|Df(y_2)  - Df(y_1)  \big\|_{2}  \leq \frac{1}{(R - \delta)}  | y_2 - y_1 |.
			\]
			\item If $\delta$ satisfies $ \delta \leq R/2$, then we can choose $\alpha = \sqrt{\delta R}$.
		\end{enumerate}
	\end{theorem}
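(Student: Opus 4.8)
Since this statement repackages results of \cite{CompanionPaperC11}, the plan is to reconstruct that development. Fix $p$, identify it with $0$, write $\Tan_p\M=\R^n\times\{0\}$ and $\Nor_p\M=\{0\}\times\R^{d-n}$, and let $\pi\colon\R^d\to\R^n$ be the orthogonal projection onto the tangent factor. Two quantitative consequences of positive reach carry the argument: Federer's inequality \cite{Federer} (recalled as Theorem~\ref{th418Federer}), $\operatorname{dist}(b-a,\Tan_a\M)\le \tfrac{|a-b|^2}{2R}$ for $a,b\in\M$, and the ensuing control on how fast tangent spaces turn, $\angle(T_q\M,T_p\M)=\mathcal{O}(|q-p|/R)$ for $q\in\M$ near $p$. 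The latter already shows $\pi|_\M$ is an immersion near $p$, its differential at such a $q$ being the invertible map $\pi|_{T_q\M}\colon T_q\M\to T_p\M$.

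For part (1) I would proceed as follows. Since $\M$ is a topological $n$-manifold, invariance of domain makes $\pi|_\M$ an open map near $p$, so it is enough to prove injectivity on a sufficiently large piece. If $\pi(q_1)=\pi(q_2)$ then $q_1-q_2\perp\Tan_p\M$; substituting this into Federer's inequality at $q_1$ and at $q_2$ and using the angle bound forces $q_1=q_2$ once both lie in $B^\circ(p,\rch(\M))$. A continuation/maximality argument — using the reach bound to keep the lift from escaping, and the angle from reaching $\pi/2$, before distance $\rch(\M)$ — then shows that $U_p:=\pi\bigl(\M\cap B^\circ(p,\rch(\M))\bigr)$ contains the ball $B^\circ(p,\rch(\M))\cap\Tan_p\M$. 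Hence $\Phi:=(\pi|_\M)^{-1}\colon U_p\to\M$ is a homeomorphism; differentiating through Federer's inequality — which shows that $\Tan_q\M$ is the graph over $\R^n$ of a linear map $A_q$ (by the angle bound), that $f$ is differentiable with $Df(x)=A_q$, and that $Df$ is continuous (since $q\mapsto\Tan_q\M$ is, for sets of positive reach) — shows that $\Phi$, hence $f(x):=\Phi(x)-p-x$, is $C^1$. Upgrading $C^1$ to $C^{1,1}$ is exactly part (2).

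For parts (2)--(3) I would pass to the second-order picture. Positive reach caps the curvature: the second fundamental form of $\M$, defined a.e.\ by Rademacher since $\M$ is $C^{1,1}$, satisfies $\|\mathrm{II}_q\|\le 1/R$. In the graph parametrization $\Tan_q\M=\operatorname{graph}(Df(x))$, and comparing the coordinate frame $\{e_i+\partial_i f\}$ on $T_q\M$ with an orthonormal one turns this bound into
\[
\bigl\|D^2f(x)\bigr\| \le \tfrac{1}{R}\,\eta\bigl(\|Df(x)\|\bigr) \text{ for a.e.\ } x, \qquad \text{with } \eta(s)\to 1 \text{ as } s\to 0 .
\]
Since $\Lip(Df)$ on a set equals the essential supremum of $\|D^2f\|$ there, and on $B(0,\alpha)^\circ$ the slope obeys a bound of the form $\|Df(x)\|\le \tfrac{|x|}{\sqrt{R^2-|x|^2}}$ (the radius-$R$ circle being extremal), the supremum of $\eta(\|Df(x)\|)$ over $B(0,\alpha)^\circ$ is controlled by $\alpha$ and tends to $1$ as $\alpha\to0$; this already yields part (2). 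Finally, taking $\alpha=\sqrt{\delta R}$ and using $\delta\le R/2$ makes that supremum small enough that $\Lip\bigl(Df|_{B(0,\alpha)^\circ}\bigr)\le \tfrac{1}{R-\delta}$, which is part (3).

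The hard part will be part (1): getting injectivity of $\pi|_\M$ together with surjectivity onto a ball as large as $B^\circ(p,\rch(\M))$, rather than over some timidly small neighbourhood — the crude angle estimate by itself only handles a small ball, so one must exploit $q_1-q_2\perp\Tan_p\M$ and the reach bound more carefully, in effect re-deriving the relevant lemmas of \cite{CompanionPaperC11}. The remaining steps — invariance of domain, differentiating through Federer's inequality, the $\mathrm{II}\leftrightarrow D^2f$ frame bookkeeping, and the Rademacher identification of $\Lip(Df)$ with the essential supremum of $\|D^2f\|$ — are routine.
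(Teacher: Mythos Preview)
The paper does not prove this theorem: it is stated in the preliminaries and explicitly imported from the companion paper \cite{CompanionPaperC11} (Theorems~1 and~4, Lemmas~33 and~37 there), with no argument given here or in the appendix. So there is no ``paper's own proof'' to compare against; what you have written is a reconstruction of what that companion paper presumably does.

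As a reconstruction, your outline is broadly reasonable --- Federer's inequality plus the tangent--angle variation bound is indeed the engine --- but two points deserve care. First, there is a whiff of circularity in your treatment of parts~(2)--(3): you invoke Rademacher on the grounds that ``$\M$ is $C^{1,1}$'' in order to define $\mathrm{II}$ a.e., yet the $C^{1,1}$ property is precisely what part~(2) establishes. The fix is to note that the angle bound $\angle(T_a\M,T_b\M)=\mathcal{O}(|a-b|/R)$ (which you already use) is itself a Lipschitz estimate on $q\mapsto T_q\M$, hence on $x\mapsto Df(x)$, so $f$ is $C^{1,1}$ \emph{qualitatively} before you ever mention $\mathrm{II}$; only then may you differentiate once more a.e.\ and sharpen the constant. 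Second, your final sentence --- ``taking $\alpha=\sqrt{\delta R}$ and using $\delta\le R/2$ makes that supremum small enough'' --- is the entire quantitative content of part~(3) and cannot be waved through: you need the explicit form of $\eta$ and of the slope bound, and to check the resulting inequality actually closes. In \cite{CompanionPaperC11} this is the substance of their Lemmas~33 and~37, and it is not a one--liner.
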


	\subparagraph{Federer's theorem} Sets of positive reach can be characterized in various ways. In this paper, we focus on the characterization that relates the reach of a manifold to the distance between a point on the manifold and the affine tangent space at another point. The setup is illustrated in Figure~\ref{fig:federer}. 
	
	\begin{theorem}[{adapted from} Theorem 4.18 of \cite{Federer}
]\label{th418Federer} 
		Let $\M\subset\R^d$ be a manifold of positive reach, and $R>0$ a positive number. 
		Then the following two conditions are equivalent:
		\begin{itemize} 
			\item the reach $\rch(\M)$ of $\M$ satisfies $\rch(\M) \geq R$;
			\item every pair of points $p,q \in \M$ satisfies: $ d(q , \Tan_p \M) \leq \frac{|q-p|^2}{2R}$.
		\end{itemize}
	\end{theorem}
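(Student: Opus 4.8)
The plan is to prove the two implications separately. The direction ``tangent-distance bound $\Rightarrow \rch(\M)\ge R$'' is a short direct computation; the converse, ``$\rch(\M)\ge R\ \Rightarrow$ tangent-distance bound'', rests on one classical structural fact about sets of positive reach, which carries the whole weight of the statement.

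\emph{From the tangent-distance bound to $\rch(\M)\ge R$.} By the definition of the reach as the distance from $\M$ to its medial axis, it suffices to show that every $z\in\R^d$ with $d(z,\M)<R$ has a \emph{unique} closest point in $\M$. So fix such a $z$, put $t:=d(z,\M)<R$, and pick a closest point $p\in\M$ (which exists, $\M$ being closed); we may assume $t>0$ and set $\nu:=(z-p)/t$. First I would check that $\nu\in N_p\M$: for any $v\in T_p\M$, choosing a $C^1$ curve $\gamma$ in $\M$ with $\gamma(0)=p$ and $\gamma'(0)=v$ (possible since $\M$ is $C^{1,1}$), the map $s\mapsto|z-\gamma(s)|^2$ is $C^1$ and attains its minimum at $s=0$, so its derivative there vanishes, i.e.\ $\langle z-p,v\rangle=0$; as $v$ was arbitrary, $z-p\in N_p\M$. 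Then, for an arbitrary $q\in\M$,
\[
|z-q|^2 = |z-p|^2 + 2\langle z-p,\,p-q\rangle + |q-p|^2 = t^2 - 2t\,\langle\nu,\,q-p\rangle + |q-p|^2 .
\]
Since $q-p$ splits orthogonally into a $T_p\M$-component and an $N_p\M$-component, $\langle\nu,q-p\rangle$ is at most the norm of the latter, which is precisely $d(q,\Tan_p\M)$; feeding in the hypothesis $d(q,\Tan_p\M)\le|q-p|^2/(2R)$ gives
\[
|z-q|^2 \ \ge\ t^2 + |q-p|^2\left(1-\frac{t}{R}\right)\ \ge\ t^2 = |z-p|^2 ,
\]
with equality only for $q=p$, since $t<R$. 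Hence $p$ is the unique closest point of $z$, and so $\rch(\M)\ge R$.

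\emph{From $\rch(\M)\ge R$ to the tangent-distance bound.} Here I would invoke the standard ``empty ball'' property of positive reach (part of Federer's Theorem~4.8): if $\rch(\M)\ge R$ then for every $p\in\M$ and every unit vector $\nu\in N_p\M$ the open ball $B^\circ(p+R\nu,R)$ is disjoint from $\M$. Consequently every $q\in\M$ satisfies $|q-(p+R\nu)|^2\ge R^2$, which upon expansion reads $\langle q-p,\nu\rangle\le|q-p|^2/(2R)$. Because a manifold of positive reach is $C^{1,1}$, the normal space $N_p\M$ is a genuine linear subspace, so $d(q,\Tan_p\M)=\sup\{\langle q-p,\nu\rangle : \nu\in N_p\M,\ |\nu|=1\}$; taking this supremum in the previous inequality yields $d(q,\Tan_p\M)\le|q-p|^2/(2R)$, as required.

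\emph{The main obstacle.} The first implication is essentially a single expansion of a square, so the entire content of the theorem sits in the empty-ball property used in the second implication --- equivalently, in the fact that a normal segment shorter than the reach realises the distance to $\M$, i.e.\ $d(p+t\nu,\M)=t$ for $0\le t<\rch(\M)$ and $\nu\in N_p\M$ a unit vector. This is the classical structural statement about positive reach proved by Federer (Theorem~4.8); its proof rests on connectedness and monotonicity properties of the metric projection onto $\M$, and I would cite it rather than reprove it. Granting it, the theorem follows as above.
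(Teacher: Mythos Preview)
The paper does not actually prove this theorem: it is quoted in the preliminaries as a result ``adapted from Theorem~4.18 of \cite{Federer}'' and is used as a black box throughout (in Lemma~\ref{lemma:final_approximation_distance_between_q'_and_tangent_space_of_p'} and Proposition~\ref{prop:pointsFarAway}), with no proof given in the body or the appendix. So there is no ``paper's own proof'' to compare your proposal against.

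That said, your argument is correct and is essentially the classical one. The first implication is a clean computation: orthogonality of $z-p$ to $T_p\M$ plus the quadratic expansion gives strict uniqueness of the foot point whenever $t<R$, hence the medial axis lies at distance $\ge R$. The second implication you reduce, as Federer does, to the empty-ball/normal-segment property (his Theorem~4.8), and you are right that this is where all the content lies; given that input, taking the supremum over unit normals immediately yields the distance bound. One small cosmetic point: in the first implication you use that $\M$ is $C^{1,1}$ to produce a curve with prescribed initial velocity, which is fine in the present manifold setting (and the paper explicitly relies on this equivalence), but for general sets of positive reach one would phrase the same step via the tangent/normal \emph{cones} instead. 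This does not affect correctness here.
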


\begin{figure}[h!]
	\centering
	\includegraphics[width=0.6\textwidth]{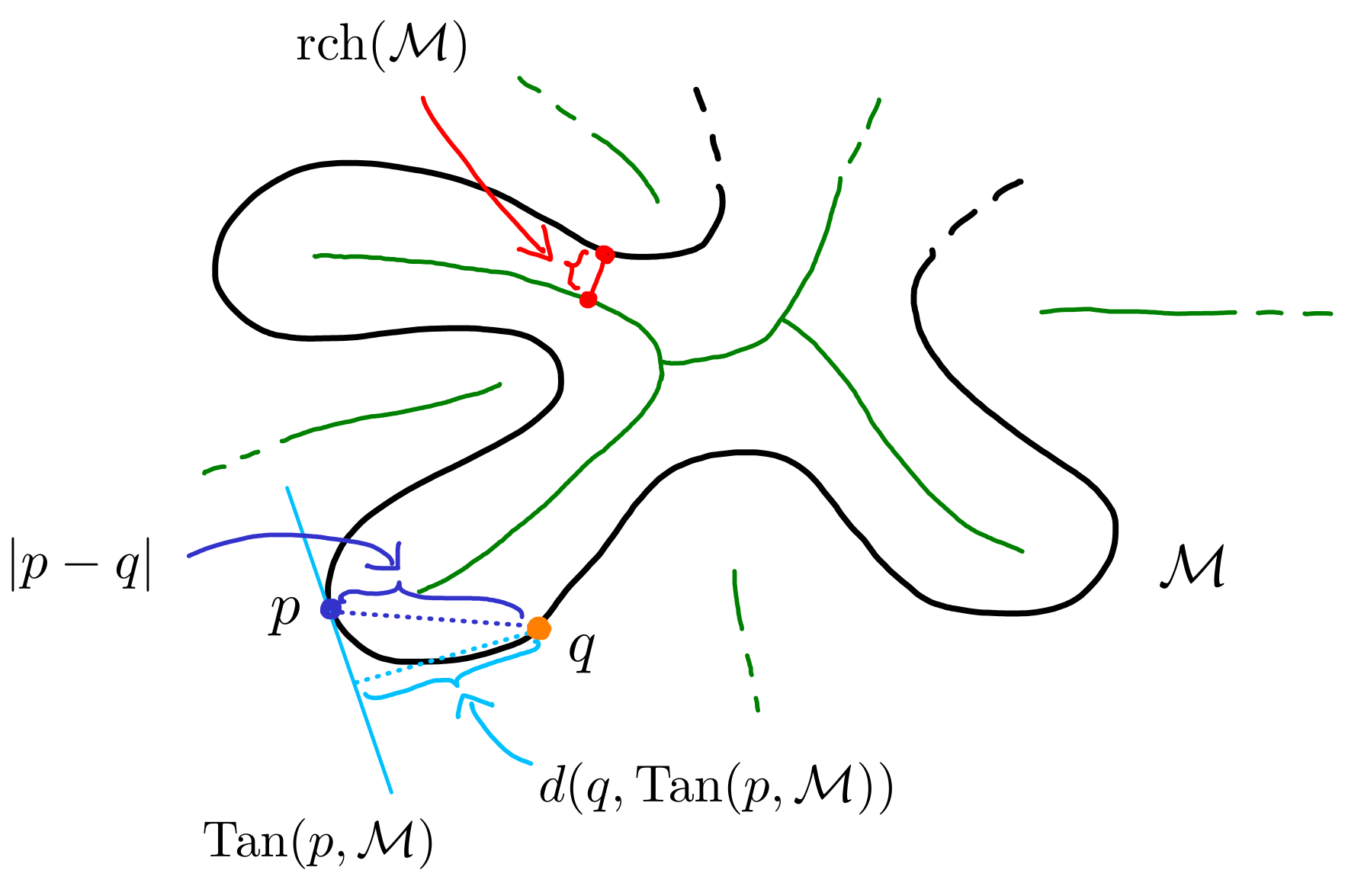}
	\caption{The distance (in light blue) between the point $q\in\M$ and the affine tangent space of the point $p\in\M$ is upper bounded by the squared distance between the points themselves (in dark blue), divided by twice the reach of $\M$.}
	\label{fig:federer}
\end{figure}
	
	\subsection{Results from differential topology}\label{subsec:DifTop}
	In this section we {recall} three {elementary} tools from differential topology: partition of unity functions, the smoothing process, and operator norms. We adopt the formulation and notation used by Hirsch~\cite[Chapter 2]{Hirsch1976}. 
	
	\paragraph*{Partition of unity functions}
	Partition of unity functions allow us to localize constructions and proofs in differential topology. They are defined as follows:
	\begin{definition}\label{def:partition_of_unity}
		Let $M$ be a $C^k$ manifold, with $0 \leq k\leq \infty$, and $\mathcal{U} = \{ U_i\}_ {i \in I}$, with an index set $I$, an open cover of $M$. A $C^k$ \emph{partition of unity} subordinate to $\mathcal{U}$ is a family of $C^k$ maps $\psi_i: M \to [0,1]$, $i \in I$, with the following properties: 
		\begin{itemize} 
			\item For every $i \in I$, the support\footnote{The support of a function $\psi$ is the closure of the set $\psi^{-1} (\mathbb{R}\setminus \{0\})$.} $\supp(\psi_i)$ of $\psi_i$ is contained in the set $U_i$. 
			\item The collection $\{\supp(\psi_i)\}_{i\in I}$ of the supports of $\psi_i$ is locally finite. 
			\item The maps $\psi_i$ sum to the function that is identically equal to $1$, that is,
			\[ \sum_i \psi_i (x) =1 .\] 
		\end{itemize} 
	\end{definition} 
	\begin{figure}[h!]
		\centering
		\includegraphics[width=.5\textwidth]{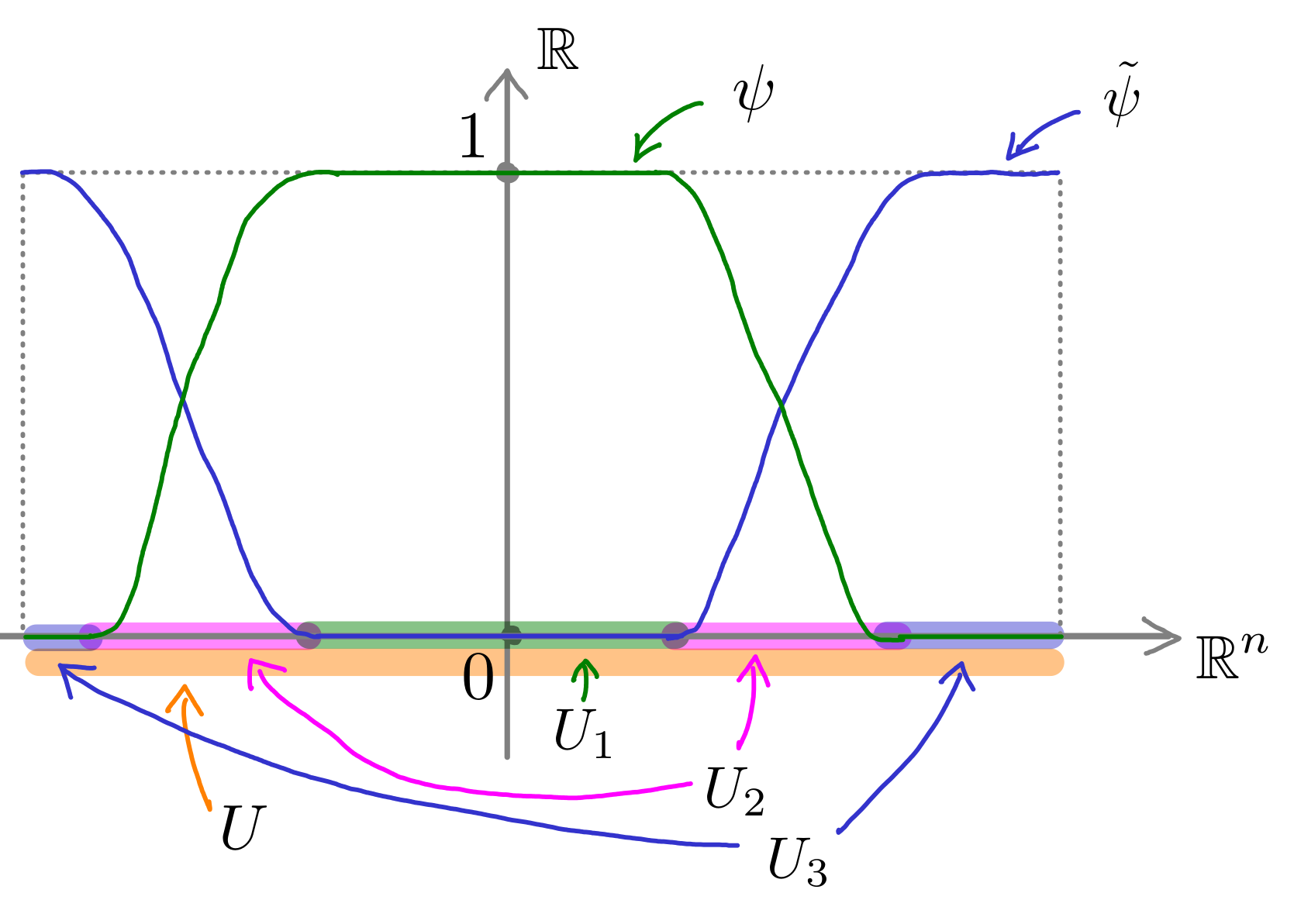}
		\caption{A cover $\{U_1\cup U_2, U_2\cup U_3\}$ of a set $U\subseteq \R^n$, and the corresponding family $\left\{\psi,\tilde{\psi}\right\}$ of partition of unity functions.}
		\label{fig:part_unity}
	\end{figure}
	We illustrate the concept for our setting in Figure~\ref{fig:part_unity}. We can find a partition of unity for any open cover:
	\begin{theorem}[Theorem 2.1 of {\cite{Hirsch1976}}] Let $M$ be a $C^k$ manifold with $1\leq k \leq \infty$. Every open cover of $M$ has a subordinate $C^k$ partition of unity. 
	\end{theorem}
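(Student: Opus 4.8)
The plan is to reduce the statement to three classical ingredients: the existence of $C^\infty$ bump functions on $\R^n$, the paracompactness of manifolds, and a normalization step. Since $C^\infty \subseteq C^k$ for every $k$ with $1 \le k \le \infty$, it suffices to produce $C^\infty$ functions $\psi_i$, which is convenient because it makes the bump-function step uniform in $k$.

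First I would recall the basic one-dimensional bump: the function $t \mapsto e^{-1/t}$ for $t > 0$, extended by $0$ for $t \le 0$, is $C^\infty$, and from it one assembles, for any $0 < r < s$, a $C^\infty$ map $\beta_{r,s} : \R^n \to [0,1]$ equal to $1$ on $\overline{B(0,r)}$ and vanishing outside $B(0,s)$. Pulling $\beta_{r,s}$ back through a chart produces, around any point of $M$, a $C^\infty$ function on $M$ that is identically $1$ on a small neighbourhood of that point and has compact support contained in a prescribed chart domain. Next I would invoke that $M$, being a second countable Hausdorff manifold (Hirsch's convention), is paracompact and locally compact, so the cover $\mathcal{U} = \{U_i\}_{i\in I}$ admits a countable, locally finite open refinement $\{V_j\}_j$ with each $\overline{V_j}$ compact and contained in some chart; by the shrinking lemma (a consequence of normality together with local finiteness) there is a further family $\{W_j\}_j$ of open sets with $\overline{W_j}\subset V_j$ that still covers $M$.

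For each $j$ the bump-function construction yields a $C^\infty$ map $\lambda_j : M \to [0,1]$ with $\lambda_j > 0$ on $\overline{W_j}$ and $\supp(\lambda_j)\subset V_j$. The sum $\Lambda := \sum_j \lambda_j$ is locally a finite sum, hence $C^\infty$, and strictly positive everywhere since the $W_j$ cover $M$; therefore $\widehat{\psi}_j := \lambda_j / \Lambda$ defines a $C^\infty$ partition of unity subordinate to $\{V_j\}_j$. To obtain one subordinate to the original cover, pick for each $j$ an index $\iota(j)\in I$ with $V_j \subset U_{\iota(j)}$ and set $\psi_i := \sum_{j : \iota(j) = i} \widehat{\psi}_j$, interpreting an empty sum as $0$. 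Local finiteness of $\{V_j\}_j$ guarantees that each $\psi_i$ is $C^\infty$ with $\supp(\psi_i)\subset U_i$, that $\{\supp(\psi_i)\}_{i\in I}$ is locally finite, and that $\sum_i \psi_i \equiv 1$.

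I expect the main obstacle to be the point-set topology rather than the analysis: establishing paracompactness of $M$ and the shrinking lemma, and then checking that the regrouping from $\{\widehat{\psi}_j\}_j$ to $\{\psi_i\}_{i\in I}$ preserves local finiteness of the supports. Once those covering properties are in place, the smooth bump functions and the division by $\Lambda$ are entirely routine.
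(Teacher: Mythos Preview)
The paper does not prove this theorem; it is quoted verbatim from Hirsch~\cite{Hirsch1976} as background and no argument is supplied. Your outline is the standard textbook proof (essentially Hirsch's own), and the strategy is sound: bump functions on~$\R^n$, a locally finite refinement via paracompactness, the shrinking lemma, normalization, and regrouping.

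One small inaccuracy is worth flagging. You write that ``it suffices to produce $C^\infty$ functions $\psi_i$'' and that pulling a smooth bump back through a chart gives ``a $C^\infty$ function on $M$''. If $M$ carries only a $C^k$ structure with $k<\infty$, this is not literally true: the pullback $\beta_{r,s}\circ\phi$ is $C^k$ on $M$, not $C^\infty$, because the transition maps are only $C^k$. The argument still goes through---a $C^\infty$ bump on $\R^n$ composed with a $C^k$ chart yields a $C^k$ function on $M$, which is exactly what is required---but the sentence as written overstates the regularity. Everything else (paracompactness from second countability plus local compactness, the shrinking lemma, strict positivity of $\Lambda$, and the local finiteness of the regrouped supports) is correct and routine.
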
  
	
	\paragraph*{Kernel smoothing} 
	Kernel smoothing is, broadly speaking, a process that averages a map using a \textit{kernel} function, making the result at least as smooth as the kernel itself. The kernel is defined as follows:
	\begin{definition}[Smoothing kernel]\label{def:kernel}
		A (smooth) map $\varphi : \mathbb{R}^n \to \mathbb{R}$ is called a \emph{convolution} or a \emph{smoothing kernel} if it is non-negative, has compact support, and $\int_{\mathbb{R}^n} \varphi =1$.
		
		The \emph{support radius} of the smoothing kernel $\varphi$ is the smallest value $\sigma\in\mathbb{R}_{\geq 0}$, for which the support $\supp(\varphi)$ of $\varphi$ is contained in the closed ball of radius $\sigma$ centred at the origin: \[\supp(\varphi)\subset B(0,\sigma)\subset \mathbb{R}^n.\]
	\end{definition}
	We illustrate the smoothing radius in Figure~\ref{fig:support_radius}.
	\begin{figure}[h!]
		\centering
		\includegraphics[width=\textwidth]{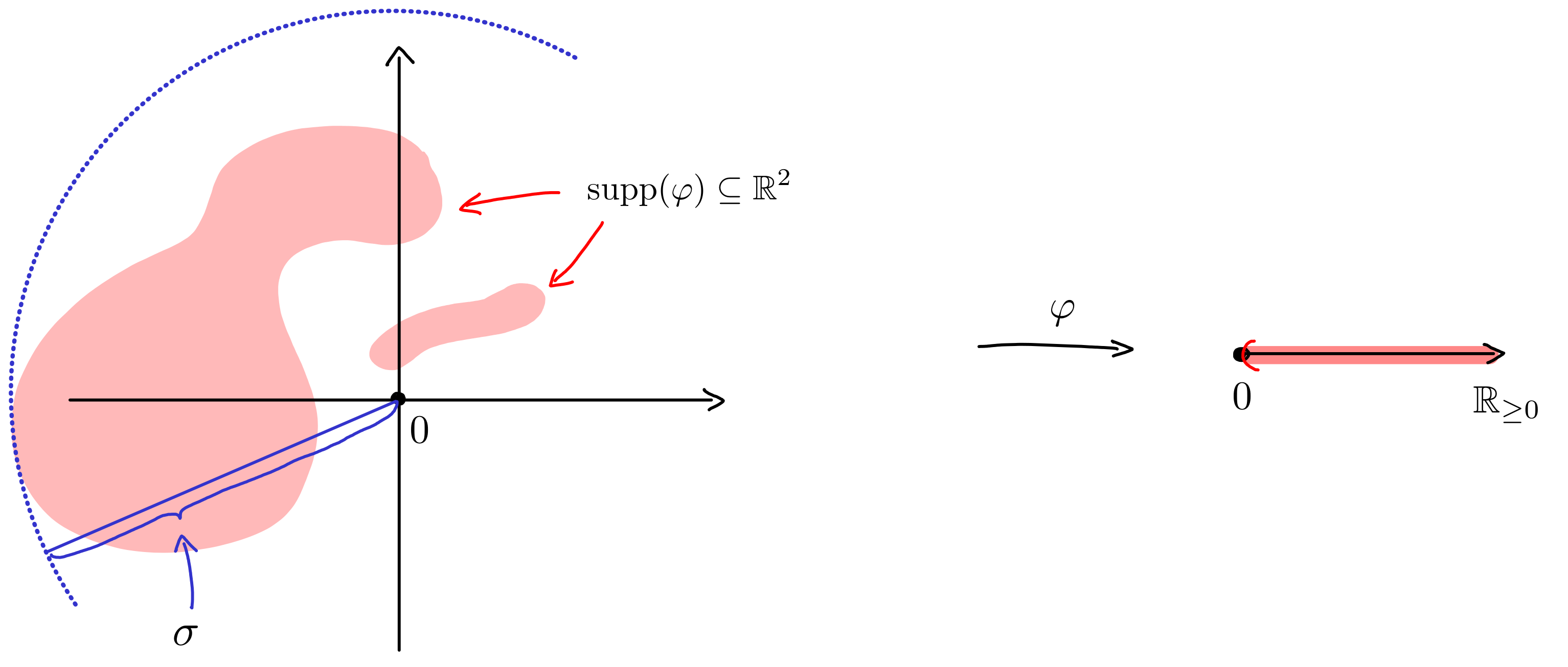}
		\caption{The support radius of the smoothing kernel $\varphi$. 
		}
		\label{fig:support_radius}
	\end{figure}
	
	Smoothing relies on neighbourhoods, determined by the support radius of the kernel. For smoothing of a map on a given set to be well-defined, the map itself must be well-defined on a sufficiently thick neighborhood surrounding the set. Consequently, it is sometimes necessary to shrink the domain of the map where the smoothing will be applied: 
	\begin{definition}\label{def:shrinking} 
		Consider a smoothing kernel with support radius $\sigma$. Given an open set $U\subset \mathbb{R}^n$, the \emph{shrinking} $U_\sigma$ of $U$ is defined as
		\[U_\sigma = \{ x \in U| B(x, \sigma) \subset U \}. \] 
	\end{definition} 
	\begin{figure}[h!]
		\centering
		\includegraphics[width=0.45\textwidth]{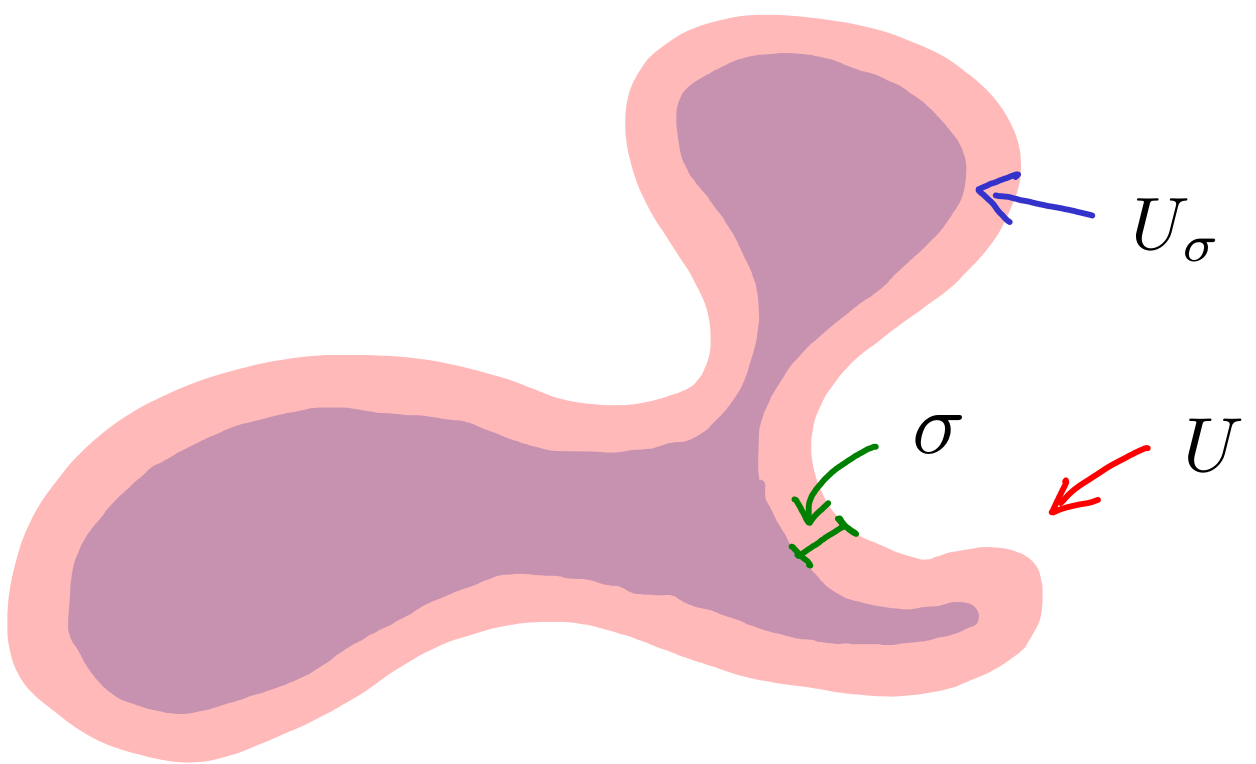}
		\caption{The shrinking of a set as defined in Definition \ref{def:shrinking}.}
		\label{fig:shrinking}
	\end{figure}
	The smoothing process, or in other words, the \textit{convolution}, is carried out through integration:
	\begin{definition}[Convolution]\label{def:convolution}
		Let $\varphi : \mathbb{R}^n \to \mathbb{R}$ be a smoothing kernel with support radius $\sigma$, $U \subset \mathbb{R}^n$ an open set, and $f: U \to \mathbb{R}^{d-n}$ a continuous map. The \emph{convolution} of $f$ by $\varphi$ is the map 
		\begin{align}
			&\varphi \ast f : U_\sigma \to \mathbb{R}^{d-n},
			\nonumber 
			& x\mapsto\varphi \ast f(x) =\int_{\mathbb{R}^n} \varphi (y) f(x-y) \ud y.
		\end{align} 
	\end{definition}
	Smoothing improves the smoothness of the map, and commutes with differentiation: 
	\begin{theorem}[{\cite[Theorem 2.3 (a) and (b)]{Hirsch1976}}]\label{Th23abHirsch}
		Let $\varphi: \mathbb{R}^n \to \mathbb{R}$ be a smoothing kernel with support radius $\sigma>0$, $U \subset \mathbb{R}^n$ an open set and $f: U\to \mathbb{R}^{d-n}$ a continuous map. The convolution $\varphi  \ast f: U_\sigma  \to \mathbb{R}^{d-n}$ has the following properties: 
		\begin{itemize} 
			\item If $\varphi$ is $C^k$, with $1 \leq k\leq \infty$, then so is $\varphi \ast f$, and for each finite $\ell\leq k$,
			\[ D^\ell (\varphi \ast  f ) = D^\ell (\varphi) \ast  f \] 
			on $U_\sigma$.  
			\item If $f$ is $C^{k}$, with $1 \leq k\leq \infty$, then so is $\varphi \ast f$, and for each finite $\ell\leq k$,
			\[ 
			D^\ell (\varphi \ast f)= \varphi \ast (D^\ell f).
			\] 
		\end{itemize} 
	\end{theorem}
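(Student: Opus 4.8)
The plan is to establish both identities by differentiating under the integral sign, with an induction on $\ell$; the only real content is justifying that differentiation is legitimate, and this comes down to the domain bookkeeping encoded in the shrinking $U_\sigma$.

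For part (a), I first move the $x$-dependence off of $f$ (which is merely continuous) and onto $\varphi$ (which is smooth) by the change of variables $z=x-y$, whose Jacobian has absolute value $1$, rewriting $\varphi\ast f(x)=\int_{U}\varphi(x-z)\,f(z)\,\ud z$, where the integrand vanishes unless $z$ lies in the closed ball $\overline{B}(x,\sigma)\subset U$. Fix $x_0\in U_\sigma$; since $U_\sigma$ is open one may pick a compact neighbourhood $K\subset U_\sigma$ of $x_0$, and then for every $x\in K$ the points $z$ that matter lie in the compact set $K'=K+\overline{B}(0,\sigma)$, which is contained in $U$ precisely because $\overline{B}(x,\sigma)\subset U$ for each $x\in K$. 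On $K'$ the continuous map $f$ is bounded. If $\varphi$ is $C^1$, then $\partial_{x_i}\big(\varphi(x-z)f(z)\big)=(\partial_i\varphi)(x-z)\,f(z)$ is dominated, uniformly for $x\in K$, by the integrable function $\norm{D\varphi}_\infty\,(\sup_{K'}\abs{f})\,\mathbf{1}_{K'}(z)$; the classical theorem on differentiation under the integral sign then yields $D(\varphi\ast f)=(D\varphi)\ast f$ on $U_\sigma$. Since $D\varphi$ is again a $C^{k-1}$ map with compact support, iterating this identity (by an easy induction on $\ell$) gives $D^\ell(\varphi\ast f)=(D^\ell\varphi)\ast f$ for all finite $\ell\le k$, and continuity of the top-order derivative $(D^k\varphi)\ast f$ follows from dominated convergence as $x\to x_0$ together with uniform continuity of $D^k\varphi$ on compacts and local boundedness of $f$; hence $\varphi\ast f\in C^k$, and for $k=\infty$ one gets the statement for every finite $\ell$ and thus $\varphi\ast f\in C^\infty$.

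For part (b), I keep the original form $\varphi\ast f(x)=\int_{\R^n}\varphi(y)\,f(x-y)\,\ud y$ and differentiate directly in $x$ under the integral: for $x$ in a compact $K\subset U_\sigma$ and $y\in\supp\varphi$ one has $x-y\in K+\overline{B}(0,\sigma)\subset U$, and if $f$ is $C^1$ the integrand $\partial_{x_i}\big(\varphi(y)f(x-y)\big)=\varphi(y)\,(\partial_i f)(x-y)$ is dominated by the integrable, compactly supported function $\varphi(y)\,\sup_{K+\overline{B}(0,\sigma)}\abs{Df}$. This gives $D(\varphi\ast f)=\varphi\ast(Df)$, and the same induction on $\ell$ — now using that $D^{\ell-1}f$ is still $C^1$ whenever $\ell\le k$ — produces $D^\ell(\varphi\ast f)=\varphi\ast(D^\ell f)$; continuity of the highest-order derivative and the case $k=\infty$ are handled exactly as before.

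The main, and indeed essentially the only, obstacle is the interplay of domains: one must verify that as $x$ ranges over a small compact neighbourhood of a point of $U_\sigma$, the translated or reflected arguments of $f$ stay inside one fixed compact subset of $U$, so that $f$ and whichever of its derivatives occur are bounded there and a single integrable dominating function serves uniformly in $x$. Once this is arranged, everything collapses to the standard ``differentiate under the integral'' lemma plus an elementary induction, and no separate argument is needed for $k=\infty$ beyond $C^\infty=\bigcap_{k<\infty}C^k$.
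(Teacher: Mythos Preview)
Your proof is correct and is the standard textbook argument. Note, however, that the paper does not supply its own proof of this theorem: it is quoted verbatim as \cite[Theorem~2.3 (a) and (b)]{Hirsch1976} and used as a black box, so there is no ``paper's proof'' to compare against beyond the reference to Hirsch, whose argument is essentially the one you wrote.
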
 
	\paragraph*{Operator norms}
	The last result we need is the convergence of smoothed maps to the original map. We consider convergence of the map itself as well as its first $k$ derivatives with respect to the so-called \textit{$C^k$ norm}. We define this norm in two steps:
	\begin{definition}\label{def:norm_linear_maps}
		The norm $\| .\|$ of a $k$-linear map 
		\[
		S: \mathbb{R}^n \times  \dots \times \mathbb{R}^n \to \mathbb{R}^m, \qquad (u_1,\dots,u_k)\mapsto S(u_1 , \dots , u_k),
		\] 
		is defined as 
		\[ \| S \| = \max_{|u_i |=1 }   | S (u_1 , \dots , u_k) |  .\] 
	\end{definition} 
	Consider a $C^k$ map $f: U  \to \mathbb{R}^{d-n}$, and its $r$th order derivative $D^r  f$.
	For $r \geq 1$, the map $D^r  f (x)$ is an $r$-linear map at each point $x\in U$, and we use the above definition to measure the norm $\| D^r  f (x) \|$ of $D^r  f$ at $x$. For $r=0$, the notation $\| D^r  f (x) \| = \| D^0  f (x) \| $ should be understood as $|f(x)|$. 
	
	\begin{definition}[$C^k$ norm]\label{def:Ck_norm}
		Let $f :U \to \mathbb{R}^{d-n}$ be a $C^k$ map, $U \subset \mathbb{R}^n$ an open set, and $C \subset U$ any subset of $U$. The $C^k$ norm of $f$ {on $C$} is defined as 
		\[ \|f \| _{k,C} = \sup \{ \| D^r  f (x) \| \mid x\in C ,  0 \leq r \leq k\}. 
		\]
	\end{definition} 
	We can now define the convergence we will use: 
	\begin{theorem}[{\cite[Theorem 2.3 (c)]{Hirsch1976}}]\label{Th:Hirsch23c}
		Let $U \subset \mathbb{R}^n$ be an open set with a compact subset $C \subset U$, and $f: U\to \mathbb{R}^{d-n}$ a $C^k$ map, with $ 0 \leq k \leq \infty$.
		For any $\varepsilon> 0$ there exists a value $\sigma >0$ such that $C \subset U_\sigma$, and any $C^k$ smoothing kernel $\varphi$ with support radius $\sigma $ satisfies
		\[ \| \varphi \ast f - f \| _{k, C} \leq \varepsilon. \]   
	\end{theorem}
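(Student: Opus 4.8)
The plan is to peel off the derivatives one order at a time using Theorem~\ref{Th23abHirsch}, thereby reducing the whole statement to the elementary fact that a continuous function is uniformly continuous on a compact set; the kernel $\varphi$ itself never needs to be written down. First I would enlarge $C$ to a compact ``buffer'' set: since $C$ is compact and $C\subset U$ with $U$ open, the number $2\rho := d(C,\R^n\setminus U)$ is strictly positive (one may take any $\rho>0$ if $U=\R^n$), so $C' := \{x\in\R^n : d(x,C)\leq\rho\}$ is a compact subset of $U$, and in fact $C'\subset U_\rho$. This $C'$ contains every translate $x-y$ with $x\in C$ and $|y|\leq\rho$, which is exactly the range of points the convolution integral will probe once the support radius is at most $\rho$.

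Next, for a given $\varepsilon>0$ and each integer $r$ with $0\leq r\leq k$, I would use that $D^r f$ is continuous on $U$ (because $f$ is $C^k$), with values in the finite-dimensional normed space of $r$-linear maps $\R^n\times\dots\times\R^n\to\R^{d-n}$ (for $r=0$ this is just $\R^{d-n}$). Hence $D^r f$ is uniformly continuous on the compact set $C'$, so there is $\delta_r\in(0,\rho]$ such that $\|D^r f(a)-D^r f(b)\|\leq\varepsilon$ whenever $a,b\in C'$ and $|a-b|\leq\delta_r$. Setting $\sigma := \min_{0\leq r\leq k}\delta_r > 0$ (a minimum of finitely many positive numbers), we get $\sigma\leq\rho$, hence $C\subset U_\sigma$ as required.

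The core estimate is then short. Let $\varphi$ be any $C^k$ smoothing kernel with support radius $\sigma$, fix $x\in C$ and $0\leq r\leq k$. By Theorem~\ref{Th23abHirsch} convolution commutes with differentiation, $D^r(\varphi\ast f)=\varphi\ast(D^r f)$ on $U_\sigma$ (for $r=0$ this is just the definition of the convolution, and for $r\geq 1$ the convolution is taken componentwise), so, using $\int\varphi=1$ and $\supp\varphi\subset B(0,\sigma)$,
\[
D^r(\varphi\ast f)(x)-D^r f(x)=\int_{B(0,\sigma)}\varphi(y)\,\big(D^r f(x-y)-D^r f(x)\big)\,\ud y .
\]
For $|y|\leq\sigma$ and $x\in C$ one has $x-y\in C'$ and $|(x-y)-x|=|y|\leq\sigma\leq\delta_r$, so the norm of the integrand is at most $\varphi(y)\,\varepsilon$; using $\varphi\geq 0$ and the triangle inequality for vector-valued integrals yields $\|D^r(\varphi\ast f)(x)-D^r f(x)\|\leq\varepsilon\int\varphi=\varepsilon$. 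Taking the supremum over $x\in C$ and over $0\leq r\leq k$ gives $\|\varphi\ast f-f\|_{k,C}\leq\varepsilon$. When $k=\infty$ one applies the argument to each finite order.

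I do not expect a genuine obstacle: the statement is ``soft''. The only points requiring attention are that a single support radius $\sigma$ must work for all derivative orders simultaneously --- handled by taking $\sigma$ to be the minimum of the finitely many $\delta_r$ --- and that the points $x-y$ occurring in the integral must remain inside a set on which the relevant derivatives are uniformly continuous, which is arranged by the compact enlargement $C'$ together with keeping $\sigma\leq\rho$. Everything else is the commutation formula of Theorem~\ref{Th23abHirsch} and uniform continuity on a compact set.
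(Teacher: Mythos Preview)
The paper does not prove this theorem; it is quoted as a preliminary result from Hirsch's book \cite[Theorem 2.3(c)]{Hirsch1976} and used as a black box throughout. So there is nothing in the paper to compare against.

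Your argument is the standard one and is correct. Two cosmetic remarks. First, the claim ``$C'\subset U_\rho$'' can fail at the boundary with the exact choice $2\rho=d(C,\R^n\setminus U)$ (a point at distance exactly $\rho$ from $C$ may have a point at distance exactly $2\rho$ from $C$ in its closed $\rho$-ball, and that point could lie in $\R^n\setminus U$); however you never actually use $C'\subset U_\rho$, only $C'\subset U$, $C'$ compact, and $x-y\in C'$ for $x\in C$, $|y|\leq\sigma\leq\rho$, all of which hold with your definition. If you want the stated inclusion to be literally true, take $2\rho$ strictly smaller than $d(C,\R^n\setminus U)$. Second, for $k=\infty$ the quantity $\|\cdot\|_{\infty,C}$ is not a single norm, so the statement is to be read as convergence in every $\|\cdot\|_{r,C}$; your closing sentence handles this correctly.
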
 
	\begin{remark} \label{remark:operator_2-norm}
		The operator $2$-norm of a matrix $A$ is defined as 
		\begin{align} 
			\|A \|_2 =\sup_{v \neq 0} \frac{|A v | }{|v|}= \max_{|v|=1} |A v |  ,  
			\nonumber
		\end{align} 
		where $| \cdot |$ denotes the usual Euclidean ($2$-)norm. 

		For the first derivative of a function $f$, evaluated at a point $x$, the norm from Definition~\ref{def:norm_linear_maps} coincides with the operator $2$-norm, that is, $\|D^1  f (x) \| = \|D^1  f (x) \|_2$.
		In particular,
		\[\|Df \| _{0,C} = { \sup }\{ \|Df (x) \|_2 \mid x\in C \} \qquad \text{and}\qquad \|f \| _{1,C} = {\max} \{\|f \| _{0,C}, \|Df \| _{0,C} \}. \]
	\end{remark} 
	
	In this paper, we primarily focus on the case where $k=1$, as estimating the first derivative proves to be the main challenge in our proofs. Consequently, much of our work involves dealing with operator 2-norms.
	
	In addition, we often make a choice of the map $f$ on which we perform the smoothing. The lemma below implies that our choice does not depend on the map $f$ itself, but only on its Lipschitz constant.
	\begin{lemma}[Folklore] \label{lemma:norm_of_smoothing_depends_on_Lipschitz_const_only}
		We adapt the settings from Theorem~\ref{Th:Hirsch23c}. If the map $f$ is $L$-Lipschitz, then for any point $x\in U_\sigma$ and $y \in B(0, \sigma)$ it holds that $ | f(x-y) -f(x)| \leq L\sigma $.
		As a consequence,
			\begin{align}   
					\| \varphi \ast f - f \| _{0, C}\leq  L \sigma.\nonumber
				\end{align}  
	\end{lemma}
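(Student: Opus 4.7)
The first inequality is immediate from the definition of Lipschitz continuity: for any $x \in U_\sigma$ and $y \in B(0,\sigma)$, the point $x - y$ lies in $U$ (by the definition of $U_\sigma$), and
\[ |f(x-y) - f(x)| \leq L \, |(x-y) - x| = L |y| \leq L \sigma. \]
So I would dispose of this step in one line.

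For the second inequality, the key observation is that the smoothing kernel $\varphi$ integrates to $1$, so we can rewrite $f(x)$ as $\int_{\mathbb{R}^n} \varphi(y) f(x) \, \mathrm{d}y$ and then combine this with the definition of $\varphi \ast f$ to obtain
\[ (\varphi \ast f)(x) - f(x) = \int_{\mathbb{R}^n} \varphi(y)\,\bigl[f(x-y) - f(x)\bigr]\, \mathrm{d}y. \]
Taking the Euclidean norm and passing it inside the integral (using that $\varphi \geq 0$), then restricting the domain of integration to $\supp(\varphi) \subset B(0,\sigma)$, I would apply the pointwise bound from the first part to replace $|f(x-y) - f(x)|$ by $L\sigma$. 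Since $\int_{\mathbb{R}^n} \varphi = 1$, the constant $L\sigma$ comes out of the integral and we are left with $|(\varphi \ast f)(x) - f(x)| \leq L\sigma$.

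Finally, since this pointwise bound holds for every $x \in U_\sigma$, and in particular for every $x \in C \subset U_\sigma$, taking the supremum over $C$ yields $\|\varphi \ast f - f\|_{0,C} \leq L\sigma$, as required. There is no serious obstacle here; the only minor point to keep in mind is that the integral of $\varphi \ast f - f$ is well-defined precisely because $x \in U_\sigma$ guarantees $x - y \in U$ for all $y \in \supp(\varphi)$, so that $f(x-y)$ makes sense in the integrand.
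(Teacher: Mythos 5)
Your proposal is correct and follows essentially the same route as the paper's proof: rewrite $f(x)$ as $\int \varphi(y) f(x)\,\mathrm{d}y$ using $\int\varphi = 1$, pull the norm inside the integral, apply the pointwise Lipschitz bound $|f(x-y)-f(x)| \le L\sigma$ on $\supp(\varphi) \subset B(0,\sigma)$, and use $\int\varphi = 1$ again. Your closing remark about well-definedness (that $x \in U_\sigma$ guarantees $x-y \in U$) is a correct and welcome clarification that the paper leaves implicit.
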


	\section{On the control over Lipschitz constants while smoothing and using partition of unity functions}\label{sec:section_3}
	This section consists of three parts, in which we
	\begin{itemize}
		\item {recall} that smoothing by convolution does not affect Lipschitz constants;
		\item set the stage to define the function $F$ that locally describes our smoothed manifold $\M'$, and determine the Lipschitz constant of $F$ and its derivative;
		\item establish two bounds on the distance between the affine tangent space of a point on the graph of $F$, and another point in $\M'$.
	\end{itemize}
	
	\subsection{Lipschitz constants for smoothings by convolution}\label{sec:lipschitz_consts_for_smoothings}
	In this section we focus on Lipschitz constants. 
	To this end, we denote the Lipschitz constant of a function $g$ by $L_g$.
	We recall that a function $g: \mathbb{R}^n \to \mathbb{R}^m$ is Lipschitz with constant $L_g$ if for all points $y_1,y_2\in \R^n,$
	\begin{align}\label{eq:Lipschitz_cont}
		\abs{g(y_2) - g(y_1)}\leq L_g \abs{y_2-y_1}.
	\end{align}
	Similarly, if $g$ is differentiable, the derivative $Dg$ of $g$ is Lipschitz with constant $L_{Dg}$ if for all points $y_1,y_2\in \R^n,$
	\begin{align}\label{eq:Lipschitz_cont_derivative}
		\big\|Dg(y_2)  - Dg(y_1)  \big\|_{2}  \leq L_{Dg} | y_2 - y_1 |.
	\end{align}
	We first {recall}  that smoothing does not influence the Lipschitz constant of a function or its derivative. 

{The result of the convolution of a function $g$ by a kernel is just a barycenter of translates of $g$. 
	Since the minimal Lipschitz constant satisfied by a function defines a semi-norm on functions, it is a convex functional. As a result:} 
	\begin{lemma}[Folklore] \label{lem:SmoothingConservesLipschitzConstants} 
		Let $g: \mathbb{R}^n \to \mathbb{R}^{d-n}$ be an $L_g$-Lipschitz map and $\varphi: \mathbb{R}^n \to \mathbb{R}$ a smoothing kernel. Then the smoothing by convolution $\varphi \ast g$ is also $L_g$-Lipschitz.
	\end{lemma}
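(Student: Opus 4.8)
The plan is to read the estimate directly off the integral defining the convolution. Since $g$ is defined on all of $\R^n$, the shrinking $(\R^n)_\sigma$ is again $\R^n$, so $\varphi\ast g$ is defined everywhere and I may compare its values at two arbitrary points $y_1,y_2\in\R^n$.

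First I would write, using Definition~\ref{def:convolution} and linearity of the integral,
\[
\varphi\ast g(y_2)-\varphi\ast g(y_1)=\int_{\R^n}\varphi(y)\,\bigl(g(y_2-y)-g(y_1-y)\bigr)\,\ud y .
\]
Then, taking norms, moving the norm inside the integral (legitimate because $\varphi\geq 0$), and applying the Lipschitz bound~\eqref{eq:Lipschitz_cont} to the integrand,
\[
\begin{aligned}
\bigl|\varphi\ast g(y_2)-\varphi\ast g(y_1)\bigr|
&\leq\int_{\R^n}\varphi(y)\,\bigl|g(y_2-y)-g(y_1-y)\bigr|\,\ud y\\
&\leq\int_{\R^n}\varphi(y)\,L_g\,|y_2-y_1|\,\ud y
=L_g\,|y_2-y_1|,
\end{aligned}
\]
where the final equality uses $\int_{\R^n}\varphi=1$. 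This is exactly the assertion.

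Conceptually, this is nothing but the remark preceding the lemma made quantitative: $\varphi\ast g$ is a barycentre of the translates $y\mapsto g(\,\cdot\,-y)$, each of Lipschitz constant $L_g$, and the minimal Lipschitz constant of a map is a sublinear (hence convex) functional, so it cannot increase under such averaging --- the displayed chain of inequalities is Jensen's inequality written out. There is essentially no obstacle here; the only points to keep track of are that $\varphi$ is nonnegative (so that the weights are genuine probability weights and the triangle inequality for integrals applies cleanly) and that the global Lipschitz bound on $g$ makes the integrand dominated by the compactly supported, integrable kernel $\varphi$. The corresponding statement for the derivative $Dg$ follows in the same way and is recorded separately as Lemma~\ref{lem:smoothingPreservesLipschitzConstantDerivative}.
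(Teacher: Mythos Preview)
Your proof is correct and follows the same approach as the paper's: both write the difference $\varphi\ast g(y_2)-\varphi\ast g(y_1)$ as a single integral, pull the norm inside using $\varphi\geq 0$, apply the Lipschitz bound on $g$, and conclude via $\int\varphi=1$. Your closing paragraph about barycentres and convexity of the Lipschitz functional matches exactly the remark the paper places before the lemma.
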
 
	{Moreover, since convolution commutes with derivation, we get:}
	\begin{lemma}[Folklore] \label{lem:smoothingPreservesLipschitzConstantDerivative}
		Let $g:\mathbb{R}^n \to \mathbb{R}^{d-n}$ be a function whose \emph{first derivative} is $L_{Dg}$-Lipschitz. 
		Let further $\varphi: \mathbb{R}^n \to \mathbb{R}$ be a smoothing kernel. 
		Then the first derivative of the smoothing by convolution $\varphi \ast g$ is also $L_{Dg}$-Lipschitz.
	\end{lemma}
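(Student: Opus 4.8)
\textbf{Proof proposal for Lemma~\ref{lem:smoothingPreservesLipschitzConstantDerivative}.}

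The plan is to reduce this statement directly to the previous lemma (Lemma~\ref{lem:SmoothingConservesLipschitzConstants}) by using that convolution with a smoothing kernel commutes with differentiation, as recorded in Theorem~\ref{Th23abHirsch}. First I would observe that the hypothesis ``the first derivative $Dg$ is $L_{Dg}$-Lipschitz'' is, by definition~\eqref{eq:Lipschitz_cont_derivative}, precisely the statement that the map $x \mapsto Dg(x)$, viewed as a map $\mathbb{R}^n \to \R^{(d-n)\times n}$ (the space of linear maps from $\R^n$ to $\R^{d-n}$, equipped with the operator $2$-norm), is $L_{Dg}$-Lipschitz in the sense of~\eqref{eq:Lipschitz_cont}. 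Since $g$ has a Lipschitz first derivative, $g$ is in particular $C^1$, so Theorem~\ref{Th23abHirsch} (second bullet, with $\ell = 1$) applies and gives
\[
D(\varphi \ast g) = \varphi \ast (Dg)
\]
on the relevant domain. (Alternatively one can invoke the first bullet, $D(\varphi\ast g) = (D\varphi)\ast g$; we only need one of the two identities.)

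Next I would apply Lemma~\ref{lem:SmoothingConservesLipschitzConstants} with the role of ``$g$'' played by $Dg$: since $Dg$ is $L_{Dg}$-Lipschitz, its convolution $\varphi \ast (Dg)$ is again $L_{Dg}$-Lipschitz. Combining the two displays, $D(\varphi\ast g) = \varphi \ast (Dg)$ is $L_{Dg}$-Lipschitz, which is exactly the claim. The one point requiring a sentence of care is that Lemma~\ref{lem:SmoothingConservesLipschitzConstants} is stated for maps into a Euclidean space $\R^{d-n}$, whereas $Dg$ takes values in the space of matrices; but this space is just a finite-dimensional normed vector space, and the convolution acts componentwise (equivalently, coordinatewise in any fixed basis), so the barycenter/convexity argument underlying Lemma~\ref{lem:SmoothingConservesLipschitzConstants} goes through verbatim with the operator $2$-norm in place of the Euclidean norm — the only property used is that the norm is a norm, so that the Lipschitz seminorm is a convex functional and the value at a barycenter of translates is bounded by the common value. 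I would make this explicit to keep the argument self-contained.

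I do not expect any genuine obstacle here; the lemma is labelled folklore and the proof is two lines modulo the bookkeeping about which normed space $Dg$ lives in. The mild subtlety — and the only thing worth spelling out — is matching the codomain of $Dg$ to the hypothesis of Lemma~\ref{lem:SmoothingConservesLipschitzConstants} and confirming that ``Lipschitz'' there is measured in the same norm (operator $2$-norm) as in~\eqref{eq:Lipschitz_cont_derivative}; once that identification is made, the result is immediate.
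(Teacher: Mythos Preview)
Your proposal is correct, and the reduction to Lemma~\ref{lem:SmoothingConservesLipschitzConstants} via $D(\varphi\ast g)=\varphi\ast Dg$ is exactly the right idea. The paper takes a slightly different route to avoid the codomain mismatch you flag: instead of applying Lemma~\ref{lem:SmoothingConservesLipschitzConstants} to the matrix-valued map $Dg$ and arguing that the convexity argument extends to the operator $2$-norm, it fixes unit vectors $v\in\R^{d-n}$ and $w\in\R^n$ and works with the scalar functions $y\mapsto\langle v, Dg(y)(w)\rangle$, which are $L_{Dg}$-Lipschitz and land literally in the setting of Lemma~\ref{lem:SmoothingConservesLipschitzConstants}. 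After passing the convolution through the inner product and the derivative, the operator-norm bound is recovered by taking the supremum over unit $v,w$. Your approach is a bit more direct and conceptually clean once one accepts that the barycenter argument works for any norm; the paper's approach keeps everything inside the stated hypotheses of the preceding lemma at the cost of one extra reduction step.
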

	
	\subsection{Interpolation between a Lipschitz function and its smoothing by convolution}\label{sec:interpolation_and_universal_settings}
	In this section, we consider an interpolation $F$ between a map $f$ locally describing our manifold, and its convolution $\varphi\ast f$, using a partition of unity function $\psi$. A formal definition of the function $F$ will follow shortly. We illustrate this construction in Figure~\ref{fig:F_construction}.
	
	\begin{figure}[h!]
		\centering
		\includegraphics[width=0.6\textwidth]{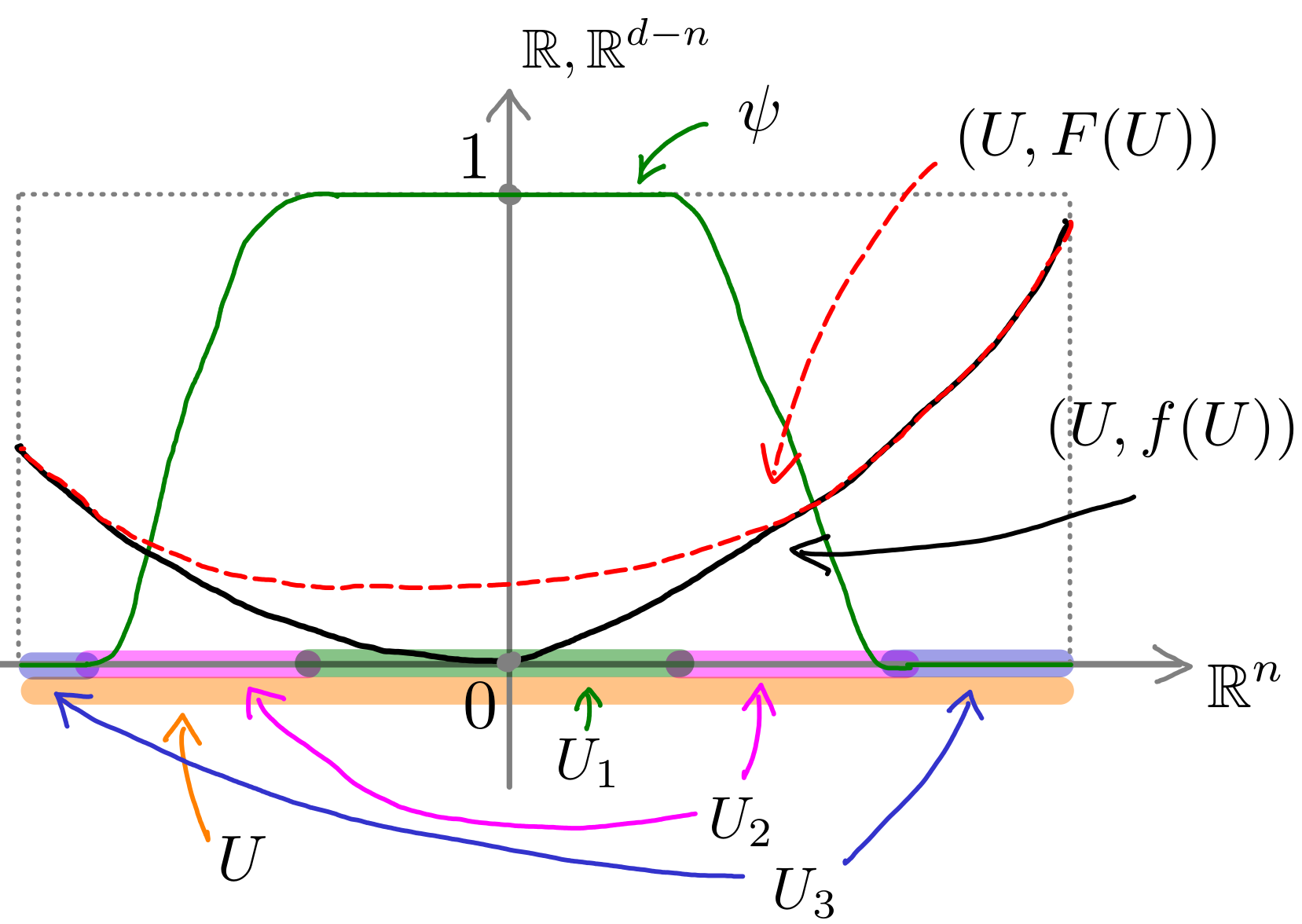}
		\caption{The superposed graphs of $f, F$, and $\psi$ show that $F$ is smooth on $U_1$ and $F=f$ on $U_3$.}
		\label{fig:F_construction}
	\end{figure}
	
	To be able to use local arguments, we need to prove that the Lipschitz constants of $F$ and its first derivative are close to the Lipschitz constants of $f$ and its first derivative.
	
The Lipschitz constants of the function $F$ were studied in \cite{SmoothingOnManifolds2007}, leading to a result similar to our Lemma \ref{Lem:LipschitzWithoutDer}. Our key contribution lies in determining the Lipschitz constant of the \textit{derivative} of $F$, a significantly more intricate task and one that is crucial for proving the main result.
	
	\noindent\textbf{Throughout the rest of this paper, we operate under the following settings:}
	\begin{itemize}
		\item We consider a function $f:U\subseteq\mathbb{R}^n \to \mathbb{R}^{d-n}$ whose graph is (locally) the manifold $\M$. Such a map exists due to Theorem~\ref{lemma:embeddedManifoldPositiveReachThenC1Embedded}. Moreover, both $f$ and its derivative $Df$ are Lipschitz, and we denote their Lipschitz constants by $L_f$ and $L_{Df}$, respectively.
		\item We cover the domain $U$ with two sets $\{U_1\cup U_2, U_2\cup U_3\}$ and consider a partition of unity subordinate to this cover. We denote the partition of unity function corresponding to the set $U_1\cup U_2$ by $\psi$, and the closure of its support by
		\[\overline{\supp{\psi}} = C.\]
		We assume $C$ to be compact. We denote the Lipschitz constants of $\psi$ and its derivative $D\psi$ by $L_{\psi}$ and $L_{D\psi}$, respectively, write
		\begin{align}
			L_{\psi,D \psi}= \max \{L_{\psi} , L_{D\psi} \},
			\label{eq:LipschitzpsiDpsi} 
		\end{align}
		and note that
		\begin{align} 
			\sup_{y\in C} \| D \psi  (y) \|_2 =  \| D \psi \|_{0,C} \leq L_{\psi} \leq L_{\psi,D \psi}.
			\label{eq:boundDpsi} 
		\end{align} 
		\item For $\varepsilon>0$ we choose a smoothing kernel $\varphi_\varepsilon:U\to \R$  such that
		\begin{align} 
			\| \varphi_{\varepsilon}  \ast f - f \| _{1, C}  \leq \varepsilon.
			\label{eq:AssumptionSmoothing1} 
		\end{align} 
		Such kernel exists due to Theorem \ref{Th:Hirsch23c}. 
		We recall (see Remark~\ref{remark:operator_2-norm}) that inequality~\eqref{eq:AssumptionSmoothing1} holds if and only if both of the following inequalities hold:
		\begin{align} 
			\sup_{y\in C}\| (\varphi_{\varepsilon}  \ast Df - Df)(y) \|_2=\| \varphi_{\varepsilon}  \ast Df - Df \| _{0, C}  \leq \varepsilon,
			\label{eq:AssumptionSmoothing2} 
		\end{align} 
		and 
		\begin{align} 
			\sup_{y\in C}\abs{(\varphi_{\varepsilon}  \ast f - f)(y) }=\| \varphi_{\varepsilon}  \ast f - f \| _{0, C}  \leq \varepsilon.
			\label{eq:AssumptionSmoothing3} 
		\end{align}
		\item The (locally) smoothed manifold $\M'$ is then locally described by the graph of the function
		\begin{align}\label{eq:smoothed_mfld_function}
			F:U\to\R^{d-n}, \qquad x\mapsto F(x)= (1- \psi(x))  f (x) + \psi (x)(\varphi_\varepsilon \ast f(x)).
		\end{align}
	\end{itemize}

	\begin{lemma}\label{Lem:LipschitzWithoutDer} 
		The function $F$ is Lipschitz on $C$ with Lipschitz constant $(L_\psi  \varepsilon +  L_f )$.
	\end{lemma}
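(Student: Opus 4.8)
The plan is to write $F$ as a convex combination and exploit the fact that the minimal Lipschitz constant is a convex functional on the space of Lipschitz maps, much as in the remarks preceding Lemma~\ref{lem:SmoothingConservesLipschitzConstants}. Concretely, recall from \eqref{eq:smoothed_mfld_function} that $F(x) = (1-\psi(x))f(x) + \psi(x)(\varphi_\varepsilon \ast f(x))$. Rather than bounding this directly, I would rewrite it as
\[
F(x) = f(x) + \psi(x)\bigl((\varphi_\varepsilon \ast f)(x) - f(x)\bigr) = f(x) + \psi(x)\, g(x),
\]
where $g = \varphi_\varepsilon \ast f - f$. The function $f$ contributes Lipschitz constant $L_f$, so by the triangle inequality for Lipschitz constants it suffices to show that the product $\psi \cdot g$ is $L_\psi \varepsilon$-Lipschitz on $C$.

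For the product term, I would use the standard product estimate: for $y_1, y_2 \in C$,
\[
\abs{\psi(y_2)g(y_2) - \psi(y_1)g(y_1)} \leq \abs{\psi(y_2)}\,\abs{g(y_2) - g(y_1)} + \abs{g(y_1)}\,\abs{\psi(y_2) - \psi(y_1)}.
\]
Here I would bound $\abs{\psi(y_2)} \leq 1$ (since $\psi$ takes values in $[0,1]$), $\abs{\psi(y_2)-\psi(y_1)} \leq L_\psi \abs{y_2 - y_1}$, and $\abs{g(y_1)} \leq \varepsilon$ by \eqref{eq:AssumptionSmoothing3} (which is the $0$-norm half of \eqref{eq:AssumptionSmoothing1}). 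The remaining piece $\abs{g(y_2) - g(y_1)}$ needs a Lipschitz bound on $g = \varphi_\varepsilon \ast f - f$: by Lemma~\ref{lem:SmoothingConservesLipschitzConstants} the convolution $\varphi_\varepsilon \ast f$ is $L_f$-Lipschitz, and $f$ is $L_f$-Lipschitz, so $g$ is $2L_f$-Lipschitz; alternatively, outside $C$ one has $\psi \equiv 0$ anyway so only behavior near $C$ matters. This would give a bound of the shape $\abs{g(y_2)-g(y_1)} + \varepsilon L_\psi \abs{y_2-y_1}$, and the first summand is the term that must be handled carefully so as not to pick up an extra $2L_f$.

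The subtlety — and the main thing to get right — is that the naive product rule overcounts: the term $\abs{\psi(y_2)}\abs{g(y_2)-g(y_1)}$ would contribute $2L_f$, not something of order $\varepsilon$, so the crude triangle-inequality argument does \emph{not} give the claimed constant $L_\psi\varepsilon + L_f$. The correct approach is to keep the decomposition $F = f + \psi g$ but \emph{not} split off $\psi g$ entirely; instead one writes, for $y_1, y_2 \in C$,
\[
F(y_2) - F(y_1) = \bigl(f(y_2) - f(y_1)\bigr) + \psi(y_2)\bigl(g(y_2) - g(y_1)\bigr) + g(y_1)\bigl(\psi(y_2) - \psi(y_1)\bigr),
\]
and then recombines the first two terms as $(1-\psi(y_2))(f(y_2)-f(y_1)) + \psi(y_2)\bigl((\varphi_\varepsilon\ast f)(y_2) - (\varphi_\varepsilon\ast f)(y_1)\bigr)$, a convex combination of two differences each bounded by $L_f\abs{y_2-y_1}$, hence bounded by $L_f\abs{y_2-y_1}$. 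The last term is bounded by $\varepsilon L_\psi \abs{y_2 - y_1}$ using $\abs{g(y_1)}\leq\varepsilon$ from \eqref{eq:AssumptionSmoothing3} and the Lipschitz property of $\psi$. Adding the two contributions yields $\abs{F(y_2)-F(y_1)} \leq (L_f + L_\psi\varepsilon)\abs{y_2 - y_1}$, which is exactly the claim. The one point requiring care is the recombination step, where the coefficients $1-\psi(y_2)$ and $\psi(y_2)$ must attach to $f$ and $\varphi_\varepsilon\ast f$ evaluated at the \emph{same} argument pair $y_1, y_2$ — this is what turns two separate $L_f$-Lipschitz bounds into a single one via convexity rather than summing them.
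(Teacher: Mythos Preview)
Your final argument is correct and is essentially the paper's own proof: both decompose $F(y_2)-F(y_1)$ into a convex combination $(1-\psi)\cdot(f(y_2)-f(y_1)) + \psi\cdot((\varphi_\varepsilon\ast f)(y_2)-(\varphi_\varepsilon\ast f)(y_1))$ plus an error term $(\psi(y_2)-\psi(y_1))\cdot g$, then bound the convex combination by $L_f|y_2-y_1|$ (using Lemma~\ref{lem:SmoothingConservesLipschitzConstants} for the convolved part) and the error term by $L_\psi\varepsilon|y_2-y_1|$ via \eqref{eq:AssumptionSmoothing3}. The only cosmetic difference is that the paper anchors the convex coefficients at $\psi(y_1)$ and the residual $g$ at $y_2$, whereas you do the reverse; the estimate is identical.
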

	
	\begin{lemma} \label{lem:SmoothingWithPartition} 
		The first derivative $DF$ of the function $F$ is Lipschitz on $C$ with Lipschitz constant $(3 L_{\psi,D \psi} \varepsilon   +L_{Df})$.
	\end{lemma}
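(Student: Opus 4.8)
The plan is to differentiate the defining formula $F=(1-\psi)f+\psi g$, where $g:=\varphi_\varepsilon\ast f$ (cf.~\eqref{eq:smoothed_mfld_function}), by the product rule, and then bound the Lipschitz constant of each resulting summand of $DF$ on $C$, regrouping the pieces so that the only contribution not already carrying a factor $\varepsilon$ is the leading $L_{Df}$. Write also $h:=g-f$ for the correction term, so $F=f+\psi h$.

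First I would record the properties of $g$ and $h$ that drive the estimate. By Theorem~\ref{Th23abHirsch} (applied with $k=1$, which suffices since $f$ is $C^{1,1}$) we have $Dg=\varphi_\varepsilon\ast Df$, hence $Dh=\varphi_\varepsilon\ast Df-Df$, and therefore, by \eqref{eq:AssumptionSmoothing3} and \eqref{eq:AssumptionSmoothing2}, the pointwise bounds $|h(y)|\le\varepsilon$ and $\|Dh(y)\|_2\le\varepsilon$ for every $y\in C$. Since in our construction $C$ lies in a ball contained in the domain $U_\sigma$ of $g$, the mean value inequality upgrades the second bound to Lipschitz continuity: $|h(y_2)-h(y_1)|\le\varepsilon\,|y_2-y_1|$ for $y_1,y_2\in C$. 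Finally, $g$ is $L_f$-Lipschitz by Lemma~\ref{lem:SmoothingConservesLipschitzConstants}, and $Dg=\varphi_\varepsilon\ast Df$ is $L_{Df}$-Lipschitz by Lemma~\ref{lem:smoothingPreservesLipschitzConstantDerivative}.

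The product rule gives, for $x\in C$,
\[
DF(x)=(1-\psi(x))\,Df(x)+\psi(x)\,Dg(x)+h(x)\otimes D\psi(x),
\]
where $h(x)\otimes D\psi(x)$ denotes the rank-one linear map $v\mapsto (D\psi(x)v)\,h(x)$, whose operator $2$-norm equals $|h(x)|\,\|D\psi(x)\|_2$. For $y_1,y_2\in C$ I would expand $DF(y_2)-DF(y_1)$, adding and subtracting $(1-\psi(y_2))Df(y_1)$, $\psi(y_2)Dg(y_1)$ and $h(y_1)\otimes D\psi(y_2)$. The two terms that then carry the scalar factor $\psi(y_2)-\psi(y_1)$ combine — this is the one step requiring a little foresight — into $(\psi(y_2)-\psi(y_1))\,(Dg(y_1)-Df(y_1))=(\psi(y_2)-\psi(y_1))\,Dh(y_1)$, leaving
\begin{align*}
DF(y_2)-DF(y_1)
&=(1-\psi(y_2))\,[Df(y_2)-Df(y_1)]+\psi(y_2)\,[Dg(y_2)-Dg(y_1)]\\
&\quad+(\psi(y_2)-\psi(y_1))\,Dh(y_1)\\
&\quad+[h(y_2)-h(y_1)]\otimes D\psi(y_2)+h(y_1)\otimes[D\psi(y_2)-D\psi(y_1)].
\end{align*}

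It then remains to bound the five summands in the operator $2$-norm. Since $Df$ and $Dg$ are both $L_{Df}$-Lipschitz and the nonnegative scalars $1-\psi(y_2)$ and $\psi(y_2)$ sum to $1$, the first two summands together contribute at most $L_{Df}\,|y_2-y_1|$; it is essential here to keep $Df$ and $Dg$ separate rather than substituting $Dg=Df+Dh$, which via the triangle inequality would produce a useless coefficient of order $3L_{Df}$. The third summand is bounded by $L_\psi\,|y_2-y_1|\cdot\|Dh(y_1)\|_2\le L_\psi\varepsilon\,|y_2-y_1|$; the fourth, using \eqref{eq:boundDpsi}, by $|h(y_2)-h(y_1)|\cdot\|D\psi(y_2)\|_2\le L_\psi\varepsilon\,|y_2-y_1|$; and the fifth by $|h(y_1)|\cdot\|D\psi(y_2)-D\psi(y_1)\|_2\le L_{D\psi}\varepsilon\,|y_2-y_1|$. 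Summing and using $\max\{L_\psi,L_{D\psi}\}=L_{\psi,D\psi}$ from \eqref{eq:LipschitzpsiDpsi} gives
\[
\|DF(y_2)-DF(y_1)\|_2\le\bigl(L_{Df}+(2L_\psi+L_{D\psi})\varepsilon\bigr)\,|y_2-y_1|\le\bigl(L_{Df}+3L_{\psi,D\psi}\varepsilon\bigr)\,|y_2-y_1|,
\]
which is the claim. The genuinely non-routine points are the regrouping into the $Dh(y_1)$ term and the decision not to split $Dg$; the only other subtlety is the auxiliary $\varepsilon$-Lipschitz bound on $h$, which is where one uses that a convex neighbourhood of $C$ lies inside the domain of $\varphi_\varepsilon\ast f$ so that the mean value inequality applies.
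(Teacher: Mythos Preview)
Your proof is correct and follows essentially the same route as the paper: both differentiate $F=f+\psi\,(\varphi_\varepsilon\ast f-f)$ by the product rule, add and subtract the appropriate cross terms to isolate five summands, bound the two ``main'' summands together as $L_{Df}|y_2-y_1|$ using that $\psi$ and $1-\psi$ sum to~$1$, and bound each of the three ``correction'' summands by $L_{\psi,D\psi}\varepsilon|y_2-y_1|$ --- in particular, both establish the auxiliary $\varepsilon$-Lipschitz bound on $h=\varphi_\varepsilon\ast f-f$ by integrating the pointwise bound \eqref{eq:AssumptionSmoothing2} along the segment. The only cosmetic difference is that you write out the product-rule formula for $DF$ explicitly before expanding the difference, whereas the paper expands $DF(y_2)-DF(y_1)$ directly.
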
 
	
	\subsection{On bounding the reach of $\M'$ in terms of Lipschitz constants}\label{sec:bounds_on_tangent_spaces}
	Let $\M'$ denote the manifold that 
	\begin{itemize}
		\item equals the graph of the function $p+F$ inside the neighbourhood of the point $p$;
		\item equals $\M$ outside of this neighbourhood. 
	\end{itemize}
	
	Our next goal is to bound the reach of $\M'$. To this end, we use a result by Federer~\cite{Federer} that characterizes the reach of a manifold through the distance from a point on it to a tangent space of another point on it. We recalled this result in Theorem~\ref{th418Federer}.
	We provide two different bounds on this distance. In order to prove the latter, we also establish bounds on the angle between tangent spaces of the graphs of $f$ and $F$.
	
	\begin{figure}[h!]
		\centering
		\includegraphics[width=.75\textwidth]{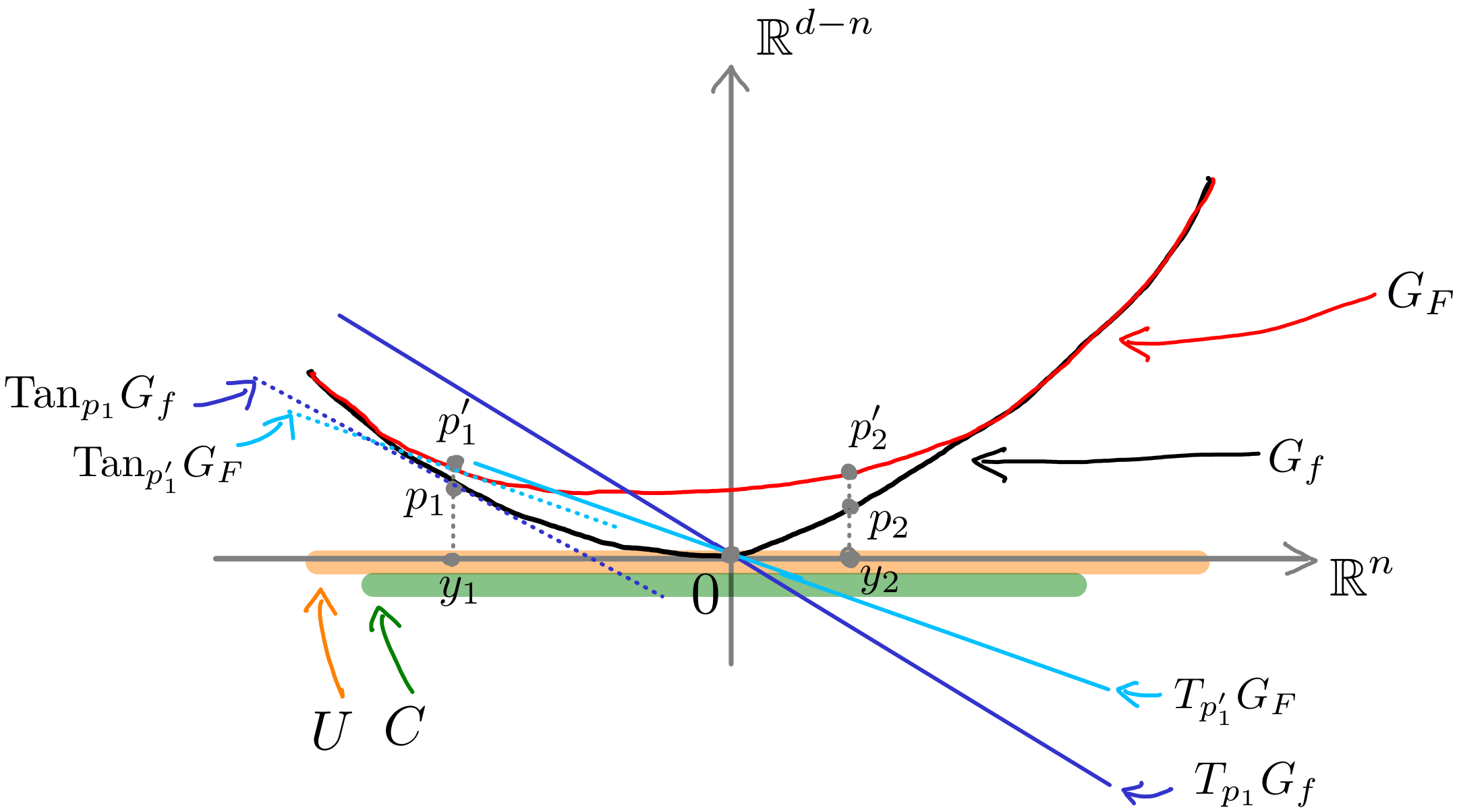}
		\caption{Illustration of the setup for this section.}
		\label{fig:setting_3.3}
	\end{figure}
	
	\noindent\textbf{We adapt the following settings, in addition to the ones established in Section~\ref{sec:interpolation_and_universal_settings}:}
	We fix two points $y_1,y_2\in C, y_1\neq y_2$, and label their graphs by
	\[p_1 = (y_1, f(y_1)), \quad p_2 = (y_2, f(y_2)) \qquad \text{and} \qquad p'_1 = (y_1, F(y_1)), \quad p'_2 = (y_2, F(y_2)).\]
	We write $G_f$ and $G_F$ for the graph of $f$ and $F$, respectively, and $T_{p_1}G_f$ and $\Tan_{p_1}G_f$ (resp. $T_{p_1'}G_F$ and $\Tan_{p_1'}G_F$) for the tangent and affine tangent space of $f$ at $p_1$ (resp. of $F$ at $p_1'$).
	
	We illustrate this setup in Figure~\ref{fig:setting_3.3}.
	\begin{lemma} \label{lem:ErrorLipschitzDer}
		The distance between the point $p_2'$ and the affine tangent space $\Tan_{p_1'}G_F$ is bounded by
		\begin{align}\label{eq:reach-tangent-cone-compatibility}
			d(p_2' , \Tan_{p_1'}G_F)\leq \tfrac{1}{2}(3 L_{\psi,D \psi} \varepsilon   +L_{Df})\abs{p_2'-p_1'}^2.
		\end{align}
		This has the following consequence: Let $\rho>0$ be small enough that the $\rho$-neighbourhood of $y_1$ is contained in $C$, $B(y_1, \rho)\subseteq C$. Then the graph $G_F$ of the function $F$ in this neighbourhood is contained in the union of balls
		\[ 
		\bigcup_{v\in\R^n :  |v| = r\leq \rho} B \left(p_1'+\begin{pmatrix} 
			v \\ DF( y_1 ) v
		\end{pmatrix} , \tfrac{1}{2}(3 L_{\psi,D \psi} \varepsilon   +L_{Df}) \:r^2 \right).
		\]  
	\end{lemma}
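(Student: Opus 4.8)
The plan is to derive the tangent-distance bound \eqref{eq:reach-tangent-cone-compatibility} directly from the Lipschitz bound on $DF$ obtained in Lemma~\ref{lem:SmoothingWithPartition}, and then to re-express it geometrically as a containment in a union of balls. First I would observe that the affine tangent space $\Tan_{p_1'}G_F$ is the graph of the affine map $v \mapsto F(y_1) + DF(y_1)(v - y_1)$ over $T_{p_1'}G_F$, so that the distance $d(p_2',\Tan_{p_1'}G_F)$ is at most the length of the ``vertical'' error vector, namely $\bigl|F(y_2) - F(y_1) - DF(y_1)(y_2-y_1)\bigr|$ in the $\R^{d-n}$ factor (projecting onto the normal complement only shrinks this). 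To bound this Taylor remainder I would write $F(y_2) - F(y_1) = \int_0^1 DF\bigl(y_1 + t(y_2-y_1)\bigr)(y_2-y_1)\,\ud t$, subtract $DF(y_1)(y_2-y_1) = \int_0^1 DF(y_1)(y_2-y_1)\,\ud t$, and apply the $(3L_{\psi,D\psi}\varepsilon + L_{Df})$-Lipschitz bound on $DF$ from Lemma~\ref{lem:SmoothingWithPartition} together with $\|DF(y_1+t(y_2-y_1)) - DF(y_1)\|_2 \leq (3L_{\psi,D\psi}\varepsilon + L_{Df})\,t\,|y_2-y_1|$; integrating $t$ over $[0,1]$ produces the factor $\tfrac12$. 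Finally, since $|y_2 - y_1| \leq |p_2' - p_1'|$ (the horizontal displacement is a component of the full displacement), I get the claimed inequality. One should check that the segment $y_1 + t(y_2 - y_1)$ stays inside $C$ so that the Lipschitz estimate applies; this holds because $y_1, y_2 \in C$ and $C$ is convex in the relevant local chart — or, more carefully, because we only ever invoke this with $y_2 \in B(y_1,\rho) \subseteq C$, and $C$ contains that ball, so the segment lies in $C$.

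For the second assertion, fix $\rho$ with $B(y_1,\rho) \subseteq C$ and let $y_2 = y_1 + v$ with $|v| = r \leq \rho$. The point $p_2' = (y_2, F(y_2))$ differs from the ``tangent-plane point'' $p_1' + \binom{v}{DF(y_1)v}$ exactly by the vertical vector $\bigl(0,\,F(y_2) - F(y_1) - DF(y_1)v\bigr)$, whose norm is at most $\tfrac12(3L_{\psi,D\psi}\varepsilon + L_{Df})r^2$ by the first part. Hence $p_2'$ lies in the ball $B\bigl(p_1' + \binom{v}{DF(y_1)v},\ \tfrac12(3L_{\psi,D\psi}\varepsilon + L_{Df})r^2\bigr)$, and ranging over all $v$ with $|v| = r \leq \rho$ covers the whole graph $G_F$ over $B(y_1,\rho)$. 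This is just a restatement, so no further work is needed.

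I do not expect a serious obstacle here: the whole lemma is an application of the fundamental-theorem-of-calculus Taylor-remainder estimate, and the only technical point requiring a moment's care is the convexity/containment of the domain so that the Lipschitz bound on $DF$ (which Lemma~\ref{lem:SmoothingWithPartition} only guarantees on $C$) can be integrated along the segment between $y_1$ and $y_2$. If $C$ itself is not convex, one restricts to the convex ball $B(y_1,\rho) \subseteq C$ used in the statement, which is exactly the regime in which the corollary is phrased and later applied.
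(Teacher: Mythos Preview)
Your proposal is correct and follows essentially the same approach as the paper: decompose $p_2'-p_1'$ into a tangent part plus a vertical remainder, bound the remainder via the fundamental theorem of calculus and the Lipschitz constant of $DF$ from Lemma~\ref{lem:SmoothingWithPartition}, and then use $|y_2-y_1|\le|p_2'-p_1'|$. You are in fact more careful than the paper about one point: you flag that the segment $y_1+t(y_2-y_1)$ must stay in $C$ for the Lipschitz bound to apply, and you correctly resolve this by restricting to $B(y_1,\rho)\subseteq C$ (the paper tacitly relies on $C$ being a closed ball in the intended application).
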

	We illustrate the settings of Lemma~\ref{lem:ErrorLipschitzDer} in Figure~\ref{fig:lemma_24}.
	
	\begin{figure}[h!]
		\centering
		\includegraphics[width=.65\textwidth]{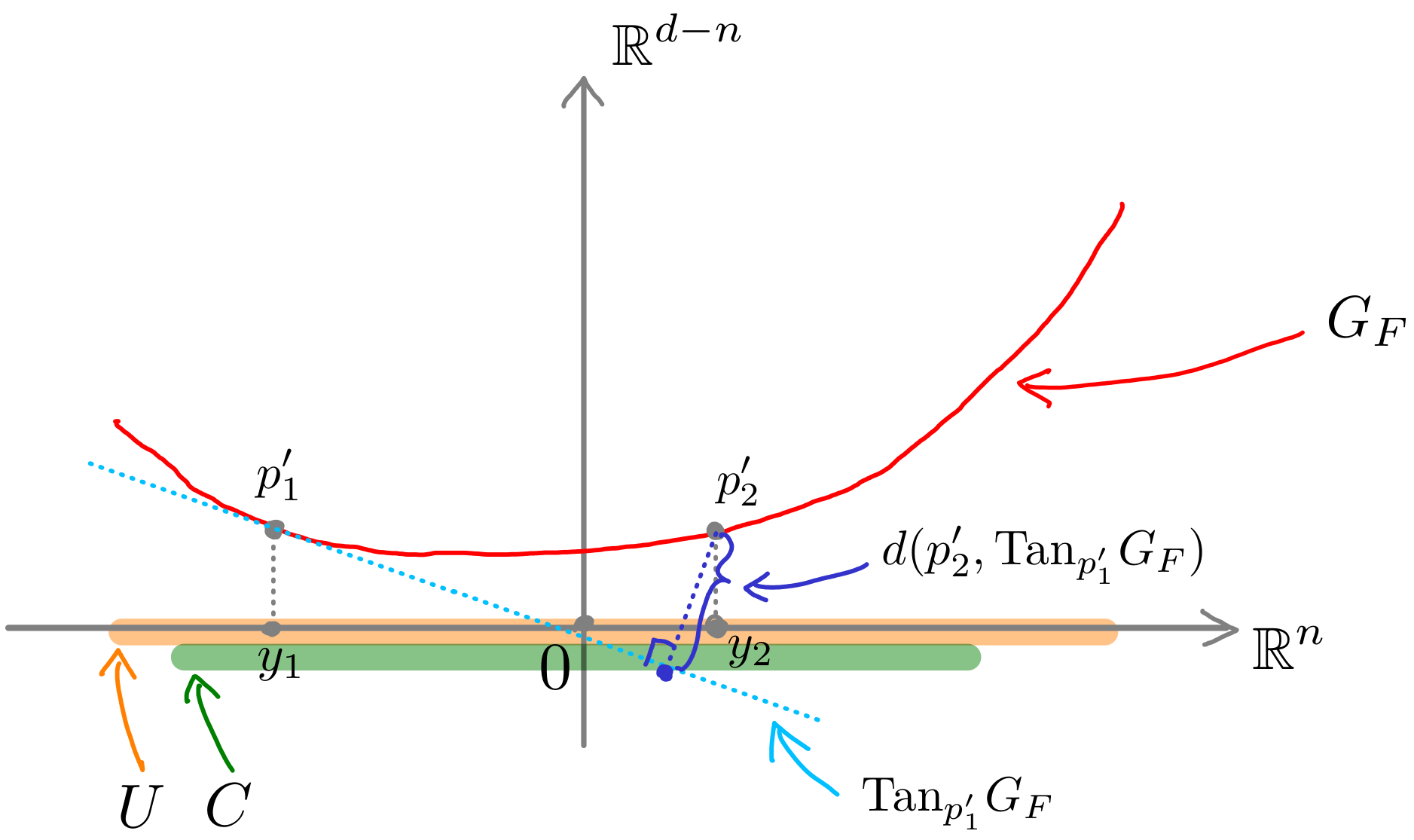}
		\caption{Illustration of the settings of Lemma~\ref{lem:ErrorLipschitzDer}.
		}
		\label{fig:lemma_24}
	\end{figure}
	
	As an auxiliary result needed to prove Proposition~\ref{prop:pointsFarAway}, we obtain a bound on the angle between the affine tangent spaces $\Tan_{p_1} G_f$ and $\Tan_{p_1'} G_F$, as well as a bound on the Hausdorff distance between two neighbourhoods contained in these spaces.
	We note that the angle between two affine spaces is equal to the angle between the corresponding vector spaces. The angle between two vector subspaces $A$ and $B$ is defined as
	\begin{align*}
		\angle A, B \defunder{=} \max_{a\in A\setminus \{0\}} \min_{b \in B \setminus \{0\}}
		\angle a,b = \max_{b \in B\setminus \{0\}} \min_{a \in A \setminus \{0\}} \angle a,b.
	\end{align*} 
	\begin{lemma}\label{boundangleTangentSpace} 
		Assume that $\varepsilon\leq\tfrac{1}{L_{\psi,D \psi}+1}$. Then the angle between the affine tangent spaces $\Tan_{p_1} G_f$ and $\Tan_{p'_1} G_F$  is bounded by
		\[
		\angle (\Tan_{p_1}G_f , \Tan_{p'_1} G_F) \leq \arcsin\left( L_{\psi,D \psi}  \varepsilon  +  \varepsilon\right).
		\] 
	\end{lemma}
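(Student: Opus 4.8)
The plan is to reduce the statement to an estimate on the operator $2$-norm of $DF(y_1)-Df(y_1)$. Two ingredients are needed: (i) the tangent space of the graph of a $C^1$ map at a point is the graph of its derivative there, so that (recalling that the angle between affine spaces equals the angle between the underlying vector spaces) the angle $\angle(\Tan_{p_1}G_f,\Tan_{p_1'}G_F)$ equals the angle between the graphs of the linear maps $Df(y_1)$ and $DF(y_1)$; and (ii) the angle between the graphs of two linear maps is controlled by the norm of their difference. Note that $F=f+\psi\cdot(\varphi_\varepsilon\ast f-f)$ is $C^1$ on a neighbourhood of $C$ (since $C\subset U_\sigma$), so $DF(y_1)$ is well defined.

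For ingredient (ii) I would prove the following auxiliary fact: if $A,B:\mathbb{R}^n\to\mathbb{R}^{d-n}$ are linear with $\|A-B\|_2\le 1$, then $\angle(\mathrm{graph}\,A,\mathrm{graph}\,B)\le\arcsin\|A-B\|_2$. Fix a nonzero $a=(v,Av)\in\mathrm{graph}\,A$ (so $v\neq 0$); the minimum of $\angle(a,b)$ over nonzero $b\in\mathrm{graph}\,B$ equals $\arcsin\bigl(d(a,\mathrm{graph}\,B)/|a|\bigr)$, and taking $b'=(v,Bv)$ as a competitor gives $d(a,\mathrm{graph}\,B)\le|a-b'|=|Av-Bv|\le\|A-B\|_2\,|v|\le\|A-B\|_2\,|a|$. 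Since the paper's definition of the angle between subspaces is in $\max$--$\min$ form, taking the maximum over $a$ yields the fact.

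The core computation is the pointwise bound $\|DF(x)-Df(x)\|_2\le(L_{\psi,D\psi}+1)\varepsilon$ on $C$. Writing $F=f+\psi\cdot(\varphi_\varepsilon\ast f-f)$ and differentiating the scalar-times-vector product, $DF(x)-Df(x)$ is the sum of the rank-$\le 1$ operator $v\mapsto(D\psi(x)[v])\,(\varphi_\varepsilon\ast f-f)(x)$ and of $\psi(x)\,(\varphi_\varepsilon\ast Df-Df)(x)$, where I use that convolution commutes with differentiation (Theorem~\ref{Th23abHirsch}). The operator norm of the first summand equals $|(\varphi_\varepsilon\ast f-f)(x)|\cdot\|D\psi(x)\|_2$, which is at most $\varepsilon\cdot L_{\psi,D\psi}$ on $C$ by \eqref{eq:AssumptionSmoothing3} and \eqref{eq:boundDpsi}; the operator norm of the second is at most $\psi(x)\cdot\varepsilon\le\varepsilon$ on $C$ by $0\le\psi\le1$ and \eqref{eq:AssumptionSmoothing2}. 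Evaluating at $x=y_1\in C$ and using the hypothesis $\varepsilon\le\tfrac{1}{L_{\psi,D\psi}+1}$ --- which is exactly the condition ensuring $(L_{\psi,D\psi}+1)\varepsilon\le1$ --- the auxiliary fact applies and gives $\angle(\Tan_{p_1}G_f,\Tan_{p_1'}G_F)\le\arcsin((L_{\psi,D\psi}+1)\varepsilon)=\arcsin(L_{\psi,D\psi}\varepsilon+\varepsilon)$.

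I expect the only mildly delicate point to be the auxiliary fact in ingredient (ii): obtaining an $\arcsin$ (rather than an $\arctan$) and checking that bounding the $\max_a\min_b$ form of the subspace angle suffices. Both become immediate once the competitor $b'=(v,Bv)$ is used, so the remainder is bookkeeping with the inequalities already assembled in Section~\ref{sec:interpolation_and_universal_settings}.
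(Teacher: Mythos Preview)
Your proposal is correct and follows essentially the same approach as the paper: the paper isolates your ingredient (ii) as a separate auxiliary lemma (Lemma~\ref{lemma:angle_between_tangent_spaces}), proved via the same competitor $(v,Bv)$ and the observation $|v|\le|(v,Av)|$, and then carries out the identical derivative computation $\|(DF-Df)(y)\|_2\le L_{\psi,D\psi}\varepsilon+\varepsilon$ using \eqref{eq:boundDpsi}, \eqref{eq:AssumptionSmoothing2}, \eqref{eq:AssumptionSmoothing3}, and $|\psi|\le1$. Your use of the distance-to-subspace formula $\min_b\angle(a,b)=\arcsin(d(a,\mathrm{graph}\,B)/|a|)$ is a slightly cleaner way to get the $\arcsin$ than the paper's tangent-circle picture, but the two arguments are otherwise the same.
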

	 
	\begin{figure}[h!]
		\centering
		\includegraphics[width=.5\textwidth]{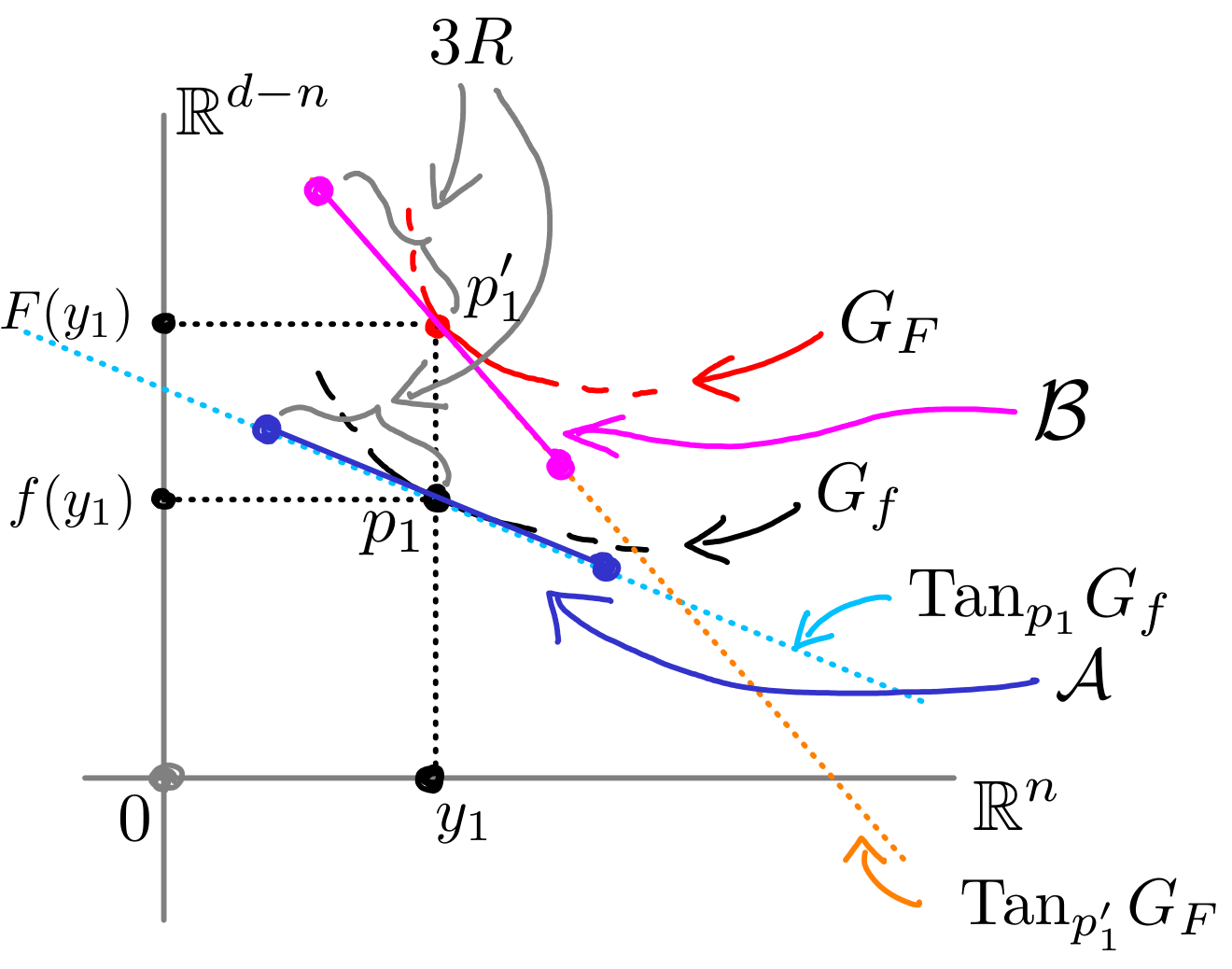}
		\caption{The 3$R$-neighbourhoods of the points $p_1$ (in dark blue) and $p_1'$ (in pink) in the corresponding affine tangent spaces.}
		\label{fig:cor_27}
	\end{figure}
	\begin{corollary}\label{CorHD}
		Let $R>0$ denote the reach of the manifold $\M$, and consider the $n$-dimensional neighbourhood of size $3R$ of the point $p_1$ in the affine tangent space $\Tan_{p_1}G_f$: 
		\begin{align*}
 {\mathcal{A} = p_1 + \{T \in T_{p_1}G_f \mid |T| \leq 3R \}.}
		\end{align*}
		Similarly, consider a $3R$-neighbourhood of the point $p_1'$ in the affine tangent space $\Tan_{p_1'}G_F$:
		\begin{align*}
 {\mathcal{B} = p_1' + \{T \in T_{p_1'}G_F \mid |T| \leq 3R \}.}
		\end{align*}
		Assume that $\varepsilon\leq\tfrac{1}{L_{\psi,D \psi}+1}$. Then the Hausdorff distance between the two neighbourhoods is upper bounded: 
		\[ 
		d_H {(\mathcal{A}, \mathcal{B})}\leq \varepsilon\left( 6R L_{\psi,D \psi}+6R+1 \right).
		\] 
	\end{corollary}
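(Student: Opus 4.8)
The plan is to bound, for every point of $\mathcal{A}$, its distance to $\mathcal{B}$ and vice versa, by separating two sources of displacement: the gap between the base points $p_1$ and $p_1'$, and the tilt between the two affine tangent spaces, the latter being controlled by Lemma~\ref{boundangleTangentSpace}. I would first observe that the base points are $\varepsilon$-close. Since $p_1 = (y_1, f(y_1))$ and $p_1' = (y_1, F(y_1))$ agree in their first coordinate, $\abs{p_1 - p_1'} = \abs{f(y_1) - F(y_1)}$; from~\eqref{eq:smoothed_mfld_function} we have $F(y_1) - f(y_1) = \psi(y_1)\big((\varphi_\varepsilon \ast f)(y_1) - f(y_1)\big)$, and since $0 \le \psi \le 1$ and $y_1 \in C$, inequality~\eqref{eq:AssumptionSmoothing3} gives $\abs{p_1 - p_1'} \le \varepsilon$.

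Next I would transport an arbitrary point $a = p_1 + T \in \mathcal{A}$, with $T \in T_{p_1}G_f$ and $\abs{T} \le 3R$ (the case $T = 0$ being trivial), into $\mathcal{B}$. Using the definition of the angle between subspaces together with Lemma~\ref{boundangleTangentSpace} --- applicable because we assume $\varepsilon \le \tfrac{1}{L_{\psi,D\psi}+1}$ --- one can choose $T' \in T_{p_1'}G_F$ with $\abs{T'} = \abs{T}$ and $\angle(T,T') \le \theta := \arcsin\!\big(L_{\psi,D\psi}\varepsilon + \varepsilon\big)$. Keeping $\abs{T'} = \abs{T}$ ensures that $b := p_1' + T'$ lies in $\mathcal{B}$. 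A law-of-cosines computation then gives $\abs{T - T'} = 2\abs{T}\sin\!\big(\tfrac{1}{2}\angle(T,T')\big) \le \abs{T}\,\theta \le 3R\,\theta$, so the triangle inequality yields $\abs{a-b} \le \abs{p_1-p_1'} + \abs{T-T'} \le \varepsilon + 3R\theta$. Since $\varepsilon \le \tfrac{1}{L_{\psi,D\psi}+1}$ the quantity $L_{\psi,D\psi}\varepsilon + \varepsilon$ lies in $[0,1]$, where $\arcsin x \le 2x$ (e.g.\ by convexity of $\arcsin$ on $[0,1]$), hence $3R\theta \le 6R(L_{\psi,D\psi}+1)\varepsilon$ and $\abs{a-b} \le \varepsilon\big(6RL_{\psi,D\psi} + 6R + 1\big)$.

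Finally, the estimate above is symmetric in $\mathcal{A}$ and $\mathcal{B}$: the angle $\angle(T_{p_1}G_f, T_{p_1'}G_F)$ equals $\angle(T_{p_1'}G_F, T_{p_1}G_f)$ by the definition of the subspace angle, and the base-point bound $\abs{p_1 - p_1'} \le \varepsilon$ is symmetric, so the same argument bounds $d(b, \mathcal{A})$ for every $b \in \mathcal{B}$. Taking suprema over $\mathcal{A}$ and over $\mathcal{B}$ then gives $d_H(\mathcal{A},\mathcal{B}) \le \varepsilon\big(6RL_{\psi,D\psi}+6R+1\big)$. I do not expect a genuine obstacle here; the only points requiring care are invoking the subspace-angle definition so as to produce a vector $T'$ of the \emph{same} length as $T$ (so that $p_1'+T'$ remains inside the $3R$-cap $\mathcal{B}$), and the elementary inequalities $2\sin(\theta/2) \le \theta$ and $\arcsin x \le 2x$ on $[0,1]$, the latter being exactly where the hypothesis on $\varepsilon$ is used.
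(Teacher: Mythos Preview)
Your proof is correct and follows essentially the same approach as the paper's: both split $d_H(\mathcal{A},\mathcal{B})$ into the base-point displacement $|p_1-p_1'|\le\varepsilon$ and the tilt contribution bounded via Lemma~\ref{boundangleTangentSpace}, then use $\arcsin x\le 2x$ on $[0,1]$. Your version is slightly more explicit in fixing $|T'|=|T|$ and invoking the law of cosines, whereas the paper writes the chord bound $d(T,T')\le 3R\,\angle(T,T')$ directly, but the substance is identical.
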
 
	The settings of Corollary~\ref{CorHD} are illustrated in Figure~\ref{fig:cor_27}.

	At last, we bound the distance between the affine tangent space $p+\Tan_{p_1'} G_F$ and a point $q'\in\M'$ that does not lie on the graph of $p+F$. We illustrate the settings in Figure~\ref{fig:proposition}.
	\begin{figure}[h!]
		\centering
		\includegraphics[width=\textwidth]{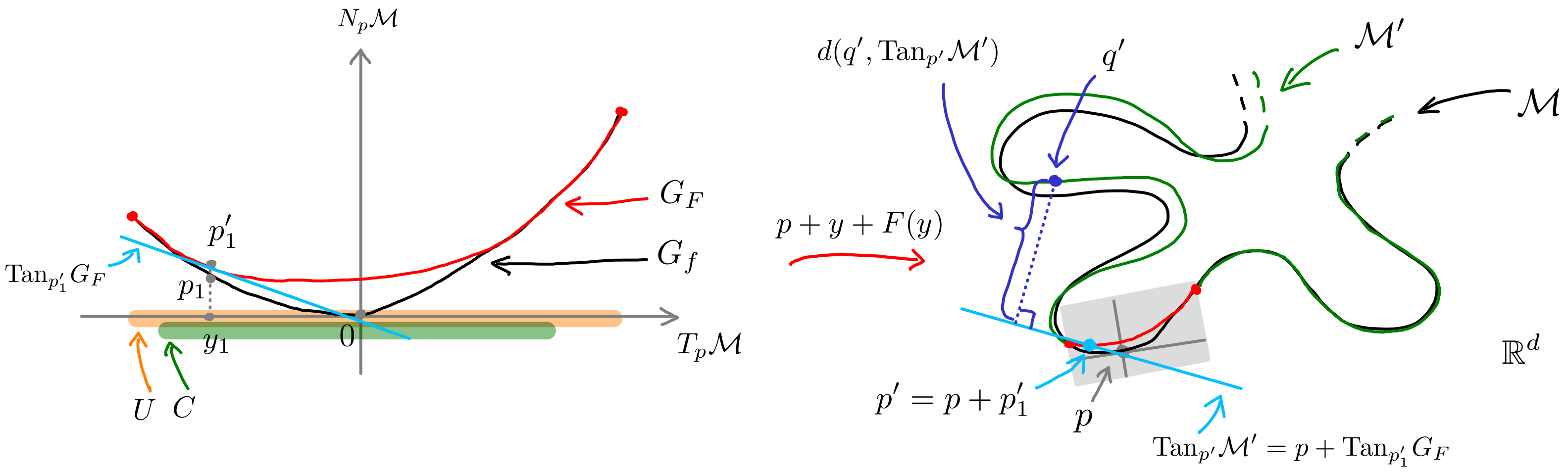}
		\caption{Illustration of the setup of Proposition~\ref{prop:pointsFarAway}.}
		\label{fig:proposition}
	\end{figure}
	\begin{proposition} \label{prop:pointsFarAway} 
		Let $q'\in\M'\backslash(p+G_F)$, and assume that $\varepsilon\leq\tfrac{1}{L_{\psi,D \psi}+1}$.
Write $p+p_1' = p'$. Then the distance between the point $q'$ and the affine tangent space $\Tan_{p'}\M' = p+\Tan_{p_1'}G_F$ is bounded by
		\begin{align}\label{eq:distance_ala_Federer_with_epsilons}
			d(q' , \Tan_{p'}\M')  \leq  \frac{|q'-p'|^2}{2R} + \frac{\varepsilon^2}{2R}   + \varepsilon\left(6R L_{\psi,D \psi}+6R+4\right).
		\end{align}
		Therefore, if $\varepsilon<R$ and $\abs{q'-p'}\geq \beta$, then 
		\[  d(q' , \Tan_{p'} \M')   \leq  \frac{|q'-p'|^2}{2R'} ,
		\] 
		with 
		\begin{align}  R'= 
\frac{R}{1+\frac{12\varepsilon R}{\beta^2}\left(R L_{\psi,D \psi}+R+1\right)}.
			\nonumber
		\end{align} 
	\end{proposition}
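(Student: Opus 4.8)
The plan is to split $\M'$ into two parts according to whether a point lies on the smoothed patch $p+G_F$ or not, and to transfer the tangent-space estimate from $\Tan_{p_1'}G_F$ (where we have good control) to $\Tan_{p_1}G_f$ (where Federer's Theorem \ref{th418Federer} applies to the original manifold $\M$ of reach $R$), paying for the transfer with the angle and Hausdorff-distance bounds of Lemma \ref{boundangleTangentSpace} and Corollary \ref{CorHD}. Concretely: since $q'\in\M'\setminus(p+G_F)$, the point $q'$ actually lies on $\M$, so we may apply Theorem \ref{th418Federer} to the pair $q', p_1$ (where $p_1 = p + (y_1,f(y_1))$, the ``un-smoothed'' foot point of $p'$) to get $d(q',\Tan_{p_1}\M)\leq |q'-p_1|^2/(2R)$. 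The goal is then to replace $\Tan_{p_1}\M$ by $\Tan_{p'}\M' = p + \Tan_{p_1'}G_F$ and to replace $p_1$ by $p'$ on the right-hand side, controlling the errors.

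First I would bound $|p'-p_1|$: this is just $|F(y_1)-f(y_1)|$, which by \eqref{eq:AssumptionSmoothing3} and the definition \eqref{eq:smoothed_mfld_function} of $F$ (it is a $\psi$-convex combination of $f$ and $\varphi_\varepsilon\ast f$) is at most $\varepsilon$. This gives $|q'-p_1| \leq |q'-p'| + \varepsilon$, hence $|q'-p_1|^2 \leq |q'-p'|^2 + 2\varepsilon|q'-p'| + \varepsilon^2$, and, using $|q'-p'|\leq 2R$ or a similar global diameter-type bound available on the patch (or absorbing the cross term into the additive error since we are in a bounded neighbourhood of $p_1$), the quadratic term contributes $|q'-p'|^2/(2R)$ plus a term of order $\varepsilon$ plus $\varepsilon^2/(2R)$. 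Second, I would compare the two affine tangent spaces: the distance from $q'$ to $\Tan_{p'}\M'$ differs from the distance to $\Tan_{p_1}G_f$ by at most the ``mismatch'' between the two affine planes near $q'$. Here one uses that $q'$ is within distance $\leq 3R$ (the relevant neighbourhood scale) of the foot points, so the relevant portions of the two planes are the sets $\mathcal{A}$ and $\mathcal{B}$ of Corollary \ref{CorHD}; the Hausdorff distance between them is $\leq \varepsilon(6RL_{\psi,D\psi}+6R+1)$, and translating $\mathcal{A}$ by the vector $p-0$ (i.e.\ accounting for $p_1$ vs.\ $p_1'$) only adds the already-counted $|p'-p_1|\leq\varepsilon$. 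Collecting all additive $\varepsilon$-terms yields the constant $6RL_{\psi,D\psi}+6R+4$ in \eqref{eq:distance_ala_Federer_with_epsilons}.

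For the second assertion, assume $\varepsilon<R$ and $|q'-p'|\geq\beta$. I would simply factor $|q'-p'|^2$ out of the right-hand side of \eqref{eq:distance_ala_Federer_with_epsilons}: the additive terms $\varepsilon^2/(2R) + \varepsilon(6RL_{\psi,D\psi}+6R+4)$ are bounded, using $\varepsilon<R$, by something of order $\varepsilon\cdot(\text{const}\cdot R\,(L_{\psi,D\psi}+1))$, and dividing by $|q'-p'|^2\geq\beta^2$ shows
\[
d(q',\Tan_{p'}\M') \leq \frac{|q'-p'|^2}{2R}\Bigl(1 + \tfrac{12\varepsilon R}{\beta^2}(RL_{\psi,D\psi}+R+1)\Bigr),
\]
which is exactly $|q'-p'|^2/(2R')$ with $R'$ as stated; a short arithmetic check that $\varepsilon^2/(2R)+\varepsilon(6RL_{\psi,D\psi}+6R+4)\leq \tfrac{1}{R}\cdot 6\varepsilon(RL_{\psi,D\psi}+R+1)$ when $\varepsilon<R$ pins down the constant $12$.

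The main obstacle I anticipate is the bookkeeping in the second step: cleanly reducing ``distance from $q'$ to one affine plane'' vs.\ ``distance to a nearby affine plane'' to a Hausdorff-distance estimate requires knowing that $q'$ (and its nearest points on both planes) actually lie within the $3R$-balls $\mathcal{A},\mathcal{B}$ of Corollary \ref{CorHD}, which in turn needs a diameter/locality bound on the patch and on how far $q'$ can be from $p'$ while still being relevant — i.e.\ one must argue that if $|q'-p'|$ is large the inequality is trivially true from Federer applied to $\M$ directly, and otherwise the $3R$-scale estimates apply. Getting the case split and the bound $|q'-p'|\leq 2R$ (or similar) stated correctly, and making sure the cross term $2\varepsilon|q'-p'|$ is genuinely absorbed into the $\mathcal O(\varepsilon)$ additive constant rather than into the $\varepsilon^2$ term, is where the argument needs care.
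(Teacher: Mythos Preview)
Your proposal is correct and follows essentially the same route as the paper: reduce to the case $|q'-p'|<2R$ (the trivial case being handled by the elementary bound $d(q',\Tan_{p'}\M')\leq |q'-p'|\leq |q'-p'|^2/(2R)$, not by Federer), apply Theorem~\ref{th418Federer} on $\M$ to the pair $(q',\,p+p_1)$, transfer to $\Tan_{p'}\M'$ via the Hausdorff bound of Corollary~\ref{CorHD}, and absorb the cross term $2\varepsilon|q'-p'|\leq 6R\varepsilon$ into the additive constant to get the $+4$. The second part is exactly as you describe, the paper merely packages the identity $\tfrac{1}{2R}(1+2R\xi)=\tfrac{1}{2R'}$ as a separate lemma.
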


	\section{Proof of the main theorem} \label{sec:proof_main_theorem}
	
	We now have all the necessary tools to prove our main result. Before proceeding, let us restate the theorem:
	\MainTheorem*
	
	\begin{proof}
		This proof brings together all results from the previous sections. Roughly speaking, we split the manifold $\M$ into neighbourhoods, which we smooth iteratively one by one. Each smoothing process decreases the reach of the manifold; after each iteration, we check that the reach has not been altered `too much', so that after visiting all the neighbourhoods, we can ensure that the reach has not been decreased by more than $\varepsilon$. Let us dive in.
		\subparagraph{Step 1.}	
		In the first step we select a sample $\P$ of the manifold $\M$ (see also Figure~\ref{fig:main_proof}, right). To this end, we choose a parameter $0<\delta\leq R/2$, and let $\P$ be a $\sqrt{\delta R}/16$-net\footnote{The factor $16$ in $\sqrt{\delta R}/16$ is almost certainly suboptimal. However, it simplifies the proof by a margin.
		} on $\M$, meaning that
		\begin{itemize}
			\item for every point $x \in \M$ one can find a point $\p \in \P$, such that $|x-\p| \leq \sqrt{\delta R}/16$, 
			\item for all $\p,\q\in\P$, $|\p-\q| \geq \sqrt{\delta R}/16$.
		\end{itemize}
		Such a net exists thanks to \cite[Lemma 5.2, Section 5.1.1]{JDbook}.
		
		Due to these properties, the $\sqrt{\delta R}/16$-balls centred at the points of the sample $\P$ cover $\M$. In addition, we consider balls of radius $\sqrt{\delta R}/2$ centred at the points of $\P$, and write $N_C$ for the maximal number of such balls that intersect in {a given ball of  a ball of radius $\sqrt{\delta R}/2$}.
		By a standard packing argument (see e.g. 
\cite[Lemma 5.3, Section 5.1.1]{JDbook}), $N_C \leq \mathcal{O}(8^d)$, and in fact $N_C \leq \mathcal{O}(8^n)$ where $n$ is the dimension of the manifold. 

		
		We need one more ingredient before we start iteratively visiting neighbourhoods of each point $\p\in\P$ --- a fixed partition of unity function, which we rescale at every iteration. To this end, we split the ball $B(0,4)\subseteq\R^n$ into three sets,
		\[
		V_1 = B(0, 1), \quad V_2 = \{x\in\R^n\mid 1\leq\abs{x}< 2\}, \quad V_3 = \{x\in\R^n\mid 2\leq\abs{x}\leq 4\}.
		\]
		The two sets $\{V_1\cup V_2, V_2\cup V_3\}$ then cover the ball $B(0,4)$, and we let $\psi_0$ be the partition of unity function corresponding to the set $V_1\cup V_2$. Then $\psi_0\equiv 1$ on $V_1$, and $\psi_0\equiv 0$ on $V_3$.	
		
		Next, we select a point $\p\in\P$, and restrict our attention to the ball $B(\p,\sqrt{\delta R}/2)$ of radius $\sqrt{\delta R}/2$ centred at $\p$.
		\subparagraph{Step 2.}	
		Let $f$ be a function whose graph describes the manifold $\M$ in the ball $B(\p,\sqrt{\delta R}/2)$. To be more concrete, choose $f:T_\p\M\cap B(0,\sqrt{\delta R}/2)\to N_\p\M $ as in Theorem~\ref{lemma:embeddedManifoldPositiveReachThenC1Embedded}. Due to this theorem and our choice of $\delta$, the derivative of $f$ is $\frac{1}{R-\delta}$-Lipschitz on the whole domain $T_\p\M\cap B(0,\sqrt{\delta R}/2)$.
		
		\begin{figure}[h!]
			\centering
			\includegraphics[width=\textwidth]{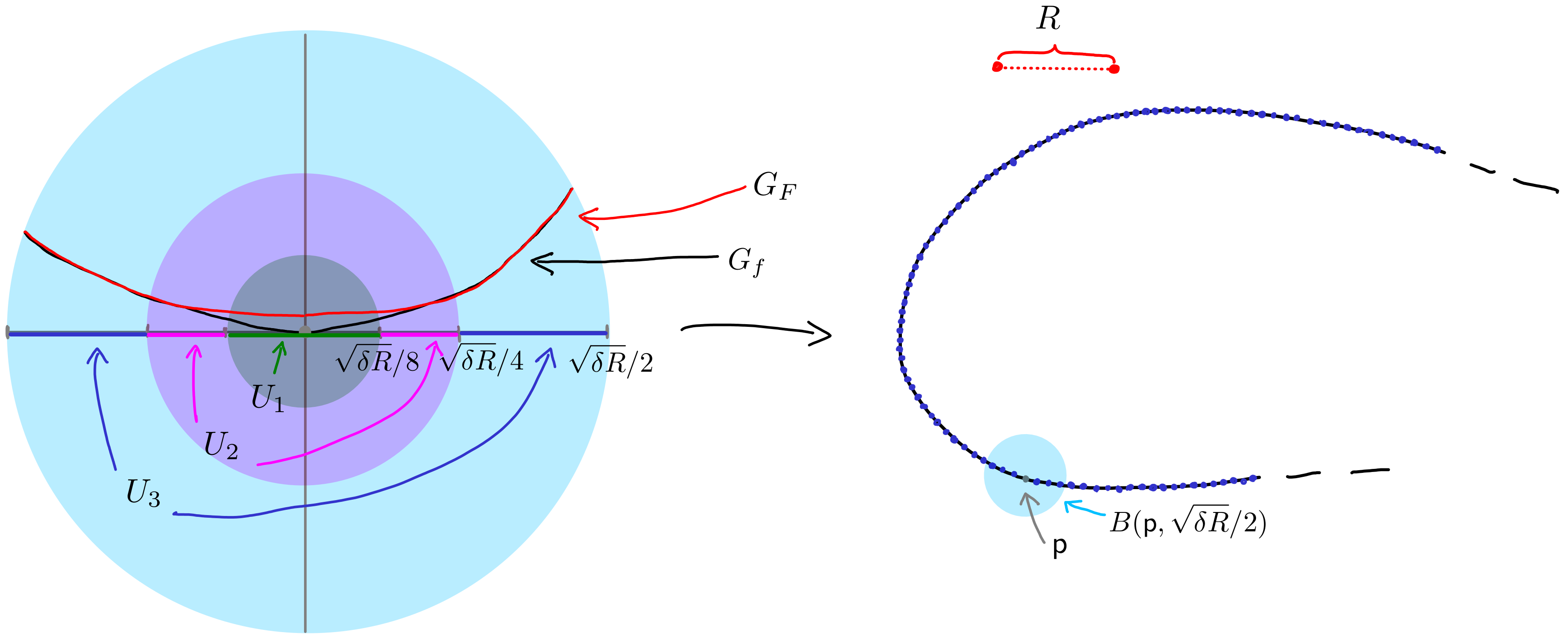}
			\caption{The point sample $\P$ at a piece of the manifold $\M$, and the neighbourhoods $U_1, U_2$, and $U_3$.}
			\label{fig:main_proof}
		\end{figure}
		
		We apply the smoothing construction described in Section~\ref{sec:interpolation_and_universal_settings}. As ingredients, we need a partition of unity function, and a smoothing kernel.
		
		\noindent\emph{Partition of unity function:} We map the neighbourhood $T_\p\M\cap B(0,\sqrt{\delta R}/2)$ diffeomorphically to the ball $B(0,4)\subseteq \R^n$ using the canonical identification of the tangent space $T_\p\M$ with $\R^n$ and the map $x\mapsto \frac{8}{\sqrt{\delta R}}\cdot x$. The preimages of the sets $V_1, V_2$, and $V_3$ under this map are, respectively,
		\begin{align*}
			&U_1 = \left\{ x\in T_\p\M \mid \abs{x}<\sqrt{\delta R}/8 \right\}, \\
			&U_2 = \left\{ x\in T_\p\M \mid \sqrt{\delta R}/8\leq \abs{x}<\sqrt{\delta R}/4 \right\}, \\
			&U_3 = \left\{ x\in T_\p\M \mid \sqrt{\delta R}/4\leq \abs{x}\leq\sqrt{\delta R}/2 \right\}.
		\end{align*}
		The sets $U_1, U_2,$ and $U_3$ are illustrated in Figure~\ref{fig:main_proof}, on the left.
		We define $\psi$ by {scaling} the partition of unity function $\psi_0$: $\psi(x) = \psi_0 \left(\frac{8x}{\sqrt{\delta R}}\right )$. Then $\psi$ is the partition of unity function corresponding to the set $U_1\cup U_2$ in the cover $\{U_1\cup U_2, U_2\cup U_3\}$ of $T_\p\M\cap B(0,\sqrt{\delta R}/2)$, and thereby, $\psi\equiv 1$ on $U_1$, and $\psi\equiv 0$ on $U_3$. Furthermore:
		

		\begin{lemma}[Folklore]\label{lemma:Lipschitz_conts_psi}
			The Lipschitz constants of $\psi$ and its derivative are bounded by
			\begin{align*}
				L_{\psi}\leq\frac{8}{\sqrt{\delta R}} L_{\psi_0} \qquad \text{and} \qquad L_{D\psi}\leq\frac{64}{\delta R} L_{D\psi_0}.
			\end{align*}
		\end{lemma}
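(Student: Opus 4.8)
The statement is a routine scaling fact: if $\psi(x) = \psi_0\bigl(\tfrac{8x}{\sqrt{\delta R}}\bigr)$, then differentiating via the chain rule multiplies the derivative by the scaling factor $\lambda := \tfrac{8}{\sqrt{\delta R}}$ each time it is applied, so $D\psi(x) = \lambda\, D\psi_0(\lambda x)$ and $D^2\psi(x) = \lambda^2 D^2\psi_0(\lambda x)$. The plan is to turn these identities into bounds on the Lipschitz constants of $\psi$ and $D\psi$.

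First I would record the chain-rule identities above explicitly, being careful with operator norms: since $D\psi(x)$ is obtained from $D\psi_0(\lambda x)$ by precomposition with the linear map $v\mapsto \lambda v$, we have $\|D\psi(x)\|_2 = \lambda\|D\psi_0(\lambda x)\|_2$, and similarly $\|D^2\psi(x)\|_2 = \lambda^2\|D^2\psi_0(\lambda x)\|_2$ (the $2$-linear map $D^2\psi(x)$ evaluates as $\lambda^2$ times $D^2\psi_0(\lambda x)$ on unit vectors, by Definition~\ref{def:norm_linear_maps}). Then, for the first bound, for any $y_1, y_2$,
\[
\abs{\psi(y_2) - \psi(y_1)} = \abs{\psi_0(\lambda y_2) - \psi_0(\lambda y_1)} \leq L_{\psi_0}\abs{\lambda y_2 - \lambda y_1} = \lambda L_{\psi_0}\abs{y_2 - y_1},
\]
so $L_\psi \leq \lambda L_{\psi_0} = \tfrac{8}{\sqrt{\delta R}}L_{\psi_0}$. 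For the second bound, using the identity $D\psi(y) = \lambda\,D\psi_0(\lambda y)$ and then the Lipschitz property~\eqref{eq:Lipschitz_cont_derivative} of $D\psi_0$,
\[
\bigl\|D\psi(y_2) - D\psi(y_1)\bigr\|_2 = \lambda\bigl\|D\psi_0(\lambda y_2) - D\psi_0(\lambda y_1)\bigr\|_2 \leq \lambda L_{D\psi_0}\abs{\lambda y_2 - \lambda y_1} = \lambda^2 L_{D\psi_0}\abs{y_2 - y_1},
\]
so $L_{D\psi} \leq \lambda^2 L_{D\psi_0} = \tfrac{64}{\delta R}L_{D\psi_0}$.

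There is no real obstacle here; the only point requiring a moment of care is that the $C^k$-type norm of Definition~\ref{def:norm_linear_maps} scales by $\lambda^r$ under the $r$-fold application of the chain rule to a dilation by $\lambda$ — but this is immediate from evaluating the multilinear maps on unit vectors and pulling the scalar $\lambda$ out of each argument. Since $\psi_0$ is a fixed $C^\infty$ compactly supported function, $L_{\psi_0}$ and $L_{D\psi_0}$ are finite constants, and the lemma follows. (One could alternatively invoke the equivalent characterization of $L_\psi$ as $\|D\psi\|_{0,C}$ and of $L_{D\psi}$ as $\|D^2\psi\|_{0,C}$ on the relevant compact set, making the scaling even more transparent, but the elementary computation above suffices.)
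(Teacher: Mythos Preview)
Your proof is correct and takes essentially the same approach as the paper: both apply the chain rule to the dilation $x\mapsto \lambda x$ with $\lambda=\tfrac{8}{\sqrt{\delta R}}$ and read off the Lipschitz bounds, the only cosmetic difference being that the paper phrases the second estimate via directional derivatives $\partial_u\psi$ whereas you work directly with the operator $2$-norm of $D\psi$.
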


		By decreasing $\delta$ if necessary, we can also assume that $\frac{8}{\sqrt{\delta R}} L_{\psi_0}\leq \frac{64}{\delta R} L_{D\psi_0}$. Thus:
		\begin{align}\label{eq:Lipschitz_const_final_psi}
			L_{\psi, D\psi} = \max \{L_\psi, L_{D\psi} \}\leq \frac{64}{\delta R} L_{D\psi_0}.
		\end{align}

		\noindent\emph{Smoothing kernel:} Let $C = \overline{\supp \psi}$. For $0<\rho<\frac{1}{1+\frac{64}{\delta R} L_{D\psi_0}}$ we choose a smoothing kernel $\varphi_\rho:T_\p\M\cap B(0,\sqrt{\delta R}/2)\to \R$  such that
		\begin{align} 
			\| \varphi_{\rho}  \ast f - f \| _{1, C}  \leq \rho.
			\tag{\ref{eq:AssumptionSmoothing1}}
		\end{align} 
		Such kernel exists due to Theorem \ref{Th:Hirsch23c}.
		With these choices we define 
		\begin{align}
			F:T_\p\M\cap B(0,\sqrt{\delta R}/2)\to N_\p\M, \qquad x\mapsto F(x)= (1- \psi(x))  f (x) + \psi (x)(\varphi_\rho \ast f(x)).\tag{\ref{eq:smoothed_mfld_function}}
		\end{align}

		\subparagraph{Step 3.}	
		We write $\M'$ for the manifold that coincides with $\M$ outside of the ball of radius $\sqrt{\delta R}/2$ centred at the point $p$, and is the graph of $\p+F$ inside this ball.
		
		To be more concrete, we recall that we can write each point $p\in\M\cap B(\p, \sqrt{\delta R}/2)$ as $p = \p + y + f(y)$, with $y\in T_\p\M$. With this in mind we define, for each $p\in\M$:
		\begin{align*}
			p' \defunder{=}  \begin{cases}
				\p + y + F(y), & \text{if } p\in B(\p, \sqrt{\delta R}/2),\\
				p, & \text{otherwise},
			\end{cases}
		\end{align*}
	and let $\M' = \{p'\mid p\in\M\}$.
%
		We stress that
		\begin{itemize}
			\item inside the ball $B(\p, \sqrt{\delta R}/16)$, $\M'$ is smooth ($C^{\infty}$), and
			\item $\M'=\M$ not only outside of the ball $B(\p, \sqrt{\delta R}/2)$, but already outside of the ball $B(\p, \sqrt{\delta R}/4)$.
		\end{itemize} 
		
		Next, we set forth to bound the reach of $\M'$. 
		We choose two points $p',q'\in\M'$, and estimate the distance between $q'$ and the affine tangent space $\Tan_{p'}\M'$ --- our bounds from Lemma~\ref{lem:ErrorLipschitzDer} and Proposition~\ref{prop:pointsFarAway} yield:
		\begin{lemma}\label{lemma:final_approximation_distance_between_q'_and_tangent_space_of_p'}
			Let $p',q'\in\M'$ be two points in the manifold~$\M'$. Then the distance between $q'$ and the affine tangent space $\Tan_{p'}\M'$ is bounded by
			\begin{align*}
				d(q', \Tan_{p'}\M')\leq \frac{|p'-q'|^2}{2R'},
			\end{align*}
		with 
		\begin{align}\label{eq:final_approximation_distance_between_q'_and_tangent_space_of_p'}
			R' = \frac{R}{\max\left\{3\cdot 64 L_{D\psi_0}\tfrac{\rho}{\delta} + \tfrac{1}{1-\tfrac{\delta}{R}}, 1+ 3\cdot 64\tfrac{\rho}{\delta}\left(\tfrac{64}{\delta}L_{D\psi_0} + R+ 1\right) \right\}}.  
		\end{align}
		\end{lemma}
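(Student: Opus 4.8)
The plan is to read off the estimate by feeding the quantitative choices of Step~2 into the two distance bounds of Section~\ref{sec:bounds_on_tangent_spaces}, namely Lemma~\ref{lem:ErrorLipschitzDer} and Proposition~\ref{prop:pointsFarAway}, supplemented by Federer's criterion (Theorem~\ref{th418Federer}) applied to $\M$ itself on the part of $\M'$ left untouched by the surgery. Fix an ordered pair $p',q'\in\M'$; we must bound $d(q',\Tan_{p'}\M')$, and by Theorem~\ref{th418Federer} this has to be done for both orders of the pair.

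First I set up the dictionary between the abstract constants of Section~\ref{sec:bounds_on_tangent_spaces} and the present situation: the smoothing parameter there called $\varepsilon$ is now $\rho$; by part~2 of Theorem~\ref{lemma:embeddedManifoldPositiveReachThenC1Embedded} and the choice $\delta\le R/2$, the derivative $Df$ is $L_{Df}$-Lipschitz on the whole domain with $L_{Df}=\tfrac{1}{R-\delta}$; and $L_{\psi,D\psi}\le\tfrac{64}{\delta R}L_{D\psi_0}$ by~\eqref{eq:Lipschitz_const_final_psi}. The standing hypothesis $\rho<\bigl(1+\tfrac{64}{\delta R}L_{D\psi_0}\bigr)^{-1}$ forces $\rho\le\bigl(1+L_{\psi,D\psi}\bigr)^{-1}$, which is precisely the smallness condition needed by Proposition~\ref{prop:pointsFarAway}. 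I also record two elementary geometric observations: (i) the genuinely modified part of $\M'$, the set $\{\p+y+F(y):\psi(y)>0\}$, lies within distance $\sqrt{\delta R}/4+\eta$ of $\p$, where $\eta$ is a negligible contribution from the normal component controlled by $|f(y)|\le\tfrac{|y|^2}{2(R-\delta)}$ and $|F(y)-f(y)|\le\rho$; and (ii) any point of $\M'$ not lying on the graph $\p+G_F$ is an unchanged point of $\M$, so there $\Tan_{p'}\M'=\Tan_{p'}\M$, and it sits at distance at least $\sqrt{\delta R}/2$ from $\p$.

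Now distinguish cases. If both $p'$ and $q'$ lie on the graph $\p+G_F$, Lemma~\ref{lem:ErrorLipschitzDer} gives $d(q',\Tan_{p'}\M')\le\tfrac12\bigl(3L_{\psi,D\psi}\rho+L_{Df}\bigr)|q'-p'|^2$; inserting the dictionary and factoring out $\tfrac1R$ rewrites the constant as $\tfrac1R\bigl(3\cdot 64\,L_{D\psi_0}\tfrac{\rho}{\delta}+\tfrac{1}{1-\tfrac{\delta}{R}}\bigr)$, which is exactly the first entry of the maximum. (Where $\psi$ vanishes one has $F=f$, so this also covers graph points over $U\setminus\supp\psi$; the Lipschitz bound on $DF$ from Lemma~\ref{lem:SmoothingWithPartition} propagates from $C$ to the full domain by a segment argument.) If instead $q'$ is a strictly exterior point, so $q'\notin\p+G_F$ and $|q'-\p|\ge\sqrt{\delta R}/2$, while $p'$ lies in the modified region, so $|p'-\p|\le\sqrt{\delta R}/4+\eta$, then $|q'-p'|\ge\beta:=\sqrt{\delta R}/4$ and Proposition~\ref{prop:pointsFarAway} yields $d(q',\Tan_{p'}\M')\le\tfrac{|q'-p'|^2}{2R'}$ with $R'=R/\bigl(1+\tfrac{12\rho R}{\beta^2}(RL_{\psi,D\psi}+R+1)\bigr)$; since $\tfrac{12\rho R}{\beta^2}=\tfrac{192\rho}{\delta}$ and $RL_{\psi,D\psi}\le\tfrac{64}{\delta}L_{D\psi_0}$, this is the second entry of the maximum. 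In the mirrored configuration ($p'$ strictly exterior, $q'$ in the modified region) one first replaces $q'$ by its $\M$-counterpart $q_\M$ at cost $|q'-q_\M|\le\rho$ and applies Theorem~\ref{th418Federer} to $\M$; as $|q'-p'|\ge\beta$ again, the $\rho$-error is absorbed into the same second entry. Every remaining configuration has $p'$ and $q'$ (or $q_\M$) lying on $\M$ with the relevant tangent space unchanged, so Theorem~\ref{th418Federer} for $\M$ gives directly $d(q',\Tan_{p'}\M')\le\tfrac{|q'-p'|^2}{2R}$, and $R'\le R$ makes this harmless. Taking the worst of the resulting reaches produces~\eqref{eq:final_approximation_distance_between_q'_and_tangent_space_of_p'}.

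The crux is the bookkeeping in observations~(i)--(ii) and, above all, verifying that whenever $p'$ and $q'$ straddle the surgery — one in the genuinely smoothed region, the other an untouched point of $\M$ — their Euclidean distance is indeed bounded below by $\beta=\sqrt{\delta R}/4$, so that the additive errors from Proposition~\ref{prop:pointsFarAway} and from the replacement of $q'$ by $q_\M$ collapse into a reach deficit of the clean shape $R/(1+192\tfrac{\rho}{\delta}(\cdots))$. This is precisely where the explicit radii $\sqrt{\delta R}/8$, $\sqrt{\delta R}/4$, $\sqrt{\delta R}/2$ defining $U_1,U_2,U_3$, the quadratic bound $|f(y)|\le\tfrac{|y|^2}{2(R-\delta)}$ on the normal excursion of $f$, and the hypothesis $\rho<\bigl(1+\tfrac{64}{\delta R}L_{D\psi_0}\bigr)^{-1}$ all enter; the rest is substitution into results already in hand.
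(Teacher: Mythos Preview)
Your proposal is correct and follows essentially the same route as the paper: the same three tools (Lemma~\ref{lem:ErrorLipschitzDer}, Proposition~\ref{prop:pointsFarAway} with $\beta=\sqrt{\delta R}/4$, and Federer's Theorem~\ref{th418Federer} on the untouched part), the same dictionary $\varepsilon\leadsto\rho$, $L_{Df}=\tfrac{1}{R-\delta}$, $L_{\psi,D\psi}\le\tfrac{64}{\delta R}L_{D\psi_0}$, and the same observation that the mirror-case contribution is dominated by the second entry of the maximum. The only cosmetic difference is the case split: the paper partitions by membership in the nested balls $B(\p,\sqrt{\delta R}/4)\subset B(\p,\sqrt{\delta R}/2)$, which yields the lower bound $|q'-p'|\ge\sqrt{\delta R}/4$ cleanly without your $\eta$ fudge, and in the mirror case derives the explicit intermediate bound $1+32\rho/\delta$ before absorbing it, whereas you argue the absorption directly.
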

		Thus, due to Theorem~\ref{th418Federer}, the reach of the manifold $\M'$ satisfies $\rch(\M')\geq R',$ with $R'$ as in equation~\eqref{eq:final_approximation_distance_between_q'_and_tangent_space_of_p'}.
		
		\subparagraph{Step 4.}	
		The final step is again subdivided into two stages: First, we observe that \eqref{eq:final_approximation_distance_between_q'_and_tangent_space_of_p'} can be made as close to $R$ as needed. In the second stage we exploit the fact that the definition of the reach is local in $\M \times \M$, as is clear from Theorem \ref{th418Federer}, that is we consider only neighbourhoods of $p$ and $q$.

		For the given $\varepsilon>0$, we now choose $\rho$ and $\delta$ such that $R'$ from equation~\eqref{eq:final_approximation_distance_between_q'_and_tangent_space_of_p'} satisfies $| R -R'  |  \leq \varepsilon/ (2 N_C)$.
		This is always possible. One can set $\delta =  \varepsilon/ (4 N_C)$; the choice for $\rho$ is a bit more subtle but choosing $\rho = \mathcal{O} (\delta^4)$ works if $\varepsilon= \mathcal{O} (\delta)$ is sufficiently small.

	For a given $p, q \in \M$ we write $\M_{p,q}$ for the manifold that is smoothed in the neighbourhoods of $p$ and $q$. That is, we choose the points $\p_0,\mathsf{q}_0\in \P$ that are the closest to $p$ and $q$, respectively. 
	We then consider all points $\p'\in \P$ for which the ball $B(\p', \sqrt{\delta R}/2)$ intersects either the ball $B(\p_0, \sqrt{\delta R}/2)$ or the ball $B(\mathsf{q}_0, \sqrt{\delta R}/2)$. We call the set of such points $\P'\subseteq \P$. 

	There are at most $2N_C$ points in $\P'$. We stress that $N_C$ is a constant, depending only on the dimension of the manifold $\M$. Next, we apply our smoothing construction to all points in $\P'$, and call the resulting manifold $\M'_{p,q}$. 
	Formally speaking, the points in $\P'$ may shift `a little' by the construction. However, we can assume without loss of generality (by choosing $\varepsilon$ small enough) that this shift is never more than $\sqrt{\delta R}/32$. 
This means that the covering by disks of radius $\sqrt{\delta R} / 8$ (centred at the sample points of $\P$) of the manifold is preserved after the shift, in particular every point in the manifold is smoothed. 	

For all points of $ \M'_{p,q}$ and in particular for the images of the $p,q$ we started out with, denoted $p',q' \in \M'_{p,q}$, we have that 
			\begin{align}
				d(q', \Tan_{p'}\M'_{p,q})\leq \frac{|p'-q'|^2}{2R'},
				\label{eq:COnclusionIteration} 
			\end{align}
with $| R -R'  |  \leq \varepsilon$. 
	

Now note that when we perform our construction iteratively on all points in $\P$ and call the result $\M'_{\textrm{final}}$, the neighbourhoods of the points $p'$ and $q'$ no longer change. This means that \eqref{eq:COnclusionIteration} holds for $\M'_{\textrm{final}}$ as well. And because $p,q$ were chosen arbitrarily, it holds for any pair of $p',q'\in \M'_{\textrm{final}}$. In other words, the manifold $\M'_{\textrm{final}}$ has reach at least $R-\varepsilon$.
%
		%
		%
		%
Moreover, the distance between the manifold before and after the smoothing is governed by the distance between the graphs of the functions $f$ and $F$, which, due to inequality~\eqref{eq:AssumptionSmoothing1}, is bound by $\varepsilon/N_C$.
	\end{proof}

	\bibliography{geomrefs}
	
	\appendix
	\section{Proofs} \label{sec:proofs}
	\begin{proof}[Proof of Lemma~\ref{lemma:norm_of_smoothing_depends_on_Lipschitz_const_only}] We have that, 
	{
\allowdisplaybreaks
		\begin{align}   
			\| \varphi \ast f - f \| _{0, C} &= \sup_{x\in U_\sigma}\left | \int_{\mathbb{R}^n} \varphi (y) f(x-y) \ud y - f (x) \right | 
			\nonumber 
			\\ 
			&= \sup_{x\in U_\sigma}\left | \int_{\mathbb{R}^n} \varphi (y) f(x-y) -  \varphi (y) f (x)  \ud y  \right | 
			\tag{because $\int \varphi = 1$} 
			\\ 
			& \leq \sup_{x\in U_\sigma} \int_{\mathbb{R}^n} \left | \varphi (y) f(x-y) -  \varphi (y) f (x)  \right |  \ud y  
			\tag{by the triangle inequality for integrals}
			\\
			& \leq \sup_{x\in U_\sigma} \int_{\mathbb{R}^n}  \varphi (y)  \left |f(x-y) -  f (x)  \right |  \ud y  
			\nonumber
			\\
			&\leq  L \sigma ,
			\tag{because $ | f(x-y) -f(x)| \leq L\sigma $ and  $\int \varphi = 1$}
		\end{align}  }
	\end{proof}
	
	
	\begin{proof}[Proof of Lemma \ref{lem:SmoothingConservesLipschitzConstants}] 
		By definition of a smoothing kernel, $\int  \varphi(z) =1$. Therefore,
{
\allowdisplaybreaks
		\begin{align} 
			\left |\varphi \ast g (y_2) -  \varphi \ast g(y_1) \right| &= \left| \int_z \varphi (z) g(y_2-z) \ud z  -  \int_z \varphi (z) g(y_1-z) \ud z \right | 
			\nonumber 
			\\
			&=  \left | \int_z \varphi (z) (g(y_2-z)  - g(y_1-z)) \ud z \right | 
			\nonumber
			\\
			&\leq   \int_z \varphi (z) |(g(y_2-z)  - g(y_1-z)) |\ud z  
			\tag{because $\varphi$ is non-negative}
			\\
			& \leq  L | y_2-y_1| \int _z \varphi (z) \ud z
			\tag{by the definition of the Lipschitz constant} 
			\\
			& = L  | y_2-y_1|.  
			\tag{because $\int  \varphi(z) =1$} 
		\end{align}}
	\end{proof} 
	

	\begin{proof}[Proof of Lemma \ref{lem:smoothingPreservesLipschitzConstantDerivative}] 
		Let $v\in \mathbb{R}^{d-n}$ and $w\in \mathbb{R}^n$ be two vectors with $|v|=|w|=1$.

		Because for every pair of points $y_1,y_2\in\R^n$ holds $\big\|Dg(y_2)  - Dg(y_1)  \big\|_{2}  \leq L | y_2 - y_1 |$, the function $y \mapsto \langle v , D(g(y))(w) \rangle$ is $L$-Lipschitz. In other words,
		\[\langle v , D(g(y_2) )(w) \rangle   -\langle v ,   D(g(y_1))(w)  \rangle = \langle v , (D(g(y_2))-D(g(y_1)))(w) \rangle  \leq L |y_2 - y_1| .\]
		Therefore,
		\begin{align} 
			\langle v ,D (\varphi \ast g(y_2) )(w) \rangle   -\langle v , D (\varphi \ast g(y_1))(w)  \rangle  &= \langle v , \varphi \ast D(g(y_2) )(w) \rangle   -\langle v , \varphi \ast  D(g(y_1))(w)  \rangle 
			\tag{because $g$ is $C^1$}  
			\\
			&=  \varphi \ast \langle v , D(g(y_2) )(w) \rangle   - \varphi \ast \langle v ,   D(g(y_1))(w)  \rangle 
			\tag{by exchanging the order of the integral and the inner product}
			\\ 
			&\leq  L |y_2 - y_1|  .
			\tag{by Lemma \ref{lem:SmoothingConservesLipschitzConstants}} 
		\end{align} 
		Thus, by definition of the operator norm,
		\[ \big\|D (\varphi \ast g(y_2))  - D(\varphi \ast g(y_1))   \big\|_{2}  \leq L | y_2 - y_1 |.\]
	\end{proof} 
	
	\begin{proof}[Proof of Lemma \ref{Lem:LipschitzWithoutDer}]
		For $y_1,y_2\in\R^n$,
		
\begingroup
\allowdisplaybreaks
		\begin{align}
			| F(\yx) & -F(\yy) | 
			\nonumber 
			\\&= \big| (1- \psi(\yx) )  f (\yx) + \psi (\yx)(\varphi_{\varepsilon} \ast f(\yx)) - (1- \psi(\yy) )  f (\yy) - \psi (\yy)(\varphi_{\varepsilon} \ast f(\yy)) \big| 
			\nonumber 
			\\
			&= 
			\big |  \psi (\yx) \big(\varphi_{\varepsilon} \ast f(\yx)- f (\yx)\big)  - \psi (\yy) \big(\varphi_{\varepsilon} \ast f(\yy) -  f (\yy) \big) + f (\yx)- f (\yy) \big| 
			\tag{by shuffling terms}
			\\
			&= 
			\big |  \psi (\yx) \big(\varphi_{\varepsilon} \ast f(\yx)- f (\yx)\big) + f (\yx)- f (\yy) 
			\nonumber
			\\ 
			& \phantom{=} - \psi (\yy) \big(\varphi_{\varepsilon} \ast f(\yy) -  f (\yy) -\varphi_{\varepsilon} \ast f(\yx)+ f (\yx)+ \varphi_{\varepsilon} \ast f(\yx)- f (\yx)\big)  \big| 
			\tag{by adding $0= \psi(y_1)(-\varphi_{\varepsilon} \ast f(\yx)+ f (\yx)+ \varphi_{\varepsilon} \ast f(\yx)- f (\yx))$ }
			\nonumber 
			\\
			&= 
			\big |  (\psi (\yx)- \psi (\yy)) \big(\varphi_{\varepsilon} \ast f(\yx)- f (\yx)\big) + f (\yx)- f (\yy) 
			\nonumber
			\\ 
			& \phantom{=} - \psi (\yy) \big(\varphi_{\varepsilon} \ast f(\yy) -  f (\yy) -\varphi_{\varepsilon} \ast f(\yx)+ f (\yx)\big)  \big| 
			\nonumber 
			\\
			&= 
			\big |  (\psi (\yx)- \psi (\yy)) \big(\varphi_{\varepsilon} \ast f(\yx)- f (\yx)\big) + (1-\psi(\yy) )(f (\yx)- f (\yy) )
			\nonumber
			\\ 
			& \phantom{=} - \psi (\yy) \big(\varphi_{\varepsilon} \ast f(\yy)  -\varphi_{\varepsilon} \ast f(\yx) \big)  \big| 
			\nonumber
			\\
			& \leq 
			\big |  (\psi (\yx)- \psi (\yy)) \big(\varphi_{\varepsilon} \ast f(\yx)- f (\yx)\big)\big |  + \big | (1-\psi(\yy) )(f (\yx)- f (\yy) )\big | 
			\nonumber
			\\ 
			& \phantom{=}+ \big |  \psi (\yy) \big(\varphi_{\varepsilon} \ast f(\yy)  -\varphi_{\varepsilon} \ast f(\yx) \big)  \big| 
			\tag{by the triangle inequality}
			\\
			& = 
			\big |  \psi (\yx)- \psi (\yy) \big |  \cdot \big | \varphi_{\varepsilon} \ast f(\yx)- f (\yx)\big |  + \big | 1-\psi(\yy) \big | \cdot \big |f (\yx)- f (\yy) \big | 
			\nonumber
			\\ 
			& \phantom{=}+ \big |  \psi (\yy) \big | \cdot  \big |   \varphi_{\varepsilon} \ast f(\yy)  -\varphi_{\varepsilon} \ast f(\yx)   \big| 
			\tag{because $\psi$ is a scalar}
			\\
			& \leq
			L_\psi |\yx-\yy|   \cdot \varepsilon  + \big | 1-\psi(\yy) \big | \cdot L |\yx-\yy| 
			\tag{by the Lipschitz assumptions and because $\| \varphi \ast f - f \| _{k, C} \leq \varepsilon$}
			\\ 
			& \phantom{=}+ \big |  \psi (\yy) \big | \cdot L |\yx-\yy|
			\tag{by Lemma \ref{lem:SmoothingConservesLipschitzConstants}}
			\\
			&= \varepsilon L_\psi    |\yx-\yy|  +  L |\yx-\yy| .
			\tag{because $\psi (\yy)\in [0,1]$}
		\end{align}
		\endgroup
	\end{proof}
	
	\begin{proof}[Proof of Lemma~\ref{lem:SmoothingWithPartition}]
		Choose $y_1,y_2\in C$. We first consider 
		\[g(x) =  \varphi_\varepsilon \ast f(x) - f(x).\]
		With $v=\frac{\yy-\yx}{|\yy-\yx|}$, the line segment from $y_2$ to $y_1$ is parameterized as
		\[\gamma:[0,\abs{y_2-y_1}]\to \R^n, \qquad \gamma(t)= y_2+t\cdot v.\]
		
		Then, due to the fundamental theorem of calculus,
				\begingroup
\allowdisplaybreaks
		\begin{align}
			\abs{g(y_2)-g(y_1)}&= \abs{\int_{0}^{|\yx-\yy|} \tfrac{\ud}{\ud t} g(\gamma(t)) \ud t}
			\nonumber
			\\&= \abs{\int_{0}^{|\yx-\yy|} \left((D(\varphi \ast f) - Df) \left(\yx+ v t\right)\right) \left( v \right)\ud t}
			\nonumber
			\\&= \abs{\int_{0}^{|\yx-\yy|} \left(((\varphi \ast Df) - Df) \left(\yx+ v t\right)\right) \left( v \right)\ud t}
			\tag{by Theorem \ref{Th23abHirsch}}
			\\&\leq\int_{0}^{|\yx-\yy|} \abs{\left(((\varphi \ast Df) - Df) \left(\yx+ v t\right)\right) \left( v \right)}\ud t 
			\tag{by the triangle inequality for integrals} 
			\\&\leq \int_{0}^{|\yx-\yy|} \varepsilon |v| \ud t
			\tag{by \eqref{eq:AssumptionSmoothing2}}
			\\&= \varepsilon |\yx-\yy|. 	\label{DifferenceSmoothings}
		\end{align}
		\endgroup
		
		With this, we bound the Lipschitz constant of the derivative $DF$:
		\begingroup
\allowdisplaybreaks
		\begin{align} 
			&\big\|DF(y_2)  - DF(y_1)  \big\|_{2} 
			\nonumber
			\\
			&= \big\| D \left( (1- \psi(y_2) )  f (y_2)+ \psi (y_2) (\varphi_{\varepsilon} \ast f (y_2)) \right) 
			\nonumber
			\\ 
			&\phantom{=\big\| } -
			D \left( (1- \psi(y_1) )  f (y_1)+ \psi (y_1) (\varphi_{\varepsilon} \ast f (y_1)) \right) \big\|_{2}  
			\nonumber 
			\\
			&= \big\| 
			D(\psi(y_2)) ( \underbrace{\varphi_{\varepsilon} \ast f (y_2)  -  f (y_2)}_{g(y_2)} )- D(\psi(y_1))  ( \underbrace{\varphi_{\varepsilon} \ast f (y_1) -f (y_1)}_{g(y_1)} ) 
			\nonumber 
			\\ & \phantom{= \big\| \,} 
			+   \psi(y_2)   (\underbrace{\varphi_{\varepsilon} \ast Df (y_2)   -D f (y_2)}_{Dg(y_2)})-  \psi(y_1) (\underbrace{\varphi_{\varepsilon} \ast Df (y_1)   - D f (y_1)}_{Dg(y_1)})
			\nonumber 
			\\ & \phantom{= \big\| \,} 
			+   D f (y_2)-  D f (y_1)
			\big\|_{2}  
			\tag{by Theorem \ref{Th23abHirsch} and reshuffling} 
			\\
			&= \big\| 
			D\psi(y_2)g(y_2)
			- D\psi(y_1)  ( g(y_1)  + g(y_2)-g(y_2) ) 
			\tag{by inserting $0= g(y_2)-g(y_2)$}
			\\ & \phantom{= \big\| \,} 
			+   \psi(y_2)  Dg(y_2)
			-  \psi(y_1) (  Dg(y_1) +Dg(y_2)-Dg(y_2))
			\tag{by inserting $0=Dg(y_2)-Dg(y_2)$}
			\\ & \phantom{= \big\| \,} 
			+   D f (y_2)-  D f (y_1)
			\big\|_{2}  
			\nonumber
			\\
			&= \big\| 
			(D\psi(y_2)- D\psi(y_1) ) g(y_2)
			- D\psi(y_1)  ( g(y_1) -g(y_2) ) 
			\nonumber
			\\ & \phantom{= \big\| \,} 
			+   (\psi(y_2)-   \psi(y_1))  Dg(y_2)
			-  \psi(y_1) (  \underbrace{Dg(y_1)}_{\varphi_{\varepsilon} \ast Df (y_1)   - D f (y_1)} -\underbrace{Dg(y_2)}_{\varphi_{\varepsilon} \ast Df (y_2)   - D f (y_2)})
			\nonumber
			\\ & \phantom{= \big\| \,} 
			+   D f (y_2)-  D f (y_1)
			\big\|_{2}  
			\nonumber
			\\
			&\leq  \big\| 
			(D\psi(y_2)- D\psi(y_1))\big\|_{2}\cdot\abs{g(y_2)}  
			+\big\|  D(\psi(y_1))\big\|_{2} \cdot \abs{g(y_1)-g(y_2)}   
			\nonumber
			\\ & \phantom{= \big\| \,} 
			+  \abs{\psi(y_2)-   \psi(y_1)}\cdot\big\|Dg(y_2)\big\|_{2}  
			+   \psi(y_1) \cdot\big\|  \varphi_{\varepsilon} \ast Df (y_1)    -\varphi_{\varepsilon} \ast Df (y_2)  \big\|_{2}  
			\nonumber
			\\ & \phantom{= \big\| \,} 
			+  (1- \psi(y_1)) \cdot \big\| D f (y_2)-  D f (y_1)
			\big\|_{2}  
			\tag{by the triangle inequality}
			\\
			&\leq  L_{\psi,D \psi} |y_2-y_1| \cdot \varepsilon  
			\tag{by \eqref{eq:Lipschitz_cont_derivative}, \eqref{eq:LipschitzpsiDpsi} and \eqref{eq:AssumptionSmoothing3}} 
			\\ 
			& \phantom{= \big\| \,}
			+L_{\psi,D \psi} \cdot\varepsilon |y_2-y_1|  
			\tag{by \eqref{eq:boundDpsi} and \eqref{DifferenceSmoothings}}
			\\ & \phantom{= \big\| \,} 
			+ L_{\psi,D \psi} |y_2-y_1|\cdot \varepsilon    
			\tag{by \eqref{eq:Lipschitz_cont}, \eqref{eq:LipschitzpsiDpsi} and \eqref{eq:AssumptionSmoothing2} } 
			\\ 
			& \phantom{= \big\| \,}
			+ \psi (y_1) \cdot L_{Df} |y_2-y_1| 
			\tag{because $\psi(y_1) \in [0,1]$ and by Lemma \ref{lem:smoothingPreservesLipschitzConstantDerivative}}
			\\ & \phantom{= \big\| \,} 
			+  (1-\psi  (y_1)) \cdot  L_{Df} |y_2-y_1|  
			\tag{because $\psi(y_1) \in [0,1]$ and by \eqref{eq:Lipschitz_cont_derivative}}
			\\
			&= (3 L_{\psi,D \psi} \varepsilon   +L_{Df}) \: |\yx-\yy| .
			\nonumber
		\end{align} 
		\endgroup
	\end{proof}

	\begin{proof}[Proof of Lemma~\ref{lem:ErrorLipschitzDer}]
		
		\begin{figure}[h!]
			\centering
			\includegraphics[width=0.6\textwidth]{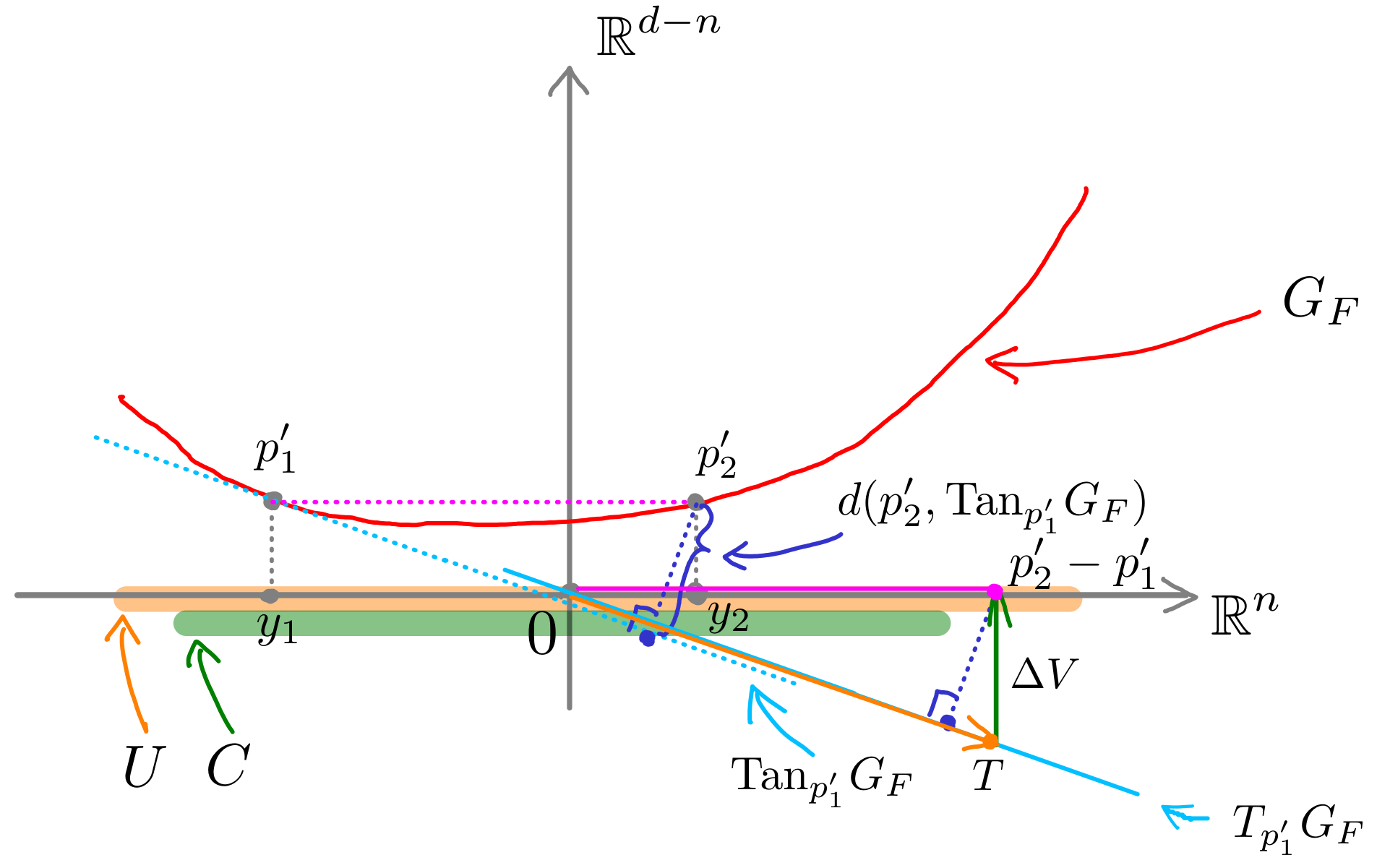}
			\caption{Notation used in the proof of Lemma~\ref{lem:ErrorLipschitzDer}.}.
			\label{fig:lemma_24_proof}
		\end{figure}
		
		We first split the vector $p_2'-p_1'$ into a sum of two vectors $T$ and $\Delta V$, such that $T\in T_{p_1'}G_F$ and $\Delta V\in 0\times \R^{d-n}$. 
		To this end, parameterize the line segment connecting $y_2$ and $y_1$ by 
		$t \in [0,1], t\mapsto y_1 + (y_2-y_1) t$.
		By the fundamental theorem of calculus,
		\begin{align} 
			&p_2'-p_1' = \begin{pmatrix} 
				y_2 \\ F(y_2)
			\end{pmatrix} 
			-
			\begin{pmatrix} 
				y_1 \\ F(y_1)
			\end{pmatrix}
			= \int_0^1 
			\begin{pmatrix} 
				y_2- y_1 \\ DF( y_1 + (y_2-y_1) t ) (y_2-y_1)
			\end{pmatrix}
			\ud t
			\nonumber 
			\\ 
			&= \int_0^1 
			\begin{pmatrix} 
				y_2- y_1 \\ DF( y_1 ) (y_2-y_1) + DF( y_1 + (y_2-y_1) t ) (y_2-y_1) - DF( y_1 ) (y_2-y_1) 
			\end{pmatrix}
			\ud t
			\tag{by adding $0= DF(y_1) (y_2-y_1) -DF(y_1) (y_2-y_1)$}
			\\ 
			&= \begin{pmatrix} 
				y_2- y_1 \\ DF( y_1 ) (y_2-y_1) 
			\end{pmatrix} 
			+ \int_0^1  
			\begin{pmatrix}
				0 \\  \left(DF( y_1 + (y_2-y_1) t )- DF( y_1 )\right) (y_2-y_1) 
			\end{pmatrix} \ud t.
			\nonumber
		\end{align} 
		We set 
		\begin{align*}
			&T= \begin{pmatrix} 
				y_2- y_1 \\ DF( y_1 ) (y_2-y_1) 
			\end{pmatrix}, &\Delta V= 
			\int_0^1
			\begin{pmatrix}
				0 \\ \left(DF( y_1 + (y_2-y_1) t )- DF( y_1 )\right) (y_2-y_1) 
			\end{pmatrix} \ud t.
		\end{align*}
		As one can see from Figure~\ref{fig:lemma_24_proof}, the distance $d(p_2', \Tan_{p_1'}G_F) = d(p_2'-p_1', T_{p_1'}G_F) $ is the height at vertex $p_2'-p_1'$ of the triangle with vertices $0, T$, and $p_2'-p_1'=T+\Delta V$, and thus
		\begin{align} 
			d(p_2', \Tan_{p_1'}G_F)  \leq |\Delta V|.
			\label{eqDistTanS1} 
		\end{align} 
		Using the Lipschitz assumption on the derivative of $F$ and the triangle inequality for integrals, we further estimate
		\begin{align} \label{eq:delta_V_estimate}
			|\Delta V| \leq  
			\int_0^1 L_{DF} |y_2-y_1|^2 t \: \ud t  =\tfrac{1}{2} L_{DF} |y_2-y_1|^2 .
		\end{align} 
		Finally, by the triangle inequality,
		\begin{align} 
			|y_2-y_1| \leq  |(y_2, F(y_2))-(y_1, F(y_1))| = \abs{p_2'-p_1'}.
			\label{TriangInequal} 
		\end{align} 
		Combining \eqref{eqDistTanS1}, \eqref{eq:delta_V_estimate}, \eqref{TriangInequal}, and the bound on the Lipschitz constant from Lemma~\ref{lem:SmoothingWithPartition} yields equation \eqref{eq:reach-tangent-cone-compatibility}:
		\begin{align*} 
			d(p_2', \Tan_{p_1'}G_F)   \leq \tfrac{1}{2} L_{DF}\abs{p_2'-p_1'}^2\leq \tfrac{1}{2} (3 L_{\psi,D \psi} \varepsilon   +L_{Df})\abs{p_2'-p_1'}^2.
		\end{align*}
		To see that $p_2'\in B \left(p_1'+T , \frac{1}{2} L_{DF} |y_1-y_2|^2 \right)$, notice that this is equivalent to $p_2'-p_1' = T+\Delta V$ being contained in the ball $B \left(T , \frac{1}{2} L_{DF} |y_1-y_2|^2 \right)$, which follows from inequality \eqref{eq:delta_V_estimate}.
	\end{proof}
	
	We need the following lemma to prove Lemma~\ref{boundangleTangentSpace}:
	\begin{lemma}\label{lemma:angle_between_tangent_spaces}
		Consider two maps $g, \tilde{g} : \mathbb{R}^n \to \mathbb{R}^{d-n}$, whose derivatives are close in a neighbourhood of a point $y\in\R^n$ --- more concretely, there exists a $\delta <1$ such that
		\[
		\| (Dg- D\tilde{g})(y)\|_2 \leq  \delta.
		\]
		Let $G_g$ and $G_{\tilde{g}}$ denote the graph of $g$ and $\tilde{g}$, respectively, and $T_{(y,g (y))}G_g $ and $ T_{(y,\tilde{g}(y))} G_{\tilde{g}}$ their respective tangent spaces at the graph of the point $y$. Then the angle between $T_{(y,g (y))}G_g $ and $ T_{(y,\tilde{g}(y))} G_{\tilde{g}}$ satisfies
		\begin{align}\label{eq:angle_between_tangent_spaces}
			\angle (T_{(y,g (y))}G_g , T_{(y,\tilde{g}(y))} G_{\tilde{g}} ) \leq \arcsin \delta.
		\end{align}
	\end{lemma}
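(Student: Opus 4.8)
The plan is to reduce the statement to elementary linear algebra for the two tangent subspaces. Writing $A = Dg(y)$ and $B = D\tilde{g}(y)$, viewed as linear maps $\R^n \to \R^{d-n}$, the tangent space $T_{(y,g(y))}G_g$ is exactly the graph $\{(v, Av) \mid v \in \R^n\} \subset \R^d$, and likewise $T_{(y,\tilde{g}(y))} G_{\tilde{g}} = \{(v, Bv) \mid v \in \R^n\}$; both are $n$-dimensional, and the hypothesis reads $\|A - B\|_2 \leq \delta$. I recall that by definition $\angle(T_{(y,g(y))}G_g, T_{(y,\tilde{g}(y))}G_{\tilde{g}})$ is the largest principal angle, i.e.\ $\max_{a} \min_{b} \angle(a,b)$ over nonzero $a$ in the first space and nonzero $b$ in the second, and that by the symmetric form of the definition it also equals the analogous quantity with the roles of $g$ and $\tilde{g}$ swapped.

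First I would fix an arbitrary nonzero $a = (v, Av)$ in $T_{(y,g(y))}G_g$ and compare it to its natural counterpart $b = (v, Bv)$ in $T_{(y,\tilde{g}(y))}G_{\tilde{g}}$. Since the first $\R^n$-components agree, $a - b = (0, (A-B)v)$, so $|a - b| = |(A-B)v| \leq \|A-B\|_2\,|v| \leq \delta\,|v| \leq \delta\,|a|$, where the last step uses $|a|^2 = |v|^2 + |Av|^2 \geq |v|^2$. The distance from $a$ to the line $\R b$ is at most $|a-b|$ and equals $|a|\sin\angle(a,b)$, hence $\sin\angle(a,b) \leq \delta$.

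Finally, since $\sin$ is unchanged under $b \mapsto -b$ and at least one of $b, -b$ makes a non-obtuse angle with $a$, I may assume $\angle(a,b) \leq \pi/2$; then $\sin\angle(a,b) \leq \delta < 1$ gives $\angle(a,b) \leq \arcsin\delta$. Therefore $\min_{b'} \angle(a, b') \leq \arcsin\delta$, and taking the maximum over all nonzero $a \in T_{(y,g(y))}G_g$ yields $\angle(T_{(y,g(y))}G_g, T_{(y,\tilde{g}(y))}G_{\tilde{g}}) \leq \arcsin\delta$; the symmetric form of the definition is handled identically because $\|A - B\|_2 = \|B - A\|_2$. I do not expect a genuine obstacle here: the only points requiring care are translating the vector bound $|a - b| \leq \delta\,|a|$ into an angle bound (which needs the angle to be non-obtuse, handled by the sign flip) and keeping track of which min/max in the definition of the subspace angle is being estimated.
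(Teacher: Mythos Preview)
Your proof is correct and follows essentially the same route as the paper: identify each tangent space as the graph of the derivative, compare $(v,Av)$ with $(v,Bv)$, use $|(A-B)v|\le\delta|v|\le\delta\,|(v,Av)|$, and convert this distance bound into $\sin\angle(a,b)\le\delta$. The only cosmetic difference is that the paper normalizes so that $|(v,Av)|=1$ and then observes (via $|a-b|<|a|$) that the angle is automatically acute, whereas you keep the scaling general and handle the acuteness by the sign flip $b\mapsto -b$; both are valid ways to justify passing from $\sin\theta\le\delta$ to $\theta\le\arcsin\delta$.
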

	The settings of Lemma~\ref{lemma:angle_between_tangent_spaces} are illustrated in Figure~\ref{fig:lemma_26}.
	\begin{figure}[h!]
		\centering
		\includegraphics[width=0.5\textwidth]{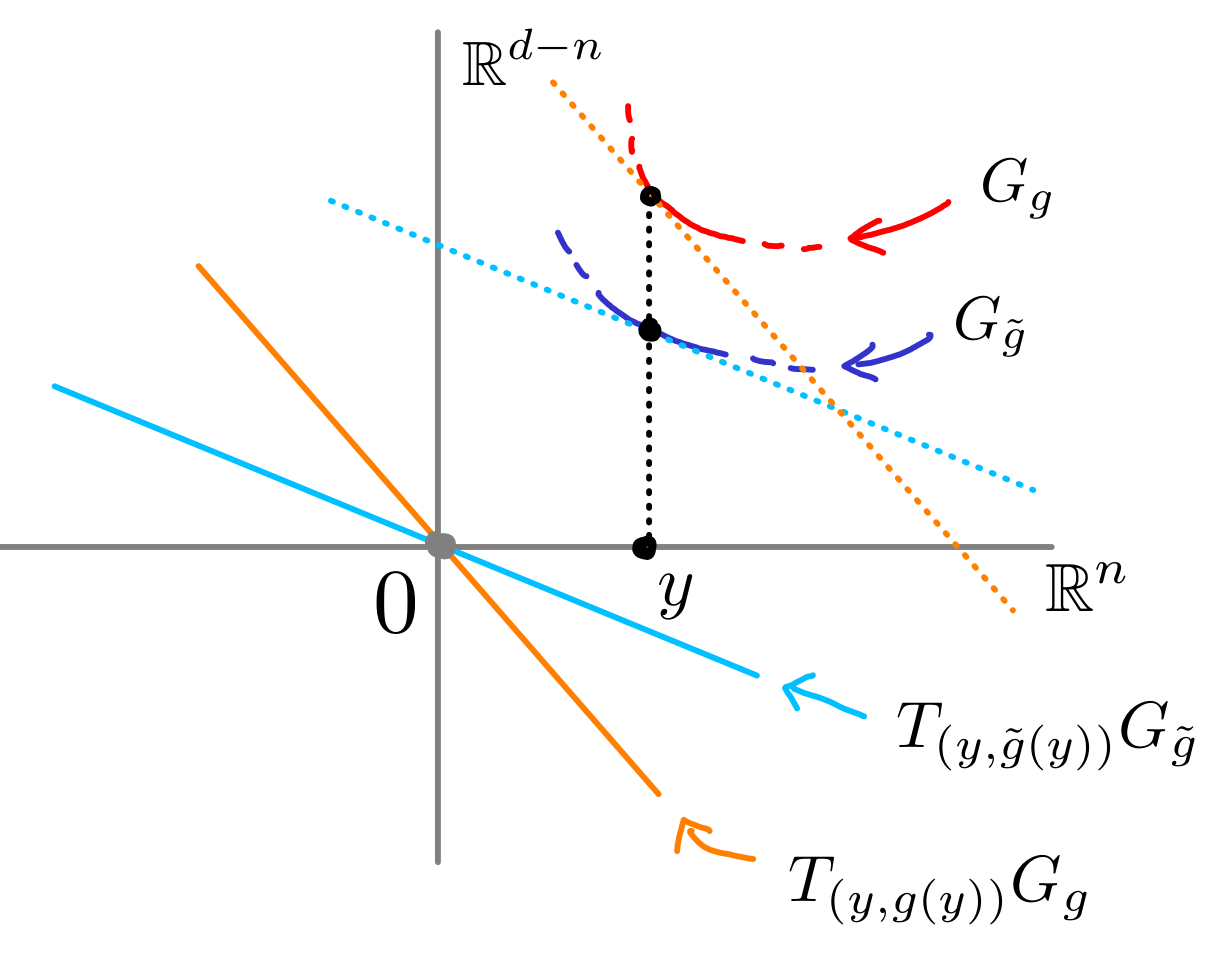}
		\caption{Illustration of the settings of Lemma~\ref{lemma:angle_between_tangent_spaces}.}
		\label{fig:lemma_26}
	\end{figure}

	\begin{proof}	
		Because the graph $G_g$ of $g$ is parametrized by $(y, g(y))$, its derivative is 
		\begin{align} 
			J_{g(y)}= 
			\begin{pmatrix}
				\textrm{Id} 
				\\ 
				Dg (y) 
			\end{pmatrix} ,
			\nonumber
		\end{align} 
		where $\textrm{Id}$ denotes the entries of the identity matrix and $Dg (y) $ the entries of the Jacobian at $y$. Similarly, the derivative of the graph $G_{\tilde{g}}$ of $\tilde{g}$ is 
		\begin{align} 
			J_{\tilde{g}(y)}= 
			\begin{pmatrix}
				\textrm{Id} 
				\\ 
				D\tilde{g} (y) 
			\end{pmatrix} ,
			\nonumber
		\end{align}
		
		\begin{figure}[h!]
			\centering
			\includegraphics[width=0.5\textwidth]{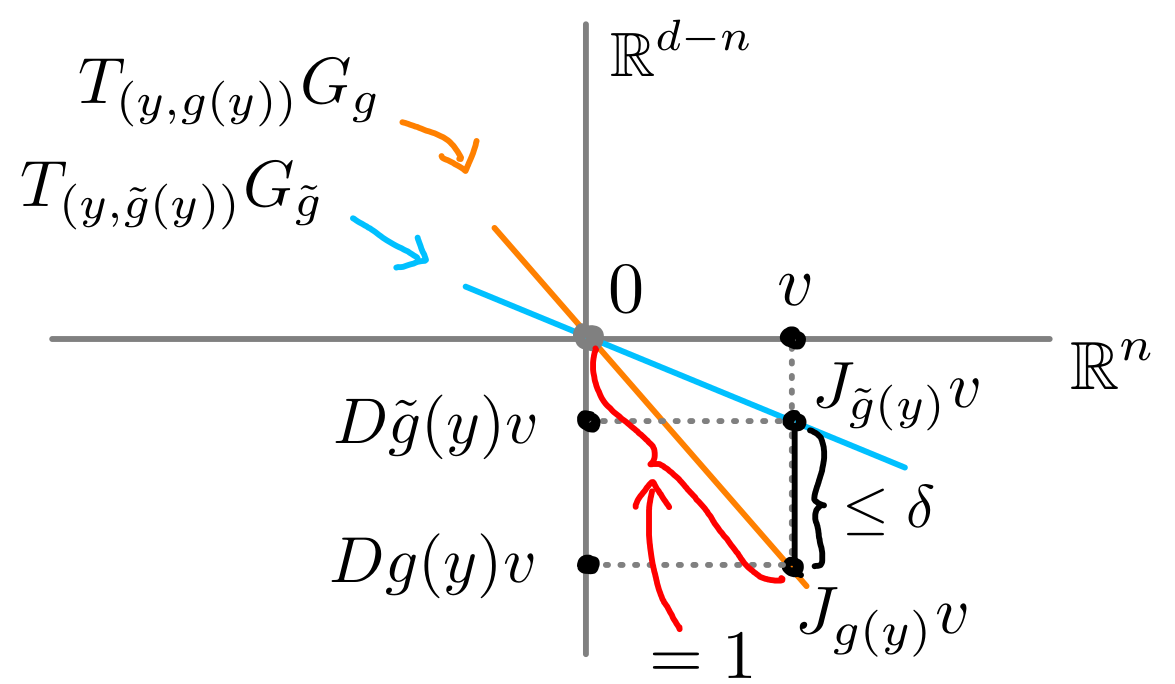}
			\caption{Notation in the proof of Lemma~\ref{lemma:angle_between_tangent_spaces}.}
			\label{fig:lemma_26_proof}
		\end{figure}
		
		For a vector $v\in\R^n\backslash\{0\}$, let $J_{g(y)}  v\in T_{(y, g(y))} G_g$ and $J_{\tilde{g}(y)}  v\in T_{(y, \tilde{g}(y))} G_{\tilde{g}}$  be its images under the linear maps $J_{g(y)}$ and $J_{\tilde{g}(y)}$. We illustrate the notation in Figure~\ref{fig:lemma_26_proof}.
		
		We rescale the vector $v$ in such a way that its image $J_{g(y)} v = \left(v,Dg(y) v\right)$ is a unit vector. This implies in particular that $|v|\leq 1$: $1 = |J_{g(y)} v|^2 = |v|^2 + |Dg(y)(v)|^2\geq |v|^2$.
		Then, since $ \| (Dg- D\tilde{g})(y)\|_2 \leq  \delta$,
		\begin{align*}
			| J_{g(y)} v -J_{\tilde{g} (y) }  v | = | ((Dg- D\tilde{g})(y)) (v) |\leq \delta |v| \leq \delta. 
		\end{align*}
		Finally, since $\delta<1$ and $|J_{g(y)} v|=1$,
		$\angle (J_{g(y)} v, J_{\tilde{g} (y) }  v) \leq\arcsin \delta.$
		(We illustrate this inequality in Figure~\ref{fig:lemma_26_proof2}.)
		
		\begin{figure}[h!]
			\centering
			\includegraphics[width=0.4\textwidth]{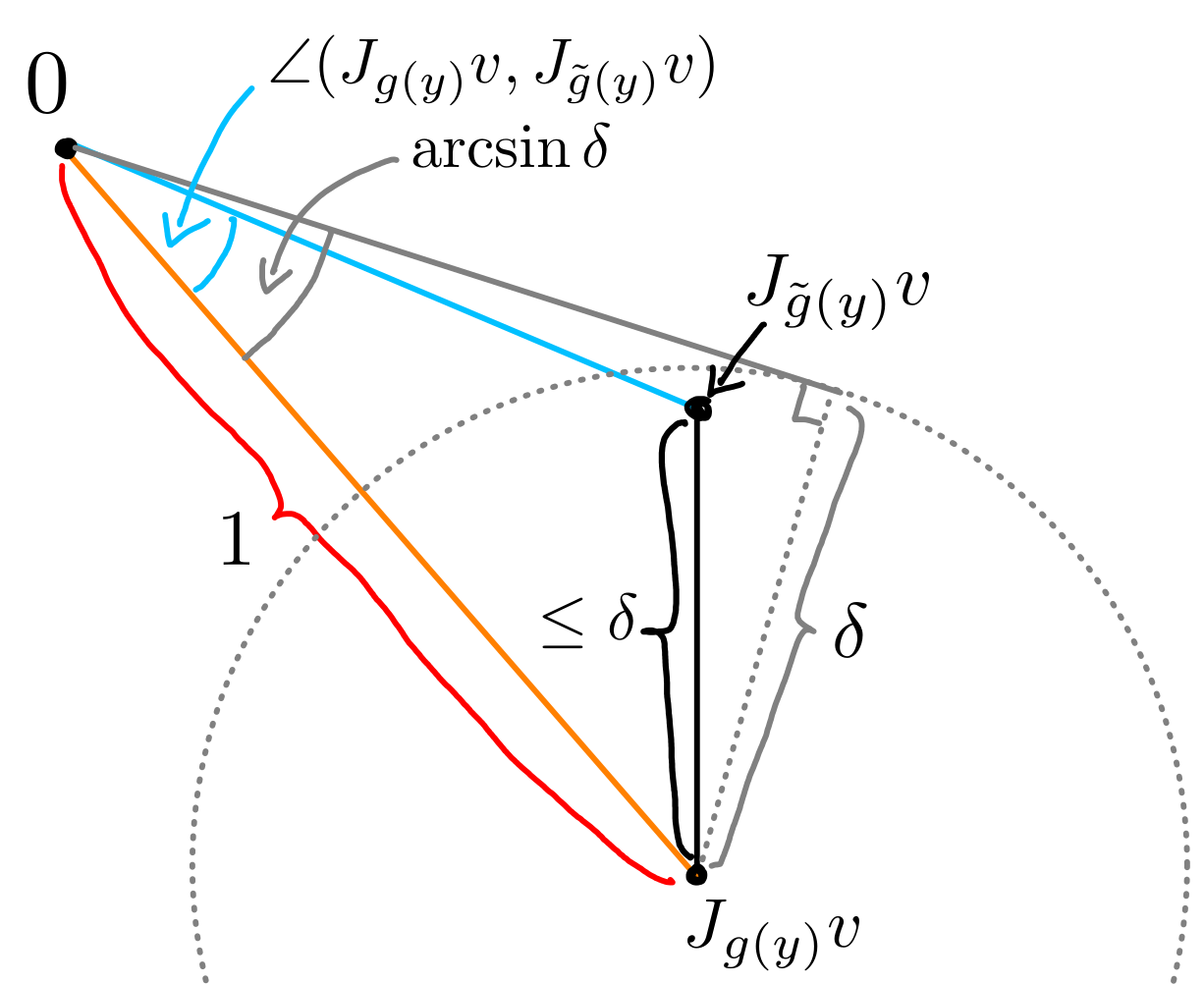}
			\caption{Since $\delta<1$ and $|J_{g(y)} v|=1$, the angle $\angle (J_{g(y)} v, J_{\tilde{g} (y) }  v)$ cannot be larger than $\arcsin \delta.$}
			\label{fig:lemma_26_proof2}
		\end{figure}
		Thus, $\min_{w \in T_{(y, \tilde{g}(y))} G_{\tilde{g}} \setminus \{0\}}
		\angle (J_{g(y)} v,w) \leq\arcsin \delta$. Since the choice of the vector $v$ was arbitrary,
		\[
		\angle (T_{(y, g(y))} G_{g},T_{(y, \tilde{g}(y))} G_{\tilde{g}}) = \max_{v\in \R^n\setminus \{0\}} \min_{w \in T_{(y, \tilde{g}(y))} G_{\tilde{g}} \setminus \{0\}}
		\angle (J_{g(y)} v,w) \leq\arcsin \delta.
		\]
	\end{proof}
	
	\begin{proof}[Proof of Lemma~\ref{boundangleTangentSpace}] 
		Recall that $F = f + \psi \cdot(\varphi_\varepsilon \ast f - f)$. 
		Due to Theorem \ref{Th23abHirsch},
		\begin{align*}
			DF-Df = D(F-f) = D(\psi \cdot(\varphi_\varepsilon \ast f - f)) = D\psi \cdot (\varphi_\varepsilon \ast f-f ) + \psi  \cdot(\varphi_\varepsilon \ast D f- D f ),
		\end{align*}
		and thus at any point $y\in C,$ we can use bounds \eqref{eq:boundDpsi}, \eqref{eq:AssumptionSmoothing2}, and \eqref{eq:AssumptionSmoothing3}, and the fact that $|\psi| \leq 1$, to bound the norm of $(DF-Df)(y)$:
		\begin{align*} 
			\| (DF- Df)(y)\|_2 &\leq \|  D\psi(y)  \|_2 \cdot |  (\varphi_\varepsilon \ast f-f )(y)|  + |\psi(y)| \cdot \|  (\varphi_\varepsilon \ast D f- D f )(y)  \|_2
			\\
			&\leq  L_{\psi,D \psi} \cdot \varepsilon  + 1 \cdot  \varepsilon.
		\end{align*} 
		Lemma~\ref{lemma:angle_between_tangent_spaces} then directly yields that
		\[
		\angle (\Tan_{(y,f (y))}G_f , \Tan_{(y,F(y))} G_F) = \angle (T_{(y,f (y))}G_f , T_{(y,F(y))} G_F) \leq\arcsin \left( L_{\psi,D \psi}  \varepsilon  +  \varepsilon\right) .
		\] 
	\end{proof} 
	
	\begin{proof}[Proof of Corollary \ref{CorHD}] 
		\begin{figure}[h!]
			\centering
			\includegraphics[width=.4\textwidth]{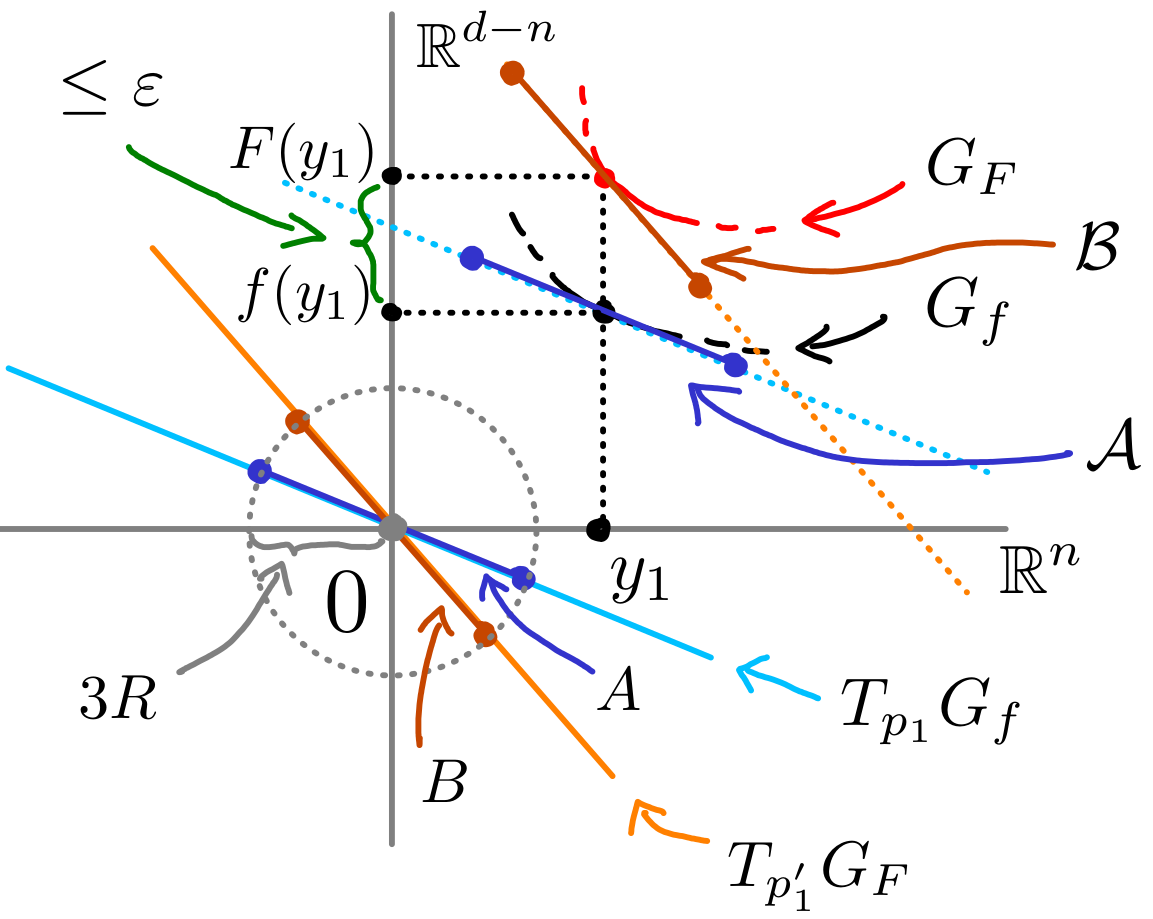}
			\caption{Illustration of the settings in the proof of Corollary \ref{CorHD}.}
			\label{fig:cor_27_proof}
		\end{figure}
		
		Let $A = \{T \in T_{p_1}G_f \mid |T| \leq 3R \}$ and $B = \{T \in T_{p_1'}G_F \mid |T| \leq 3R \}$, as illustrated in Figure~\ref{fig:cor_27_proof}.
		We first observe that
		\begin{align}\label{eq:cor_27_first_step}
			d_H(\mathcal{A}, \mathcal{B} )\leq |p_1'-p_1| + 	d_H(A,B).
		\end{align}	
		By assumption \eqref{eq:AssumptionSmoothing3}, and because $\abs{\psi}\leq 1$,
		\begin{align} 
			|p_1'-p_1| = | (y, F(y))- (y, f(y)) | = |F(y)-f(y) | = |\psi(y) \cdot (\varphi_{\varepsilon} \ast f-f)(y)| \leq 1 \cdot \varepsilon.
			\label{eq:trivialBoundPQ} 
		\end{align}
		To estimate the latter summand of equation~\eqref{eq:cor_27_first_step}, choose a vector $T \in T_{p_1}G_f \backslash \{0\}$, and let $T' \in  T_{p_1'}G_F \backslash \{0\}$ be a vector minimizing the angle between $T$ and $T_{p_1'}G_F$, that is, $\angle T, T' = \min_{U \in T_{p_1'}G_F \setminus \{0\}} \angle T,U$. Without loss of generality, we may assume that $T\in A$ and $T'\in B$. Then
		\begin{align*}
			d(T,B)\leq d(T,T') \leq 3R \:\angle (T, T') = 3R \min_{U \in T_{p_1'}G_F \setminus \{0\}}\angle (T,U),
		\end{align*}
		and thus, thanks to Lemma \ref{boundangleTangentSpace}, the one-sided Hausdorff distance $d(A,B)$ between $A$ and $B$ is bounded by
		\begin{align*}
			d(A,B) = \max_{T\in A\setminus \{0\}} d(T,B) \leq 3R \underbrace{\max_{T\in A\setminus \{0\}} \min_{U \in T_{p_1'}G_F \setminus \{0\}}\angle (T,U)}_{= \angle(T_{p_1}G_f,T_{p_1'}G_F)}\leq 3R\arcsin \left( L_{\psi,D \psi}  \varepsilon  +  \varepsilon\right).
		\end{align*}
		Using the same argument, we obtain the same bound for the one-sided Hausdorff distance between $B$ and $A$. Thus, $d_H(A,B)\leq 3R\arcsin \left( L_{\psi,D \psi}  \varepsilon  +  \varepsilon\right).$ 
		
		To obtain the final bound, we use the fact that $ L_{\psi,D \psi}  \varepsilon  +  \varepsilon\leq 1$, and thus
		\[3R\arcsin \left( L_{\psi,D \psi}  \varepsilon  +  \varepsilon\right)\leq 3R\cdot  2\left( L_{\psi,D \psi}  \varepsilon  +  \varepsilon\right).\]
	\end{proof} 

		We need the following technical lemma to prove Proposition~\ref{prop:pointsFarAway}:
	\begin{lemma}\label{lemma:technical_rewrite}
		For two constants $R,\xi\in\R$, $R\neq 0$, the expression $\frac{1}{2R} + \xi$ can be rewritten as
		\[\frac{1}{2R}+\xi = \frac{1}{2R-z},\]
		with $z = \frac{4R^2\xi}{1+2R\xi}$.
	\end{lemma}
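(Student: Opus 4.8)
The plan is simply to verify the identity by direct computation. First I would rewrite the left-hand side over the common denominator $2R$, namely $\frac{1}{2R} + \xi = \frac{1 + 2R\xi}{2R}$. Inverting (this is legitimate as long as $1 + 2R\xi \neq 0$, an assumption implicit in the statement, since otherwise the proposed $z$ is undefined), the desired equality $\frac{1}{2R}+\xi = \frac{1}{2R-z}$ becomes equivalent to $2R - z = \frac{2R}{1+2R\xi}$. Solving this linear equation for $z$ yields
\[
z \;=\; 2R - \frac{2R}{1+2R\xi} \;=\; \frac{2R(1+2R\xi) - 2R}{1+2R\xi} \;=\; \frac{4R^2\xi}{1+2R\xi},
\]
which is exactly the value in the statement.

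To finish, I would observe that with this choice of $z$ one has $2R - z = \frac{2R}{1+2R\xi}$, which is nonzero because $R \neq 0$, so $\frac{1}{2R-z}$ is well-defined; running the chain of equalities backwards then gives $\frac{1}{2R-z} = \frac{1+2R\xi}{2R} = \frac{1}{2R}+\xi$, as claimed. There is essentially no obstacle here: the statement is a single algebraic rearrangement, and the only subtlety worth flagging is the tacit non-degeneracy hypothesis $1 + 2R\xi \neq 0$, under which every manipulation above is reversible. (In the intended application $\xi$ is a small nonnegative quantity, so $1+2R\xi$ is close to $1$ and this causes no trouble, and $2R-z$ ends up playing the role of twice an ``effective reach'' $2R'$ in a Federer-type distance estimate.)
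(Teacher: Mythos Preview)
Your proof is correct and follows essentially the same approach as the paper: both verify the identity by direct algebraic manipulation, rewriting $\frac{1}{2R}+\xi$ as $\frac{1+2R\xi}{2R}$ and checking that $(2R-z)\bigl(\frac{1}{2R}+\xi\bigr)=1$ for the stated value of $z$. Your version is slightly more explicit about the tacit non-degeneracy hypothesis $1+2R\xi\neq 0$, which is indeed harmless in the intended application.
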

	\begin{proof}
		\begin{align*}
			&0 = 4R^2\xi - z - 2Rz\xi \nonumber\\
			\iff 		&0 = \frac{1}{2R} \left( 4R^2\xi - z - 2Rz\xi \right)  \nonumber\\
			\iff & 1 =   2R\xi - \frac{z}{2R} - z\xi   + 1 = \left(\frac{1}{2R}+\xi\right)\left(2R-z\right)
			\nonumber 
			\\
			\iff & \frac{1}{2R}+\xi = \frac{1}{2R-z}.
		\end{align*}
	\end{proof}

	\begin{proof} [Proof of Proposition~{\ref{prop:pointsFarAway}}]
		
		
We can assume without loss of generality that $|q'-p'| < 2R$.
		Indeed, by definition of the distance, $ d(q' , \Tan_{p'}\M') \leq  |q'-p'|$, and if $|q'-p'| \geq 2R$ then $|q'-p'| \leq \frac{|q'-p'|^2}{2R}$, leading to the bound~\eqref{eq:distance_ala_Federer_with_epsilons}.
		

		Let $\mathcal{A}$ and $\mathcal{B}$ be the $3R$-neighbourhoods of the point $p_1$ and $p_1'$ in the affine tangent spaces $\Tan_{p_1}G_f$ and $\Tan_{p_1'}G_F$ from Corollary~\ref{CorHD}. Since $|q'-p'| \leq 3R$,
		\[
		d(q' , p+\Tan_{p_1} G_f) =d(q' , p+\mathcal{A}) \quad \text{and} \quad d(q' , p+\Tan_{p_1'} G_F) = d(q' , p+\mathcal{B}).
		\]
		Using the triangle inequality, we obtain:
		\begin{align*}
			d(q' , \Tan_{p'}\M') = d(q' , p+\mathcal{B}) &\leq d(q' , p+\mathcal{A})+ d_H(\mathcal{A},\mathcal{B}).
		\end{align*}
		We bound the above summands as follows:
		\begin{itemize}
			\item $d(q' , p+\mathcal{A})=d(q' , \Tan_{p+p_1} \M)  \leq  \frac{|q'-(p+p_1)|^2}{2R}$ due to Theorem~\ref{th418Federer} and the fact that $\M$ has reach~$R$;
			\item $d_H(\mathcal{A},\mathcal{B}) \leq \varepsilon\left( 6R L_{\psi,D \psi}+6R+1 \right)$ by Corollary \ref{CorHD} and since $\varepsilon\leq\tfrac{1}{L_{\psi,D \psi}+1}$.
		\end{itemize}
		Thus,
		\[d(q' , \Tan_{p'}\M')  \leq  \frac{|q'-(p+p_1)|^2}{2R}   + \varepsilon\left( 6R L_{\psi,D \psi}+6R+1 \right).\]
		Furthermore, since due to assumption \eqref{eq:AssumptionSmoothing3}, and $\abs{\psi}\leq 1$,
		\begin{align*} 
			|p_1'-p_1| = | (y, F(y))- (y, f(y)) | = |F(y)-f(y) | = |\psi(y) \cdot (\varphi_{\varepsilon} \ast f-f)(y)| \leq 1 \cdot \varepsilon,
		\end{align*}
		the factor $|q'-(p+p_1)|^2$ can be bound as
		\begin{align*}
			|q'-(p+p_1)|^2&\leq \left(|q'-(\underbrace{p+p_1'}_{=p'})| + |p_1-p_1'|\right)^2\leq \left(|q'-p'| + \varepsilon \right)^2=|q'-p'|^2 + 2\underbrace{|q'-p'|}_{\leq 3R}\varepsilon+\varepsilon^2\\
			&\leq |q'-p'|^2 + 6R\varepsilon +\varepsilon^2.
		\end{align*}
		Thus,
		\begin{align} 
			d(q' , \Tan_{p'}\M')  &\leq \frac{|q'-p'|^2}{2R}  + \frac{\varepsilon^2}{2R}   + \varepsilon\left(6R L_{\psi,D \psi}+6R+4\right),
			\nonumber
		\end{align} 
		which is precisely inequality~\eqref{eq:distance_ala_Federer_with_epsilons}.
		
		If moreover $\varepsilon \leq R$, the above simplifies to 
		\begin{align} 
			d(q' , \Tan_{p'}\M')  &\leq \frac{|q'-p'|^2}{2R}  + \varepsilon\left(6R L_{\psi,D \psi}+6R+4.5\right)\leq \frac{|q'-p'|^2}{2R}  + 6\varepsilon\left(R L_{\psi,D \psi}+R+1\right).
			\nonumber
		\end{align} 
		If in addition $|q'-p'|\geq \beta$, then 
		\begin{align*} 
			d(q' , \Tan_{p'}\M')   
			&\leq 
			\frac{|q'-p'|^2}{2R} + \frac{|q'-p'|^2}{\beta^2} \cdot 6\varepsilon\left(R L_{\psi,D \psi}+R+1\right).
		\end{align*} 
		We have now established a bound of the form 
		\begin{align*} 
			d(q' , \Tan_{p'}\M')  
			&\leq |q'-p'|^2 \left( 
			\frac{1}{2R} + \xi \right), 
		\end{align*}
		with 
		\[ \xi 
		=\frac{6\varepsilon}{\beta^2} \left(R L_{\psi,D \psi}+R+1\right) .\] 
		Using Lemma~\ref{lemma:technical_rewrite}, we finally transform this expression into
		\begin{align} 
			d(q' , \Tan_{p'}\M')  
			&\leq 
			\frac{|q'-p'| ^2}{2R'} , 
			\nonumber
		\end{align} 
		with 
		\begin{align}
			2R' = 2 R - \frac{4R^2\xi}{1+2R\xi} = 2R\frac{1}{1+2R\xi}= \frac{2R}{1+\frac{12\varepsilon R}{\beta^2}\left(R L_{\psi,D \psi}+R+1\right)}.
			\nonumber
		\end{align} 
	\end{proof} 


	\begin{proof}[Proof of Lemma~\ref{lemma:Lipschitz_conts_psi}]
		Choose two points $x,y\in B(0,\sqrt{\delta R}/2)\cap T_\p\M$. Then
		\begin{align*} 
			|\psi(x)- \psi(y) | &=  \left | \psi_0 \left( \frac{ 8 x}{\sqrt{\delta R}} \right)  -  \psi_0 \left( \frac{ 8 y}{\sqrt{\delta R}} \right) \right|\\ 
			& \leq L_{\psi_0} \left | \frac{ 8 x}{\sqrt{\delta R}} - \frac{ 8 y}{\sqrt{\delta R}} \right | \\
			& = L_{\psi_0} \frac{ 8 }{\sqrt{\delta R}} \left |  x-y \right |.
		\end{align*} 
		For a vector $u \in \mathbb{R}^n$, the difference between directional derivatives $\partial_u \psi(x)- \partial_u \psi(y)$ can be bound as 
		\begin{align*} 
			|\partial_u \psi(x)- \partial_u \psi(y) |&=  \left |  \partial_u  \psi_0 \left( \frac{ 8 x}{\sqrt{\delta R}} \right)  - \partial_u   \psi_0 \left( \frac{ 8 y}{\sqrt{\delta R}} \right) \right| 
			\\
			&= \frac{ 8 }{\sqrt{\delta R}} \left|  \partial_u  \psi_0 ( x')|_{x'= \frac{ 8 x}{\sqrt{\delta R}}}     - \partial_u   \psi_0  (y')|_{y'=  \frac{ 8 y}{\sqrt{\delta R} }}  \right| 
			\tag{by the chain rule}
			\\
			&\leq \frac{ 8 }{\sqrt{\delta R}} L_{D\psi_0} \left | \frac{ 8 x}{\sqrt{\delta R}} - \frac{ 8 y}{\sqrt{\delta R}} \right |  
			\\
			&= \frac{64}{\delta R} L_{D\psi_0} |x-y|
		\end{align*} 
	\end{proof}

\begin{proof}[Proof of Lemma~\ref{lemma:final_approximation_distance_between_q'_and_tangent_space_of_p'}]
	We distinguish between two cases: $p'\notin B(\p, \sqrt{\delta R}/4)$ and $p'\in B(\p, \sqrt{\delta R}/4)$.

	\subparagraph{\emph{The case $p'\notin B(\p, \sqrt{\delta R}/4)$:}}
	If $p'\notin B(\p, \sqrt{\delta R}/4)$, then $p'=p$ and $ \Tan_{p'}\M' = \Tan_{p}\M$.
	
	If also $q'\notin B(\p, \sqrt{\delta R}/4)$, then $q'=q$, and Theorem~\ref{th418Federer} yields
	\[d(q', \Tan_{p'}\M') = d(q, \Tan_{p}\M) \leq \frac{\abs{p'-q'}^2}{2R}.\]
	
	If, on the other hand, $q'\in B(\p, \sqrt{\delta R}/4)$, then we distinghish two cases. Either $p'\in B(\p, \sqrt{\delta R}/2)$, in which case we apply Lemma~\ref{lem:ErrorLipschitzDer}, combined with the bound~\eqref{eq:Lipschitz_const_final_psi} on the Lipschitz constant of $\psi$ and its derivative, and the bound $\tfrac{1}{R-\delta}$ on the Lipschitz constant of $f$, and obtain
	\begin{align}\label{eq:prebound1Epsilon2}
		d(q', \Tan_{p'}\M')\leq\tfrac{1}{2}\left(3\cdot\frac{64}{\delta R} L_{D\psi_0}\rho + \frac{1}{R-\delta}\right)\abs{p'-q'}^2.
	\end{align}
	Or $p'\in B(\p, \sqrt{\delta R}/2)$, in which case $\abs{p'-q'}\geq \sqrt{\delta R}/4$. We approximate
	\begin{align*}
		d(q', \Tan_{p'}\M')&\leq \abs{q'-q} + d(q, \Tan_{p}\M) \leq \abs{q'-q}\cdot\frac{\abs{p'-q'}^2}{\abs{p'-q'}^2} +\frac{\abs{p'-q'}^2}{2R}\\
		&\leq \abs{p'-q'}^2 \left(\frac{1}{2R}+\frac{\abs{q'-q}}{\tfrac{\delta R}{16}}\right) = \abs{p'-q'}^2 \left(\frac{1}{2R}+\frac{16\abs{q'-q}}{\delta R}\right).
	\end{align*}
	At the same time, due to the definition \eqref{eq:smoothed_mfld_function} of $F$, and inequality \eqref{eq:AssumptionSmoothing1},
	\begin{align*}
		\abs{q'-q} =\abs{F(y)-f(y)}\leq \psi(y) \cdot \abs{(\varphi_\rho  \ast f - f)(y) }\leq\rho.
	\end{align*}
	Thus, $d(q', \Tan_{p'}\M')\leq  \abs{p'-q'}^2 \left(\frac{1}{2R}+\frac{16\rho}{\delta R}\right)$.
	Finally, we apply Lemma~\ref{lemma:technical_rewrite} with $\xi = \frac{16\rho}{\delta R}$:
	\begin{align*}
		\frac{1}{2R}+\frac{16\rho}{\delta R} = \frac{1}{2R-\frac{4R^2\xi}{1+2R\xi}} = \frac{1}{2R-2R\frac{32\rho}{\delta+32\rho}} = \frac{1}{2R\cdot\frac{\delta}{\delta+32\rho}}.
	\end{align*}
	Hence,
	\begin{align}\label{eq:prebound3Epsilon2}
		d(q', \Tan_{p'}\M')\leq \tfrac{1}{2}\frac{\delta+32\rho}{\delta R} \abs{p'-q'}^2.
	\end{align}
	\subparagraph{\emph{The case $p'\in B(\p, \sqrt{\delta R}/4)$:}}
	At first assume that $q'\in B(\p, \sqrt{\delta R}/2)$. Then due to Lemma~\ref{lem:ErrorLipschitzDer}, combined with the bound~\eqref{eq:Lipschitz_const_final_psi} on the Lipschitz constant of $\psi$ and its derivative, and the bound $\tfrac{1}{R-\delta}$ on the Lipschitz constant of $f$,
	\begin{align*}
		d(q', \Tan_{p'}\M')\leq\tfrac{1}{2}\left(3\cdot\frac{64}{\delta R} L_{D\psi_0}\rho + \frac{1}{R-\delta}\right)\abs{p'-q'}^2.\tag{\ref{eq:prebound1Epsilon2}}
	\end{align*}
	
	Finally, assume that $q'\notin B(\p, \sqrt{\delta R}/2)$. Then $q'=q$ and, necessarily, $\abs{q'-p'}\geq \sqrt{\delta R}/4$.
	Then Proposition~\ref{prop:pointsFarAway}, with $\beta = \sqrt{\delta R}/4$, yields
	\begin{align}\label{eq:prebound2Epsilon2}
		d(q' , \Tan_{p'} \M')   \leq  \frac{|p'-q'|^2}{2R'}
	\end{align}
	with 
	\begin{align}  R'= 
		\frac{R}{1+\frac{3\cdot64\rho}{\delta}\left(\tfrac{64}{\delta}L_{D\psi_0}+R+1\right)}.
		\nonumber
	\end{align}
	
	\subparagraph{Merging the bounds together}
	By combining the bounds \eqref{eq:prebound1Epsilon2}, \eqref{eq:prebound3Epsilon2}, and \eqref{eq:prebound2Epsilon2}, we conclude that 
	\begin{align*}
		d(q' , \Tan_{p'} \M')   \leq  \frac{|p'-q'|^2}{2R}\cdot \Omega,
	\end{align*}
	with 
	\begin{align*}
		\Omega = \max\left\{ 1+32\tfrac{\rho}{\delta}, 3\cdot 64 L_{D\psi_0}\tfrac{\rho}{\delta} + \tfrac{1}{1-\tfrac{\delta}{R}}, 1+ 3\cdot 64\tfrac{\rho}{\delta}\left(\tfrac{64}{\delta}L_{D\psi_0} + R+ 1\right) \right\}.  
	\end{align*}
	Moreover, the first term in the expression $\Omega$ is smaller than the last: 
	\begin{align*}
		1+32\tfrac{\rho}{\delta}\leq 1+ 32\tfrac{\rho}{\delta} \cdot 6 + 64\tfrac{\rho}{\delta}\left(\tfrac{64}{\delta}L_{D\psi_0} + R\right),
	\end{align*} and thus
	\begin{align*}
		\Omega = \max\left\{3\cdot 64 L_{D\psi_0}\tfrac{\rho}{\delta} + \tfrac{1}{1-\tfrac{\delta}{R}}, 1+ 3\cdot 64\tfrac{\rho}{\delta}\left(\tfrac{64}{\delta}L_{D\psi_0} + R+ 1\right) \right\}.  
	\end{align*}
\end{proof}

\end{document}